\theoremstyle{definition}
\newtheorem{theorem}{Theorem}[section]
\newtheorem{lemma}[theorem]{Lemma}
\newtheorem{corollary}[theorem]{Corollary}
\newtheorem{definition}{Definition}
\newtheorem{observation}[theorem]{Observation}
\newtheorem{construction}[theorem]{Construction}
\newtheorem{notation}{Notation}
\newcommand{\Real}{\mathbb R}
\newcommand{\To}{\longrightarrow}
\newcommand{\Z}{\mathbb{Z}}
\newcommand{\N}{\mathbb{N}}
\newcommand{\func}[3]{#1 : #2 \rightarrow #3 }
\newcommand{\pfunc}[3]{#1 : #2 \dashrightarrow #3 }
\newcommand{\pair}[2]{(#1, #2)}
\newcommand{\color}{{\rm col}}
\newcommand{\strength}{{\rm str}}
\newcommand{\bval}[1]{[\![ #1 ]\!]}
\newcommand{\bbval}[1]{\left[\!\left[ #1 \right] \! \right]}
\newcommand{\dom}{{\rm dom} \;}
\newcommand{\asmb}{\mathcal{A}}
\newcommand{\asmbt}[2]{\asmb^{#1}_{#2}}
\newcommand{\asmbtt}{\mathcal{A}^\tau_T}
\newcommand{\ste}[2]{#1 \mapsto #2}
\newcommand{\frontier}[3]{{\partial}^{#1}_{#2}{#3}}
\newcommand{\frontiert}[1]{\partial^{\tau}{#1}}
\newcommand{\frontiertt}[1]{\frontier{\tau}{t}{#1}}
\newcommand{\frontiertau}[1]{{\partial}^{\tau}{#1}}
\newcommand{\arrowstett}[2]{#1 \xrightarrow[\tau,T]{1} #2}
\newcommand{\arrowste}[2]{#1 \stackrel{1}{\To} #2}
\newcommand{\arrowtett}[2]{#1 \xrightarrow[\tau,T]{} #2}
\newcommand{\res}[1]{\textrm{res}(#1)}
\newcommand{\termasm}[1]{\mathcal{A}_{\Box}\left[\mathcal{#1}\right]}
\newcommand{\prodasm}[1]{\mathcal{A}\left[\mathcal{#1}\right]}
\newcommand{\fgg}[1]{G^\#_{#1}}
\newcommand{\ftdepth}[2]{\textrm{ft-depth}_{#1}\left(#2\right)}
\begin{document}


\title{Strict Self-Assembly of Discrete Sierpinski Triangles}%
\author{James I. Lathrop\footnote{ Department of Computer
Science, Iowa State University, Ames, IA 50011, USA. jil@cs.iastate.edu. },
Jack H. Lutz\footnote{ Department of Computer Science, Iowa State University,
Ames, IA 50011, USA. lutz@cs.iastate.edu. This author's research was supported
in part by National Science Foundation Grants 0344187, 0652569, and 0728806 and
in part by Spanish Government MEC Project TIN 2005-08832-C03-02. }, and Scott
M. Summers.\footnote{Department of Computer Science, Iowa State University,
Ames, IA 50011, USA. summers@cs.iastate.edu. This author's research was
supported in part by NSF-IGERT Training Project in Computational Molecular
Biology Grant number DGE-0504304}}
\maketitle


%
%



\begin{abstract}
Winfree (1998) showed that discrete Sierpinski triangles can
self-assemble in the Tile Assembly Model.  A striking molecular
realization of this self-assembly, using DNA tiles a few nanometers
long and verifying the results by atomic-force microscopy, was
achieved by Rothemund, Papadakis, and Winfree (2004).

Precisely speaking, the above self-assemblies tile completely filled-in,
two-dimensional regions of the plane, with labeled subsets of these tiles
representing discrete Sierpinski triangles. This paper addresses the more
challenging problem of the {\it strict self-assembly} of discrete Sierpinski
triangles, i.e., the task of tiling a discrete Sierpinski triangle and nothing
else.

We first prove that the standard discrete Sierpinski triangle {\it
cannot} strictly self-assemble in the Tile Assembly Model.  We then
define the {\it fibered Sierpinski triangle}, a discrete Sierpinski
triangle with the same fractal dimension as the standard one but
with thin fibers that can carry data, and show that the fibered
Sierpinski triangle strictly  self-assembles in the Tile Assembly
Model.  In contrast with the simple XOR algorithm of the earlier,
non-strict self-assemblies, our strict self-assembly algorithm makes
extensive, recursive use of optimal counters, coupled with measured
delay and corner-turning operations.  We verify our strict
self-assembly using the local determinism method of Soloveichik and
Winfree (2007).
\end{abstract}


\section{Introduction}
Structures that self-assemble in naturally occurring biological
systems are often fractals of low dimension, by which we mean that
they are usefully modeled as fractals and that their fractal
dimensions are less than the dimension of the space or surface that
they occupy.  The advantages of such fractal geometries for
materials transport, heat exchange, information processing, and
robustness imply that structures engineered by nanoscale
self-assembly in the near future will also often be fractals of low
dimension.

The simplest mathematical model of nanoscale self-assembly is the
Tile Assembly Model (TAM), an extension of Wang tiling
\cite{Wang61,Wang63} that was introduced by Winfree \cite{Winf98}
and refined by Rothemund and Winfree \cite{RotWin00,Roth01}.  (See
also \cite{Adle99,Reif02,SolWin07}.) This elegant model, which is
described in section 2, uses tiles with various types and strengths
of ``glue'' on their edges as abstractions of molecules adsorbing to
a growing structure.  (The tiles are squares in the two-dimensional
TAM, which is most widely used, cubes in the three-dimensional TAM,
etc.) Despite the model's deliberate oversimplification of molecular
geometry and binding, Winfree \cite{Winf98} proved that the TAM is
computationally universal in two or more dimensions.  Self-assembly
in the TAM can thus be directed algorithmically.

This paper concerns the self-assembly of fractal structures in the
Tile Assembly Model.  The typical test bed for a new research topic
involving fractals is the Sierpinski triangle, and this is certainly
the case for fractal self-assembly.  Specifically,
Winfree~\cite{Winf98} showed that the {\it standard discrete
Sierpinski triangle} ${\mathbf S}$, which is illustrated in
Figure~1, self-assembles from a set of seven tile types in the Tile
Assembly Model.  Formally, ${\mathbf S}$ is a set of points in the
discrete Euclidean plane $\Z^2$.  The obvious and well-known
resemblance between ${\mathbf S}$ and the Sierpinski triangle in
$\Real^2$ that is studied in fractal geometry \cite{Falc03} is a
special case of a general correspondence between ``discrete
fractals'' and ``continuous fractals'' \cite{Will84}. Continuous
fractals are typically bounded (in fact, compact) and have intricate
structure at arbitrarily small scales, while discrete fractals like
$\mathbf{S}$ are unbounded and have intricate structure at
arbitrarily large scales.

A striking molecular realization of Winfree's self-assembly of
$\mathbf{S}$ was reported in 2004. Using DNA double-crossover
molecules (which were first synthesized in pioneering work of Seeman
and his co-workers \cite{Seem82}) to construct tiles only a few
nanometers long, Rothemund, Papadakis and Winfree \cite{RoPaWi04}
implemented the molecular self-assembly of $\mathbf{S}$ with low
enough error rates to achieve correct placement of 100 to 200 tiles,
confirmed by atomic force microscopy (AFM). This gives strong
evidence that self-assembly can be algorithmically directed at the
nanoscale.

The abstract and laboratory self-assemblies of $\mathbf{S}$
described above are impressive, but they are not (nor were they
intended or claimed to be) true fractal self-assemblies. Winfree's
abstract self-assembly of $\mathbf{S}$ actually tiles an {\it entire
quadrant} of the plane in such a way that five of the seven tile
types occupy positions corresponding to points in $\mathbf{S}$.
Similarly, the laboratory self-assemblies tile completely filled-in,
two-dimensional regions, with DNA tiles at positions corresponding
to points of $\mathbf{S}$ marked by inserting hairpin sequences for
AFM contrast. To put the matter figuratively, what self-assembles in
these assemblies is not the fractal $\mathbf{S}$ but rather a
two-dimensional canvas on which $\mathbf{S}$ has been painted.

In order to achieve the advantages of fractal geometries mentioned
in the first paragraph of this paper, we need self-assemblies that
construct fractal shapes {\it and nothing more}. Accordingly, we say
that a set $F\subseteq \mathbb{Z}^2$ {\it strictly self-assembles}
in the Tile Assembly Model if there is a (finite) tile system that
eventually places a tile on each point of $F$ and never places a
tile on any point of the complement, $\mathbb{Z}^2-F$. (This
condition is defined precisely in section 2.)

The specific topic of this paper is the strict self-assembly of
discrete Sierpinski triangles in the Tile Assembly Model. We present
two main results on this topic, one negative and one positive.

Our negative result is that the standard discrete Sierpinski
triangle $\mathbf{S}$ {\it cannot} strictly self-assemble in the
Tile Assembly Model. That is, there is no tile assembly system that
places tiles on all the points of $\mathbf{S}$ and on none of the
points of $\mathbb{Z}^2-\mathbf{S}$. This theorem appears in section
3. The key to its proof is an extension of the theorem of Adleman,
Cheng, Goel, Huang, Kempe, Moisset de Espan\'{e}s, and Rothemund
\cite{ACGHKMR02} on the number of tile types required for a finite
tree to self-assemble from a single seed tile at its root.

Our positive result is that a slight modification of $\mathbf{S}$,
the {\it fibered Sierpinski triangle} $\mathbf{T}$ illustrated in
Figure 2, strictly self-assembles in the Tile Assembly Model.
Intuitively, the fibered Sierpinski triangle $\mathbf{T}$ (defined
precisely in section 4) is constructed by following the recursive
construction of $\mathbf{S}$ but also adding a thin {\it fiber} to
the left and bottom edges of each stage in the construction. These
fibers, which carry data in an algorithmically directed
self-assembly of $\mathbf{T}$, have thicknesses that are logarithmic
in the sizes of the corresponding stages of $\mathbf{T}$. This means
that $\mathbf{T}$ is visually indistinguishable from $\mathbf{S}$ at
sufficiently large scales. Mathematically, it implies that
$\mathbf{T}$ has the same fractal dimension as $\mathbf{S}$.

Since our strict self-assembly must tile the set $\mathbf{T}$ ``from
within,'' the algorithm that directs it is perforce more involved
than the simple XOR algorithm that directs Winfree's
seven-tile-type, non-strict self-assembly of $\mathbf{S}$. Our
algorithm, which is described in section 5, makes extensive,
recursive use of optimal counters \cite{CGM04}, coupled with
measured delay and corner-turning operations.  It uses 51 tile
types, but these are naturally partitioned into small functional
groups, so that we can use Soloveichik and Winfree's local
determinism method \cite{SolWin07} to prove that ${\bf T}$ strictly
self-assembles.

\section{Preliminaries}
\subsection{Notation and Terminology}
We work in the discrete Euclidean plane $\Z^2 = \Z \times \Z$. We
write $U_2$ for the set of all {\it unit vectors}, i.e., vectors of
length $1$, in $\Z^2$.  We regard the four elements of $U_2$ as
(names of the cardinal) {\it directions} in $\Z^2$.

We write $[X]^2$ for the set of all $2$-element subsets of a set
$X$.  All {\it graphs} here are undirected graphs, i.e., ordered
pairs $G = (V, E)$, where $V$ is the set of {\it vertices} and $E
\subseteq [V]^2$ is the set of {\it edges}. A {\it cut} of a graph
$G=(V,E)$ is a partition $C=(C_0,C_1)$ of $V$ into two nonempty,
disjoint subsets $C_0$ and $C_1$.

A {\it binding function} on a graph $G = (V,E)$ is a function
$\beta:E\rightarrow \mathbb{N}$. (Intuitively, if $\{u,v\} \in E$,
then $\beta\left(\{u,v\}\right)$ is the strength with which $u$ is
bound to $v$ by $\{u,v\}$ according to $\beta$. If $\beta$ is a
binding function on a graph $G=(V,E)$ and $C=(C_0,C_1)$ is a cut of
$G$, then the {\it binding strength} of $\beta$ on $C$ is
$$
\beta_C = \left\{ \beta(e) \left| e\in E, e\cap C_0 \ne
\emptyset,\textmd{ and } e \cap C_1\ne \emptyset \right.\right\}.
$$
The {\it binding strength} of $\beta$ on the graph $G$ is then
$$
\beta(G) = \min\left\{ \beta_C \left| C \textmd{ is a cut of } G
\right. \right\}.
$$

A {\it binding graph} is an ordered triple $G=(V,E,\beta)$, where
$(V,E)$ is a graph and $\beta$ is a binding function on $(V,E)$. If
$\tau \in \mathbb{N}$, then a binding graph $G = (V,E,\beta)$ is
$\tau$-{\it stable} if $\beta(V,E)\geq \tau$.

A {\it grid graph} is a graph $G = (V, E)$ in which $V \subseteq
\Z^2$ and every edge $\{\vec{m}, \vec{n} \} \in E$ has the property
that $\vec{m} - \vec{n} \in U_2$.  The {\it full grid graph} on a
set $V \subseteq \Z^2$ is the graph $\fgg{V} = (V, E)$ in which $E$
contains {\it every} $\{\vec{m}, \vec{n} \} \in [V]^2$ such that
$\vec{m} - \vec{n} \in U_2$.

We say that $f$ is a {\it partial function} from a set $X$ to a set
$Y$, and we write $\pfunc{f}{X}{Y}$, if $f: D\rightarrow Y$ for some
set $D \subseteq X$. In this case, $D$ is  the {\it domain} of $f$,
and we write $D = \dom{f}$.

All logarithms here are base-2.

\subsection{The Tile Assembly Model}
We review the basic ideas of the Tile Assembly Model. Our
development largely follows that of \cite{RotWin00,Roth01}, but some
of our terminology and notation are specifically tailored to our
objectives.  In particular, our version of the model only uses
nonnegative ``glue strengths'', and it bestows equal status on
finite and infinite assemblies. We emphasize that the results in
this section have been known for years, e.g., they appear, with
proofs, in \cite{Roth01}.

\begin{definition}
A {\it tile type} over an alphabet $\Sigma$ is a function
$\func{t}{U_2}{\Sigma^* \times \N}$. We write
$t=\pair{\color_t}{\strength_t}$, where
$\func{\color_t}{U_2}{\Sigma^*}$, and $\func{\strength_t}{U_2}{\N}$
are defined by $t(\vec{u}) =
\pair{\color_t(\vec{u})}{\strength_t(\vec{u})}$ for all $\vec{u} \in
U_2$.
\end{definition}

Intuitively, a tile of type $t$ is a unit square. It can be
translated but not rotated, so it has a well-defined ``side
$\vec{u}\;$'' for each $\vec{u} \in U_2$.  Each side $\vec{u}$ of
the tile is covered with a ``glue'' of {\it color}
$\color_t(\vec{u})$ and {\it strength} $\strength_t(\vec{u})$.  If
tiles of types $t$ and $t^\prime$ are placed with their centers at
$\vec{m}$ and $\vec{m}+\vec{u}$, respectively, where $\vec{m} \in
\Z^2$ and $\vec{u} \in U_2$, then they will {\it bind} with strength
$\strength_t(\vec{u}) \cdot \bval{t(\vec{u}) = t^\prime(-\vec{u})}$
where $\bval{\phi}$ is the {\it Boolean} value of the statement
$\phi$. Note that this binding strength is $0$ unless the adjoining
sides have glues of both the same color and the same strength.

For the remainder of this section, unless otherwise specified, $T$
is an arbitrary set of tile types, and $\tau \in \N$ is the
``temperature.''

\begin{definition}
A T-{\it configuration} is a partial function
$\pfunc{\alpha}{\Z^2}{T}$.
\end{definition}

Intuitively, a configuration is an assignment $\alpha$ in which a
tile of type $\alpha(\vec{m})$ has been placed (with its center) at
each point $\vec{m} \in \dom \alpha$. The following data structure
characterizes how these tiles are bound to one another.

\begin{definition}
The {\it binding graph of} a $T$-configuration
$\pfunc{\alpha}{\Z^2}{T}$ is the binding graph $G_\alpha = (V, E,
\beta )$, where $(V, E)$ is the grid graph given by $V =
\dom{\alpha}$, and $\{\vec{m}, \vec{n}\} \in E$ if and only if
\begin{enumerate}
\item $\vec{m} - \vec{n} \in U_n$,
\item $\color_{\alpha(\vec{m})}\left(\vec{n} -
\vec{m}\right) = \color_{\alpha(\vec{n})}\left(\vec{m} -
\vec{n}\right)$, and
\item $\strength_{\alpha(\vec{m})}\left(\vec{n} -\vec{m}\right) >
0$.
\end{enumerate}
The binding function $\func{\beta}{E}{\Z^+}$ is given by
\[
\beta\left(\{\vec{m}, \vec{n}\}\right) =
\strength_{\alpha(\vec{m})}\left(\vec{n} -\vec{m}\right)
\]
for all $\left\{\vec{m}, \vec{n} \right\} \in E$.
\end{definition}

\begin{definition} \textmd{ }
\begin{enumerate}
  \item [1.]  A $T$-configuration $\alpha$ is $\tau$-{\it stable} if its binding
  graph $G_\alpha$ is $\tau$-stable.
  \item [2.]  A $\tau$-$T$-{\it assembly} is a $T$-configuration that is
  $\tau$-stable.  We write $\asmb^\tau_T$ for the set of all $\tau$-$T$-assemblies.
\end{enumerate}
\end{definition}

\begin{definition}
Let $\alpha$ and $\alpha^\prime$ be $T$-configurations.
\begin{enumerate}
  \item [1.]  $\alpha$ is a {\it subconfiguration} of $\alpha^\prime$, and we
  write $\alpha \sqsubseteq \alpha^\prime$, if $\dom \alpha \subseteq \dom
  \alpha^\prime$ and, for all $\vec{m} \in \dom \alpha$, $\alpha(\vec{m}) =
  \alpha^\prime(\vec{m}).$
  \item [2.]  $\alpha^\prime$ is a {\it single-tile extension} of $\alpha$ if
  $\alpha \sqsubseteq \alpha^\prime$ and $\dom \alpha^\prime - \dom \alpha$ is a
  singleton set.  In this case, we write $\alpha^\prime = \alpha +
  (\ste{\vec{m}}{t})$, where $\{\vec{m}\} = \dom \alpha^\prime - \dom \alpha$
  and $t = \alpha^\prime(\vec{m})$.
\end{enumerate}
\end{definition}

Note that the expression $\alpha + (\ste{\vec{m}}{t})$ is only
defined when $\vec{m} \in \Z^2 - \dom \alpha$.

We next define the ``$\tau$-$t$-frontier'' of a $\tau$-$T$-assembly
$\alpha$ to be the set of all positions at which a tile of type $t$
can be ``$\tau$-stably added'' to the assembly $\alpha$.

\begin{definition}
Let $\alpha \in \asmb^\tau_T$.
\begin{enumerate}
  \item[1.]  For each $t \in T$, the $\tau$-$t$-{\it frontier} of $\alpha$ is
  the set
\[
\frontiertt{\alpha} = \left\{ \vec{m} \in \Z^2-\dom \alpha \left| \;
\sum_{\vec{u} \in U_2} \strength_{t}(\vec{u}) \cdot
\bbval{\alpha(\vec{m} + \vec{u})(-\vec{u}) = t(\vec{u})} \geq \tau
\right. \right\}.
\]
  \item[2.] The $\tau$-{\it frontier} of $\alpha$ is the set
\begin{equation*}
\frontiertau{\alpha} = \bigcup_{t \in T} \frontiertt{\alpha}.
\end{equation*}
\end{enumerate}
\end{definition}

The following lemma shows that the definition of
$\frontiertt{\alpha}$ achieves the desired effect.

\begin{lemma}
Let $\alpha \in \asmbtt$, $\vec{m} \in \mathbb{Z}^2 - \dom \alpha$,
and $t \in T$. Then $\alpha + (\ste{\vec{m}}{t}) \in \asmb^\tau_T$
if and only if $\vec{m} \in \frontiertt{\alpha}$.
\end{lemma}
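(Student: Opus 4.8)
The statement is a matter of unwinding the definitions of $\tau$-stability and of $\frontiertt{\alpha}$, so the plan is to compare the binding graphs of $\alpha$ and of $\alpha' := \alpha + (\ste{\vec{m}}{t})$ directly. First note that $\alpha'$ is automatically a $T$-configuration, so the whole content of the lemma is whether $G_{\alpha'}$ is $\tau$-stable, i.e., whether every cut of $G_{\alpha'}$ has binding strength at least $\tau$. Writing $G_\alpha = (V,E,\beta)$ and $G_{\alpha'} = (V',E',\beta')$, I would first record the following structural facts, all immediate from the definition of the binding graph: $V' = V\cup\{\vec{m}\}$ with $\vec{m}\notin V$; $E' = E\cup N$, where $N$ consists of the edges $\{\vec{m},\vec{m}+\vec{u}\}$, $\vec{u}\in U_2$, for which $\vec{m}+\vec{u}\in V$ and the glue of $t$ on side $\vec{u}$ agrees with the glue of $\alpha(\vec{m}+\vec{u})$ on side $-\vec{u}$ and has positive strength; and $\beta'$ restricting to $\beta$ on $E$ and assigning $\strength_t(\vec{u})$ to the edge $\{\vec{m},\vec{m}+\vec{u}\}\in N$. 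In particular, every edge of $G_{\alpha'}$ incident to $\vec{m}$ lies in $N$.

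The one computation that needs care is the identity
$$ \sum_{\vec{u}\in U_2} \strength_t(\vec{u})\cdot\bbval{\alpha(\vec{m}+\vec{u})(-\vec{u}) = t(\vec{u})} \;=\; \sum_{e\in N}\beta'(e), $$
which says that the quantity deciding membership of $\vec{m}$ in $\frontiertt{\alpha}$ is exactly the total strength of the new edges at $\vec{m}$ --- equivalently, the binding strength of $\beta'$ on the ``isolating'' cut $C_{\vec{m}} := (V,\{\vec{m}\})$ of $G_{\alpha'}$ (a genuine cut as long as $V\neq\emptyset$). This is checked term by term: the $\vec{u}$-summand on the left is $0$ exactly when $\vec{m}+\vec{u}\notin\dom\alpha$, or $\strength_t(\vec{u}) = 0$, or the two adjoining glues disagree, which are precisely the cases in which $\{\vec{m},\vec{m}+\vec{u}\}\notin N$; otherwise both sides contribute $\strength_t(\vec{u})$. (Here one uses the standard convention that glues of equal color carry equal strength, so that ``colors agree'' and ``full glues agree'' are the same condition.)

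Granting the identity, the equivalence drops out of a case analysis on cuts of $G_{\alpha'}$. If $\alpha'\in\asmbtt$ then in particular $C_{\vec{m}}$ has binding strength $\geq\tau$, so the left-hand sum above is $\geq\tau$ and $\vec{m}\in\frontiertt{\alpha}$. Conversely, assume $\vec{m}\in\frontiertt{\alpha}$ and let $C' = (C'_0,C'_1)$ be any cut of $G_{\alpha'}$, with $\vec{m}\in C'_1$ say. If $C'_1 = \{\vec{m}\}$ then $C' = C_{\vec{m}}$ and its binding strength is the left-hand sum, hence $\geq\tau$. Otherwise $C'_0$ and $C'_1\setminus\{\vec{m}\}$ are nonempty and partition $V$, so $C'' := (C'_0,C'_1\setminus\{\vec{m}\})$ is a cut of $G_\alpha$; since $E\subseteq E'$ and $\beta'$ extends $\beta$, every edge of $G_\alpha$ crossing $C''$ also crosses $C'$ with unchanged weight, so the binding strength of $C'$ is at least that of $C''$, which is $\geq\tau$ by $\tau$-stability of $\alpha$. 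Either way $C'$ has binding strength $\geq\tau$, so $G_{\alpha'}$ is $\tau$-stable, i.e., $\alpha'\in\asmbtt$. The only genuine obstacle is pinning down the displayed identity; after that the cut analysis is routine, the one thing to watch being that a non-isolating cut of $G_{\alpha'}$ may pick up extra edges at $\vec{m}$ compared with the corresponding cut of $G_\alpha$, which only helps since all binding strengths are nonnegative.
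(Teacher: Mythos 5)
Your proof is correct. The paper actually states this lemma \emph{without} proof (Section 2.2 defers all such results to Rothemund's thesis \cite{Roth01}), but your argument is exactly the standard one: reduce $\tau$-stability of $G_{\alpha'}$ to (a) the single isolating cut $(V,\{\vec{m}\})$, whose crossing weight is precisely the frontier sum defining membership in $\frontiertt{\alpha}$, and (b) every other cut, which restricts to a cut of $G_\alpha$ whose crossing edges all survive with unchanged weight, so the $\tau$-stability of $\alpha$ carries over. Both of your parenthetical caveats are well placed rather than cosmetic: the identity between the frontier sum and $\sum_{e\in N}\beta'(e)$ really does need the convention that matching colors carry matching strengths, since the paper's binding-graph definition tests only color equality while the frontier definition tests equality of the full glue; and the isolating cut exists only when $\dom{\alpha}\neq\emptyset$ (for $\tau\geq 1$ and $\alpha$ empty, $\alpha+(\ste{\vec{m}}{t})$ is vacuously $\tau$-stable while the frontier sum is $0$, so that degenerate case must be excluded or handled by convention). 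No gaps.
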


\begin{notation}
We write $\arrowstett{\alpha}{\alpha^\prime}$ (or, when $\tau$ and
$T$ are clear from context, $\arrowste{\alpha}{\alpha^\prime}$) to
indicate that $\alpha, \alpha^\prime \in \asmbtt$ and
$\alpha^\prime$ is a single-tile extension of $\alpha$.
\end{notation}

In general, self-assembly occurs with tiles adsorbing
nondeterministically and asynchronously to a growing assembly.  We
now define assembly sequences, which are particular ``execution
traces'' of how this might occur.

\begin{definition}
A $\tau$-$T$-{\it assembly sequence} is a sequence
$\vec{\alpha}=(\alpha_i \mid 0 \leq i<k)$ in $\asmbtt$, where $k \in
\mathbb{Z}^+ \cup \{\infty\}$ and, for each $i$ with $1 \leq i+1 <
k$, $\arrowstett{\alpha_i}{\alpha_{i+1}}$.
\end{definition}

Note that assembly sequences may be finite or infinite in length.
Note also that, in any $\tau$-$T$-assembly sequence
$\vec{\alpha}=(\alpha_i \mid 0 \leq i < k)$, we have $\alpha_i
\sqsubseteq \alpha_j$ for all $0 \leq i \leq j < k$.

\begin{definition}
The {\it result} of a $\tau$-$T$-assembly sequence
$\vec{\alpha}=(\alpha_i \mid 0 \leq i < k)$ is the unique
$T$-configuration $\alpha=\res{\vec{\alpha}}$ satisfying $\dom
\alpha = \bigcup_{0 \leq i < k}{\dom \alpha_i}$ and $\alpha_i
\sqsubseteq \alpha$ for each $0 \leq i < k$.
\end{definition}

It is clear that $\res{\vec{\alpha}} \in \asmbtt$ for every
$\tau$-$T$-assembly sequence $\vec{\alpha}$.

\begin{definition}
Let $\alpha, \alpha^\prime \in \asmbtt$.
\begin{enumerate}
\item[1.] A $\tau$-$T$-{\it assembly sequence from} $\alpha$ {\it
to} $\alpha^\prime$ is a $\tau$-$T$-assembly sequence
$\vec{\alpha}=(\alpha_i \mid 0 \leq i < k)$ such that $\alpha_0 =
\alpha$ and $\res{\vec{\alpha}} = \alpha^\prime$.
\item[2.] We write $\arrowtett{\alpha}{\alpha^\prime}$ (or, when
$\tau$ and $T$ are clear from context, $\alpha \To \alpha^\prime$)
to indicate that there exists a $\tau$-$T$-assembly sequence from
$\alpha$ to $\alpha^\prime$.
\end{enumerate}
\end{definition}

A routine dovetailing argument extends the following observation of
\cite{Roth01} to assembly sequences that may have infinite length.

\begin{theorem}
The binary relation $\arrowtett{}{}$ is a partial ordering of
$\asmbtt$.
\end{theorem}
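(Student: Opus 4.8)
The plan is to verify directly that $\arrowtett{}{}$ is reflexive, antisymmetric, and transitive.

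Reflexivity and antisymmetry are quick. For reflexivity, for any $\alpha \in \asmbtt$ the one-term sequence $(\alpha_0)$ with $\alpha_0 = \alpha$ is a valid $\tau$-$T$-assembly sequence --- the single-tile-extension requirement ``$\arrowstett{\alpha_i}{\alpha_{i+1}}$ for $1\le i+1<k$'' is vacuous when $k=1$ --- and its result is $\alpha$, so $\arrowtett{\alpha}{\alpha}$. For antisymmetry I will use the fact, already noted in the excerpt, that $\alpha_i \sqsubseteq \res{\vec\alpha}$ in every assembly sequence $\vec\alpha$; hence $\arrowtett{\alpha}{\alpha'}$ forces $\alpha \sqsubseteq \alpha'$, and since $\sqsubseteq$ is trivially antisymmetric (two configurations with equal domains on which they agree are equal), $\arrowtett{\alpha}{\alpha'}$ together with $\arrowtett{\alpha'}{\alpha}$ gives $\alpha = \alpha'$.

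Transitivity is the real content. Given an assembly sequence $\vec\alpha = (\alpha_i \mid 0\le i<k)$ from $\alpha$ to $\alpha'$, placing tiles at the pairwise distinct positions $\vec m_0,\vec m_1,\dots$ (all lying in $\dom\alpha' - \dom\alpha$), and an assembly sequence $\vec\beta = (\beta_j \mid 0\le j<\ell)$ from $\alpha'$ to $\alpha''$, placing tiles at pairwise distinct positions $\vec n_0,\vec n_1,\dots \in \dom\alpha'' - \dom\alpha'$, I want an assembly sequence from $\alpha$ to $\alpha''$. When $\vec\alpha$ is finite this is immediate: concatenate $\vec\alpha$ with $\vec\beta$, identifying the last term $\alpha_{k-1} = \alpha'$ of $\vec\alpha$ with $\beta_0$. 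The difficulty --- and the reason a ``routine dovetailing argument'' is invoked --- is the case $k=\infty$: then an addition $\vec n_j$ of $\vec\beta$ may rest on neighboring tiles that $\vec\alpha$ supplies only at arbitrarily late stages, so one cannot first run all of $\vec\alpha$ and then run $\vec\beta$.

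For the infinite case I will build the combined sequence by dovetailing, starting from $\gamma_0 = \alpha$ and proceeding in rounds $j=0,1,2,\dots$: in round $j$, first append the additions $\vec m_p,\vec m_{p+1},\dots$ in order, continuing from wherever round $j-1$ stopped, until the pointer $p$ exceeds both $j$ and every index $i$ for which $\vec m_i$ is a neighbor of $\vec n_j$; then append $\vec n_j$. The key technical point is a monotonicity property of the frontier: if $\delta,\gamma \in \asmbtt$, $\vec m \in \Z^2 - \dom\gamma$, and $\gamma$ agrees with $\delta$ at every neighbor of $\vec m$ that is occupied in $\delta$, then $\vec m \in \frontiertt{\delta}$ implies $\vec m \in \frontiertt{\gamma}$ for every $t$; this is immediate from the definition of $\frontiertt{\cdot}$ as a sum of the \emph{nonnegative} quantities $\strength_t(\vec u)\cdot\bbval{\cdots}$, and it is exactly here that the model's restriction to nonnegative glue strengths is used. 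With this in hand, each appended $\vec m_i$ is legitimate because at that moment the current assembly contains $\alpha_i$ (so the Lemma characterizing $\frontiertt{\cdot}$ applies), and each appended $\vec n_j$ is legitimate because by then all of $\vec n_0,\dots,\vec n_{j-1}$ and all of the $\vec m_i$-neighbors of $\vec n_j$ have been appended --- and every neighbor of $\vec n_j$ occupied in $\beta_j$ lies in $\dom\alpha$, or is some $\vec m_i$, or is some $\vec n_{j'}$ with $j'<j$ --- so the current assembly agrees with $\beta_j$ at every occupied neighbor of $\vec n_j$, whence monotonicity and the Lemma apply again. Forcing $p$ past $j$ in every round guarantees each $\vec m_i$ is eventually appended, and round $j$ appends $\vec n_j$, so the result of the combined sequence has domain $\dom\alpha \cup \{\vec m_i\}_i \cup \{\vec n_j\}_j = \dom\alpha''$ and agrees with $\alpha''$ everywhere; thus $\arrowtett{\alpha}{\alpha''}$. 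The main obstacle, then, is purely this bookkeeping for infinite sequences --- checking that delaying and reordering tile placements never destroys $\tau$-stability --- and it is resolved by the frontier-monotonicity observation, i.e., by nonnegativity of the glue strengths.
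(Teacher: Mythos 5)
Your proof is correct and is precisely the ``routine dovetailing argument'' that the paper invokes without writing out: reflexivity and antisymmetry follow from $\alpha_i \sqsubseteq \res{\vec\alpha}$, the finite case of transitivity is concatenation, and the infinite case is handled by interleaving the two sequences, justified by the frontier-monotonicity observation that rests on the nonnegativity of glue strengths exactly as you say. The only loose end is the sub-case where $\vec\beta$ is finite while $\vec\alpha$ is infinite, in which your rounds terminate before all of the $\vec m_i$ have been placed; the fix is simply to append the remaining $\vec m_i$ in order after the last round.
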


\begin{definition}
An assembly $\alpha \in \asmbtt$ is {\it terminal} if it is a
$\arrowtett{}{}$-maximal element of $\asmbtt$.
\end{definition}

It is clear that an assembly $\alpha$ is terminal if and only if
$\frontiert{\alpha} = \emptyset$.

We now note that every assembly is $\arrowtett{}{}$-bounded by
(i.e., can lead to) a terminal assembly.

\begin{lemma}
\label{terminal} For each $\alpha \in \asmbtt$, there exists
$\alpha^\prime \in \asmbtt$ such that
$\arrowtett{\alpha}{\alpha^\prime}$ and $\alpha^\prime$ is terminal.
\end{lemma}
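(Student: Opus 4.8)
The plan is to build, starting from $\alpha$, a single (possibly infinite) $\tau$-$T$-assembly sequence whose single-tile additions are scheduled ``fairly'' over $\Z^2$, and then to show that the result of this sequence necessarily has empty $\tau$-frontier, hence is terminal. Concretely, I would fix an enumeration $\vec{q}_1,\vec{q}_2,\dots$ of $\Z^2$ and define $\vec{\alpha}=(\alpha_i \mid 0\le i<k)$ by $\alpha_0=\alpha$ and, given $\alpha_{n-1}\in\asmbtt$: if $\frontiert{\alpha_{n-1}}=\emptyset$ then halt with $k=n$; otherwise let $\vec{m}_n$ be the position of least index in $\frontiert{\alpha_{n-1}}$, choose (by the definition of the $\tau$-frontier) a tile type $t_n$ with $\vec{m}_n\in\frontiertx{t_n}{\alpha_{n-1}}$, and set $\alpha_n=\alpha_{n-1}+(\ste{\vec{m}_n}{t_n})$. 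The frontier lemma above guarantees $\alpha_n\in\asmbtt$ at every step, so $\vec{\alpha}$ is a genuine $\tau$-$T$-assembly sequence, and taking $\alpha^\prime=\res{\vec{\alpha}}$ gives $\alpha^\prime\in\asmbtt$ with $\arrowtett{\alpha}{\alpha^\prime}$ straight from the definitions. If the construction halts at a finite stage, then $\alpha^\prime=\alpha_{k-1}$ has empty $\tau$-frontier and is terminal, and we are done; so the remaining work is the case where it runs forever, where I must show $\frontiert{\alpha^\prime}=\emptyset$.

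The engine for that is a monotonicity observation. For a fixed tile type $s$ and a fixed position $\vec{m}$, the quantity $\sum_{\vec{u}\in U_2}\strength_s(\vec{u})\cdot\bbval{\eta(\vec{m}+\vec{u})(-\vec{u})=s(\vec{u})}$ cannot decrease when the assembly $\eta$ is enlarged (comparable configurations agree wherever both are defined, so filling additional neighbours of $\vec{m}$ can only add matching sides), and it depends only on the at most four tiles adjacent to $\vec{m}$. Two consequences: once a position lies in $\frontiertx{s}{\alpha_n}$ it lies in $\frontiertx{s}{\alpha_{n^\prime}}$ for every $n^\prime\ge n$ until it is filled; and any $\vec{m}\in\frontiert{\alpha^\prime}$ already lies in $\frontiert{\alpha_n}$ for all sufficiently large $n$ (its relevant neighbours appear by some finite stage $N$), while, since $\vec{m}\notin\dom \alpha^\prime$, it is never filled.

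The last step, which I expect to be the only real obstacle, is the fairness bookkeeping that turns such an $\vec{m}$ into a contradiction. Let $j$ be the index of $\vec{m}$. Only positions of index $<j$ can ever be chosen ahead of $\vec{m}$; there are at most $j-1$ of them, and each, once filled, stays in the domain forever and is never chosen again, so only finitely many stages ever choose a position of index $<j$. Hence there is a stage $M$ past which none do, and in fact past which no position of index $<j$ lies in any frontier $\frontiert{\alpha_{n-1}}$, since such a position would never be filled and would therefore force a later choice of index $<j$. For every stage $n>\max\{M,N\}$ the least-indexed element of $\frontiert{\alpha_{n-1}}$ then has index $\ge j$, while $\vec{m}$ (index $j$) is in $\frontiert{\alpha_{n-1}}$; so $\vec{m}_n=\vec{m}$, contradicting that $\vec{m}$ is never filled. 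Therefore $\frontiert{\alpha^\prime}=\emptyset$, so $\alpha^\prime$ is terminal with $\arrowtett{\alpha}{\alpha^\prime}$, as required. (One could instead run Zorn's lemma on $P=\{\beta\in\asmbtt \mid \arrowtett{\alpha}{\beta}\}$: the union $\gamma$ of a $\To$-chain in $P$ is $\tau$-stable because any cut of $G_\gamma$ contains, among its crossing edges, all crossing edges of a cut of some single chain member, located via the chain property, whose binding strength is already $\ge\tau$, and $\gamma$ is $\To$-reachable from each chain member by the same kind of dovetailing; a maximal element of $P$ is then forced to be terminal. I prefer the explicit construction above because it keeps all the bookkeeping at length $\omega$.)
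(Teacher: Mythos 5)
Your proposal is correct. Note that the paper itself supplies no proof of this lemma: it is one of the facts the authors explicitly defer to \cite{Roth01} (``the results in this section have been known for years\dots they appear, with proofs, in [Rothemund's thesis]''), so there is no in-paper argument to compare against. Your fair-scheduling construction is the standard proof of this fact, and it is complete: the monotonicity of the frontier sum under $\sqsubseteq$ (each of the at most four neighbour terms can only switch from $0$ to its final value) gives both persistence of frontier membership and the fact that membership in $\frontiertau{\alpha'}$ is witnessed at a finite stage, and the least-index selection rule then yields the contradiction. The one step worth writing out carefully is the claim that no position of index $<j$ lies in any late frontier; the clean way to say it is that any such position, once in the frontier, stays there until filled, so if it were never filled the scheduler would select a position of index $<j$ at every subsequent stage, contradicting that the finitely many such positions can each be filled at most once. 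Your parenthetical Zorn's-lemma route also works (and is closer in spirit to how one handles the general GTAS setting), but it needs the extra observation that $\dom{\gamma}\subseteq\Z^2$ is countable so that reachability of the union of a chain can still be witnessed by an $\omega$-length dovetailed sequence; your explicit construction avoids that wrinkle.
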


We now define tile assembly systems.

\begin{definition} \textmd{ }
\begin{enumerate}
\item[1.] A {\it generalized tile assembly system}
({\it GTAS}) is an ordered triple
$$
\mathcal{T} = (T,\sigma,\tau),
$$
where $T$ is a set of tile types, $\sigma \in \asmbtt$ is the {\it
seed assembly}, and $\tau \in \mathbb{N}$ is the {\it temperature}.
\item[2.] A {\it tile assembly system} ({\it TAS}) is a GTAS $\mathcal{T} = (T,\sigma,\tau)$
in which the sets $T$ and $\dom \sigma$ are finite.
\end{enumerate}
\end{definition}

Intuitively, a ``run'' of a GTAS $\mathcal{T}=(T,\sigma,\tau)$ is
any $\tau$-$T$-assembly sequence $\vec{\alpha} = (\alpha_i \mid 0
\leq i < k)$ that begins with $\alpha_0 = \sigma$. Accordingly, we
define the following sets.

\begin{definition} Let $\mathcal{T} = (T,\sigma,\tau)$ be a GTAS.
\begin{enumerate}
\item[1.] The {\it set of assemblies produced by} $\mathcal{T}$ is
$$
\prodasm{T} = \left\{ \alpha \in \asmbtt \left|
\arrowtett{\sigma}{\alpha}
 \right. \right\}.
$$
\item[2.] The {\it set of terminal assemblies produced by}
$\mathcal{T}$ is
$$
\termasm{T} = \left\{ \left. \alpha \in \mathcal{A}[\mathcal{T}]
\right| \alpha\textrm{ is terminal} \right\}.
$$
\end{enumerate}
\end{definition}

\begin{definition}
A GTAS $\mathcal{T} = (T,\sigma,\tau)$ is {\it directed} if the
partial ordering $\arrowtett{}{}$ directs the set
$\prodasm{\mathcal{T}}$, i.e., if for each $\alpha,\alpha^\prime \in
\prodasm{\mathcal{T}}$ there exists $\hat{\alpha} \in
\prodasm{\mathcal{T}}$ such that $\arrowtett{\alpha}{\hat{\alpha}}$
and $\arrowtett{\alpha^\prime}{\hat{\alpha}}$.
\end{definition}

We are using the terminology of the mathematical theory
of relations here.  The reader is cautioned that the term
"directed" has also been used for a different, more
specialized notion in self-assembly \cite{AKKR02}.

Directed tile assembly systems are interesting because they are
precisely those tile assembly systems that produce unique terminal
assemblies.

\begin{theorem}
A GTAS $\mathcal{T}$ is directed if and only if $\left| \termasm{T}
\right|=1$.
\end{theorem}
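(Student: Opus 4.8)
The plan is to prove the two implications separately, relying only on Lemma~\ref{terminal} (every assembly has a terminal extension) and the earlier theorem that $\To$ is a partial ordering of $\asmbtt$.

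\textbf{Forward direction.} Suppose $\mathcal{T}$ is directed. To see that $\termasm{T}\neq\emptyset$, apply Lemma~\ref{terminal} to the seed assembly $\sigma$: this yields a terminal $\alpha$ with $\sigma\To\alpha$, so $\alpha\in\prodasm{T}$ and hence $\alpha\in\termasm{T}$. For uniqueness, let $\alpha,\alpha'\in\termasm{T}$. Since $\mathcal{T}$ is directed and $\alpha,\alpha'\in\prodasm{T}$, there is some $\hat{\alpha}\in\prodasm{T}$ with $\alpha\To\hat{\alpha}$ and $\alpha'\To\hat{\alpha}$. But $\alpha$ and $\alpha'$ are terminal, i.e., $\To$-maximal in $\asmbtt$, and $\hat{\alpha}\in\asmbtt$; since $\To$ is a partial order this forces $\alpha=\hat{\alpha}$ and $\alpha'=\hat{\alpha}$, whence $\alpha=\alpha'$. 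Thus $\left|\termasm{T}\right|=1$.

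\textbf{Backward direction.} Suppose $\termasm{T}=\{\Omega\}$ for a single assembly $\Omega$, and let $\alpha,\alpha'\in\prodasm{T}$ be arbitrary. Applying Lemma~\ref{terminal} to $\alpha$ gives a terminal $\beta$ with $\alpha\To\beta$; since $\sigma\To\alpha$ (because $\alpha\in\prodasm{T}$) and $\To$ is transitive, $\sigma\To\beta$, so $\beta\in\prodasm{T}$, and therefore $\beta\in\termasm{T}$, i.e., $\beta=\Omega$. The same argument applied to $\alpha'$ gives $\alpha'\To\Omega$. Hence $\Omega\in\prodasm{T}$ is a common $\To$-upper bound of $\alpha$ and $\alpha'$, so $\mathcal{T}$ is directed.

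The argument is essentially bookkeeping with the definitions of $\prodasm{T}$ and $\termasm{T}$ together with Lemma~\ref{terminal}; the one place to be slightly careful is the claim that a terminal assembly absorbs to equality any assembly it leads to. This is exactly $\To$-maximality (the definition of terminal combined with antisymmetry of the partial order), and it can be double-checked via the remark that a terminal assembly has empty $\tau$-frontier, so no $\tau$-$T$-assembly sequence starting from it can take a nontrivial single-tile step --- this covers sequences of infinite length as well, which is the only conceivable subtlety. I do not anticipate any genuine obstacle here.
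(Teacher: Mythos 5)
Your proof is correct, and it is exactly the standard argument the paper has in mind: the paper omits the proof of this theorem (deferring to Rothemund's thesis), and your two implications use precisely the tools the paper sets up for it, namely Lemma~\ref{terminal} together with the fact that $\arrowtett{}{}$ is a partial order and that terminal means $\arrowtett{}{}$-maximal. No gaps.
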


In the present paper, we are primarily interested in the
self-assembly of sets.

\begin{definition}
Let $\mathcal{T} = (T,\sigma,\tau)$ be a GTAS, and let $X \subseteq
\mathbb{Z}^2$.
\begin{enumerate}
\item[1.] The set $X$ {\it weakly self-assembles} in $\mathcal{T}$
if there is a set $B \subseteq T$ such that, for all $\alpha \in
\termasm{T}$, $\alpha^{-1}(B) = X$.
\item[2.] The set $X$ {\it strictly self-assembles} in $\mathcal{T}$
if, for all $\alpha \in \termasm{T}$, $\dom \alpha = X$.
\end{enumerate}
\end{definition}

Intuitively, a set $X$ weakly self-assembles in $\mathcal{T}$ if
there is a designated set $B$ of ``black'' tile types such that
every terminal assembly of $\mathcal{T}$ ``paints the set $X$ - and
only the set $X$ - black''. In contrast, a set $X$ strictly
self-assembles in $\mathcal{T}$ if every terminal assembly of
$\mathcal{T}$ has tiles on the set $X$ and only on the set $X$.
Clearly, every set that strictly self-assembles in a GTAS
$\mathcal{T}$ also weakly self-assembles in $\mathcal{T}$.

We now have the machinery to say what it means for a set in the
discrete Euclidean plane to self-assemble in either the weak or the
strict sense.

\begin{definition} Let $X \subseteq \mathbb{Z}^2$.
\begin{enumerate}
\item[1.] The set $X$ {\it weakly self-assembles} if there is a TAS $\mathcal{T}$ such that $X$ weakly self-assembles in
$\mathcal{T}$.
\item[2.] The set $X$ {\it strictly self-assembles} if there is a TAS $\mathcal{T}$ such that $X$ strictly self-assembles in
$\mathcal{T}$.
\end{enumerate}
\end{definition}

Note that $\mathcal{T}$ is required to be a TAS, i.e., finite, in
both parts of the above definition.

\subsection{Local Determinism}
The proof of our second main theorem uses the local determinism
method of Soloveichik and Winfree \cite{SolWin07}, which we now
review.

\begin{notation}
For each $T$-configuration $\alpha$, each $\vec{m} \in
\mathbb{Z}^2$, and each $\vec{u} \in U_2$,
$$
\text{str}_{\alpha}(\vec{m},\vec{u}) =
\text{str}_{\alpha(\vec{m})}(\vec{u})\cdot
\bval{\alpha(\vec{m})(\vec{u}) = \alpha(\vec{m}+\vec{u})(-\vec{u})}.
$$
(The Boolean value on the right is 0 if $\{\vec{m},\vec{m}+\vec{u}\}
\nsubseteq \dom{\alpha}$.)
\end{notation}

\begin{notation}
If $\vec{\alpha} = (\alpha_i | 0\leq i<k)$ is a $\tau$-$T$-assembly
sequence and $\vec{m} \in \mathbb{Z}^2$, then the
$\vec{\alpha}$-{\it index} of $\vec{m}$ is
$$
i_{\vec{\alpha}}(\vec{m}) = \min\{ i\in \mathbb{N} \left| \vec{m}
\in \dom{\alpha_i} \right. \}.
$$
\end{notation}

\begin{observation} $\vec{m} \in \dom{\res{\vec{\alpha}}} \Leftrightarrow
i_{\vec{\alpha}}(\vec{m}) < \infty$.
\end{observation}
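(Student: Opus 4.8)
The plan is to unwind the two definitions in play --- that of the result $\res{\vec{\alpha}}$ of an assembly sequence and that of the $\vec{\alpha}$-index $i_{\vec{\alpha}}(\vec{m})$ --- and to reduce the claimed equivalence to the well-ordering of $\mathbb{N}$. No machinery beyond these definitions is needed.

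First I would recall that, by the definition of the result of a $\tau$-$T$-assembly sequence, $\dom \res{\vec{\alpha}} = \bigcup_{0 \leq i < k} \dom \alpha_i$. Consequently, $\vec{m} \in \dom \res{\vec{\alpha}}$ holds if and only if there is an index $i$ with $0 \leq i < k$ such that $\vec{m} \in \dom \alpha_i$, i.e., if and only if the set $S_{\vec{m}} = \{\, i \in \mathbb{N} \mid \vec{m} \in \dom \alpha_i \,\}$ is nonempty (here $\dom \alpha_i$ is only meaningful for $i < k$, so membership in the union is literally nonemptiness of $S_{\vec{m}}$). Next I would invoke the well-ordering principle: a subset of $\mathbb{N}$ possesses a finite minimum exactly when it is nonempty. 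Hence $S_{\vec{m}} \neq \emptyset$ if and only if $i_{\vec{\alpha}}(\vec{m}) = \min S_{\vec{m}} < \infty$, under the standing convention $\min \emptyset = \infty$. Chaining the two equivalences gives $\vec{m} \in \dom \res{\vec{\alpha}} \Leftrightarrow i_{\vec{\alpha}}(\vec{m}) < \infty$.

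I do not expect a genuine obstacle; the only point meriting a word of care is the convention that $\min \emptyset = \infty$, which is what makes $i_{\vec{\alpha}}(\vec{m}) < \infty$ a meaningful statement in the case where $\vec{m}$ never receives a tile along $\vec{\alpha}$ (in particular when $\vec{\alpha}$ is infinite). The monotonicity fact $\alpha_i \sqsubseteq \alpha_j$ for $i \leq j$, noted earlier, is not required for this observation, though together with the definition of $i_{\vec{\alpha}}$ it records the stronger fact that once $i_{\vec{\alpha}}(\vec{m})$ is finite the tile occupying $\vec{m}$ is determined from that stage onward.
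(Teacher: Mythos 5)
Your proof is correct and matches the paper's intent: the paper states this as an unproved observation precisely because it follows, as you show, by unwinding the definitions of $\res{\vec{\alpha}}$ and $i_{\vec{\alpha}}$ together with the convention $\min\emptyset=\infty$. Nothing further is needed.
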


\begin{notation}
If $\vec{\alpha} = (\alpha_i | 0\leq i<k)$ is a $\tau$-$T$-assembly
sequence, then, for $\vec{m},\vec{m}' \in \mathbb{Z}^2$,
$$
\vec{m} \prec_{\vec{\alpha}} \vec{m}' \Leftrightarrow
i_{\vec{\alpha}}(\vec{m}) < i_{\vec{\alpha}}(\vec{m}').
$$
\end{notation}

\begin{definition}
\label{local_determinism_sets} (Soloveichik and Winfree
\cite{SolWin07}) Let $\vec{\alpha} = (\alpha_i | 0\leq i < k)$ be a
$\tau$-$T$-assembly sequence, and let $\alpha = \res{\vec{\alpha}}$.
For each location $\vec{m} \in \dom{\alpha}$, define the following
sets of directions.
\begin{enumerate}
\item[1.] $\textmd{IN}^{\vec{\alpha}}(\vec{m}) = \left\{ \vec{u} \in U_2 \left| \vec{m}+\vec{u} \prec_{\vec{\alpha}} \vec{m} \textmd{ and }
\textmd{str}_{\alpha_{i_{\vec{\alpha}}(\vec{m})}}(\vec{m},\vec{u})>0
\right.\right\}$.
\item[2.] $\textmd{OUT}^{\vec{\alpha}}(\vec{m}) = \left\{ \vec{u}\in U_2 \left|
 -\vec{u} \in \textmd{IN}^{\vec{\alpha}}(\vec{m}+\vec{u} \right.)
\right\}$.
\end{enumerate}
\end{definition}

Intuitively, $\textmd{IN}^{\vec{\alpha}}(\vec{m})$ is the set of
sides on which the tile at $\vec{m}$ initially binds in the assembly
sequence $\vec{\alpha}$, and $\textmd{OUT}^{\vec{\alpha}}(\vec{m})$
is the set of sides on which this tile propagates information to
future tiles.

Note that $\textmd{IN}^{\vec{\alpha}}(\vec{m}) = \emptyset$ for all
$\vec{m} \in \alpha_0$.

\begin{notation}
If $\vec{\alpha} = (\alpha_i | 0 \leq i < k)$ is a
$\tau$-$T$-assembly sequence, $\alpha = \res{\vec{\alpha}}$, and
$\vec{m} \in \dom{\alpha} - \dom{\alpha_0}$, then
$$
\vec{\alpha} \setminus \vec{m} = \alpha \upharpoonright
\left(\dom{\alpha} - \{\vec{m}\} -
\left(\vec{m}+\textmd{OUT}^{\vec{\alpha}}(\vec{m})\right)\right).
$$
\end{notation}

(Note that $\vec{\alpha} \setminus \vec{m}$ is a $T$-configuration
that may or may not be a $\tau$-$T$-assembly.

\begin{definition}
\label{local_determinism_definition} (Soloveichik and Winfree
\cite{SolWin07}). A $\tau$-$T$-assembly sequence $\vec{\alpha} =
(\alpha_i | 0 \leq i < k)$ with result $\alpha$ is {\it locally
deterministic} if it has the following three properties.
\begin{enumerate}
\item[1.] For all $\vec{m} \in \dom{\alpha} - \dom{\alpha_0}$,
$$
\sum_{\vec{u} \in
\textmd{IN}^{\vec{\alpha}}(\vec{m})}{\textmd{str}_{\alpha_{i_{\vec{\alpha}}(\vec{m})}}(\vec{m},\vec{u})
} = \tau.
$$
\item[2.] For all $\vec{m} \in \dom{\alpha} - \dom{\alpha_0}$ and
all $t \in T-\{\alpha(\vec{m})\}$, $\vec{m} \not \in
\frontiertt{\left(\vec{\alpha} \setminus \vec{m}\right)}$.
\item[3.] $\frontiert{\alpha} = \emptyset$.
\end{enumerate}
\end{definition}

That is, $\vec{\alpha}$ is locally deterministic if (1) each tile
added in $\vec{\alpha}$ ``just barely'' binds to the assembly; (2)
if a tile of type $t_0$ at a location $\vec{m}$ and its immediate
``OUT-neighbors'' are deleted from the {\it result} of
$\vec{\alpha}$, then no tile of type $t \ne t_0$ can attach itself
to the thus-obtained configuration at location $\vec{m}$; and (3)
the result of $\vec{\alpha}$ is terminal.

\begin{definition}
\label{locally_deterministic_tas_def} A GTAS $\mathcal{T} = (T,\sigma,\tau)$ is
{\it locally deterministic} if there exists a locally deterministic
$\tau$-$T$-assembly sequence $\vec{\alpha}=(\alpha_i | 0\leq i < k)$ with
$\alpha_0 = \sigma$.
\end{definition}

\begin{theorem}
\label{local_determinism_theorem} (Soloveichik and Winfree
\cite{SolWin07}) Every locally deterministic \\GTAS is directed.
\end{theorem}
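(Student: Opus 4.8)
The plan is to show that any two assemblies produced by a locally deterministic GTAS $\mathcal{T} = (T,\sigma,\tau)$ have a common extension, which by definition is exactly what it means for $\mathcal{T}$ to be directed. Fix a locally deterministic $\tau$-$T$-assembly sequence $\vec{\alpha} = (\alpha_i \mid 0 \leq i < k)$ with $\alpha_0 = \sigma$ and result $\alpha = \res{\vec{\alpha}}$; by property (3) of Definition~\ref{local_determinism_definition}, $\alpha$ is terminal. The heart of the argument is to prove that \emph{every} assembly $\gamma \in \prodasm{\mathcal{T}}$ is a subconfiguration of $\alpha$, i.e., $\gamma \sqsubseteq \alpha$. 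Once this is established, directedness is immediate: given $\gamma, \gamma' \in \prodasm{\mathcal{T}}$, both satisfy $\gamma \sqsubseteq \alpha$ and $\gamma' \sqsubseteq \alpha$, and since $\alpha \in \prodasm{\mathcal{T}}$ (it is the result of a run from $\sigma$) and is terminal, Lemma~\ref{terminal} together with $\arrowtett{\gamma}{\alpha}$ and $\arrowtett{\gamma'}{\alpha}$ gives the common upper bound $\hat\alpha = \alpha$. Theorem~\ref{local_determinism_theorem} then follows, and in fact one also gets $\termasm{T} = \{\alpha\}$.

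So the crux is: if $\arrowtett{\sigma}{\gamma}$, then $\gamma \sqsubseteq \alpha$. First I would fix a $\tau$-$T$-assembly sequence $\vec{\gamma} = (\gamma_j \mid 0 \leq j < \ell)$ witnessing $\arrowtett{\sigma}{\gamma}$, and argue by induction on $j$ that $\gamma_j \sqsubseteq \alpha$ for every $j$. The base case $\gamma_0 = \sigma = \alpha_0 \sqsubseteq \alpha$ is trivial. For the inductive step, suppose $\gamma_j \sqsubseteq \alpha$ and $\gamma_{j+1} = \gamma_j + (\ste{\vec{m}}{t})$; I must show $\vec{m} \in \dom\alpha$ and $\alpha(\vec{m}) = t$. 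Since $\vec{m} \in \frontiertt{\gamma_j}$, the tile $t$ binds to $\gamma_j$ with total strength $\geq \tau$ using only positions in $\dom\gamma_j \subseteq \dom\alpha$; hence those same neighboring tiles are present in $\alpha$ with the same types, so $\vec{m}$ lies in the $\tau$-$t$-frontier of any initial segment of $\vec{\alpha}$ that already contains all those neighbors — in particular $\vec{m} \in \frontiertt{\alpha}$ unless $\vec{m} \in \dom\alpha$ already. But $\alpha$ is terminal, so $\frontiertau{\alpha} = \emptyset$; therefore $\vec{m} \in \dom\alpha$. It remains to rule out $\alpha(\vec{m}) = t_0$ for some $t_0 \neq t$. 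This is exactly where properties (1) and (2) of local determinism enter. Let $i = i_{\vec{\alpha}}(\vec{m})$; by property (1), the tile $\alpha(\vec{m}) = t_0$ binds in $\vec{\alpha}$ with strength exactly $\tau$ on the sides in $\textmd{IN}^{\vec{\alpha}}(\vec{m})$. I would argue that all of those $\textmd{IN}$-neighbors already lie in $\dom\gamma_j$: intuitively, an $\textmd{IN}$-neighbor $\vec{m}+\vec{u}$ of $\vec{m}$ is not an $\textmd{OUT}$-neighbor of $\vec{m}$, so $\vec{\alpha}\setminus\vec{m}$ retains it; combined with a careful induction (ordering locations by $\prec_{\vec{\alpha}}$) showing that any location appearing in $\vec{\alpha}$ strictly before $\vec{m}$'s slot and bound into $\alpha$ on the correct sides is forced to appear in $\gamma_j$ with the correct tile type, one gets that the neighbors of $\vec{m}$ in $\gamma_j$ already determine $\vec{m}$. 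Then property (2), applied with $t \neq t_0 = \alpha(\vec{m})$, says $\vec{m} \notin \frontiertt{(\vec{\alpha}\setminus\vec{m})}$; since $\gamma_j$ agrees with $\vec{\alpha}\setminus\vec{m}$ on the relevant neighborhood of $\vec{m}$, this contradicts $\vec{m} \in \frontiertt{\gamma_j}$. Hence $t = t_0 = \alpha(\vec{m})$, completing the induction.

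The main obstacle is the bookkeeping in the inductive step that forces the relevant neighbors of $\vec{m}$ to already be present in $\gamma_j$ with the correct types — i.e., threading the $\prec_{\vec{\alpha}}$-ordering through the induction so that the $\textmd{IN}$/$\textmd{OUT}$ structure of $\vec{\alpha}$ can be compared against the arbitrary sequence $\vec{\gamma}$. One has to be careful that $\vec{\alpha}\setminus\vec{m}$ is only a $T$-configuration, not necessarily an assembly, so the frontier notion in property (2) is being used on a configuration; checking that Definition of $\frontiertt{\cdot}$ still makes sense there, and that $\gamma_j$ restricted to the neighborhood of $\vec{m}$ is a subconfiguration of $\vec{\alpha}\setminus\vec{m}$ (which requires knowing $\vec{m}$'s $\textmd{OUT}$-neighbors are not yet in $\gamma_j$ when $t$ is placed — itself a consequence of $\vec{m}\notin\dom\gamma_j$ and the way $\textmd{OUT}$-neighbors receive information only after $\vec{m}$), is the delicate part. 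Everything else is routine monotonicity of assembly sequences (for any $0 \le i \le j < k$, $\alpha_i \sqsubseteq \alpha_j$, as noted after the definition of assembly sequence) plus the frontier characterization in the Lemma following the definition of $\frontiert{}$.
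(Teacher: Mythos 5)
First, a point of reference: the paper does not prove this theorem at all --- it is imported from Soloveichik and Winfree \cite{SolWin07} --- so there is no internal proof to compare against. Your overall strategy (fix the locally deterministic sequence $\vec{\alpha}$ with terminal result $\alpha$, show every producible $\gamma$ satisfies $\gamma \sqsubseteq \alpha$, and conclude directedness) is the correct and standard one, and your handling of the case $\vec{m} \notin \dom\alpha$ via $\frontiertau{\alpha} = \emptyset$ is fine.

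However, there is a genuine gap exactly at the step you yourself flag as ``the delicate part,'' and the justifications you sketch for it do not hold up. To invoke property (2) you need every neighbor of $\vec{m}$ in $\gamma_j$ to also be a neighbor of $\vec{m}$ in $\vec{\alpha}\setminus\vec{m}$ with the same type; since $\vec{\alpha}\setminus\vec{m}$ deletes precisely the positions $\vec{m}+\textmd{OUT}^{\vec{\alpha}}(\vec{m})$, what you must rule out is that some OUT-neighbor of $\vec{m}$ already sits in $\dom{\gamma_j}$ and contributes positive glue strength to the wrong tile $t$. (Whether the IN-neighbors of $\vec{m}$ are present in $\gamma_j$ is irrelevant to this inequality, so that part of your argument is a red herring.) Your two supporting claims are false in general: a location placed early in $\vec{\alpha}$ is \emph{not} forced to appear in $\gamma_j$, since $\gamma_j$ may be a small assembly approaching $\vec{m}$ from a completely different direction; and the observation that OUT-neighbors ``receive information only after $\vec{m}$'' is a statement about the order of $\vec{\alpha}$, not of $\vec{\gamma}$ --- in $\vec{\gamma}$ a position $\vec{m}+\vec{u}$ with $\vec{u} \in \textmd{OUT}^{\vec{\alpha}}(\vec{m})$ can be tiled (with the correct type $\alpha(\vec{m}+\vec{u})$, perfectly consistent with $\gamma_j \sqsubseteq \alpha$) by a path that never passes through $\vec{m}$, and its glue on side $-\vec{u}$ could then help a wrong tile reach strength $\tau$ at $\vec{m}$. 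Excluding this scenario is the actual content of the Soloveichik--Winfree proof; it requires a strengthened induction (e.g., taking a counterexample minimal with respect to $i_{\vec{\alpha}}(\vec{m})$ as well as the step index in $\vec{\gamma}$, and showing a prematurely placed OUT-neighbor yields an earlier violation), not just the frontier comparison you give. A second, smaller gap: directedness needs $\arrowtett{\gamma}{\alpha}$, which does not follow from $\gamma \sqsubseteq \alpha$ alone; one must still verify that the tiles of $\dom\alpha - \dom\gamma$ can be attached to $\gamma$ in $\vec{\alpha}$-order, each with strength at least $\tau$, which again uses property (1) and the IN/OUT bookkeeping.
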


\subsection{Zeta-Dimension}
The most commonly used dimension for discrete fractals is
zeta-dimension, which we use in this paper. The discrete-continuous
correspondence mentioned in the introduction preserves dimension
somewhat generally. Thus, for example, the zeta-dimension of the
discrete Sierpinski triangle is the same as the Hausdorff dimension
of the continuous Sierpinski triangle.

Zeta-dimension has been re-discovered several times by researchers
in various fields over the past few decades, but its origins
actually lie in Euler's (real-valued predecessor of the Riemann)
zeta-function \cite{Euler1737} and Dirichlet series.  For each set
$A \subseteq \Z^2$, define the {\it A-zeta-function} $\zeta_A:[0,
\infty)\rightarrow[0, \infty]$ by $\zeta_A(s) = \sum_{(0, 0) \ne (m,
n) \in A} (|m|+|n|)^{-s}$ for all $s \in [0,\infty)$. Then the {\it
zeta-dimension} of $A$ is
\[
     \textmd{Dim}_\zeta(A) = \inf \{s | \zeta_A(s) < \infty\}.
\]
It is clear that $0 \le \textmd{Dim}_\zeta(A) \le 2$ for all $A
\subseteq \Z^2$. It is also easy to see (and was proven by Cahen in
1894; see also \cite{Apos97,HarWri79}) that zeta-dimension admits
the ``entropy characterization''
\begin{gather*}
\tag*{(2.1)}
     \textmd{Dim}_\zeta(A) = \limsup_{n \rightarrow \infty}\frac{ \log|A_{\le n}|}{\log
     n},
\end{gather*}
where $A_{\le n} = \{(i,j) \in A \mid  |i|+|j| \le n\}$. Various
properties of zeta-dimension, along with extensive historical
citations, appear in the recent paper \cite{ZD}, but our technical
arguments here can be followed without reference to this material.
We use the fact, verifiable by routine calculation, that (2.1) can
be transformed by changes of variable up to exponential, e.g.,
\[
{\rm Dim}_\zeta(A) = \limsup_{n \rightarrow \infty} \frac{ \log
|A_{[0,2^n]}|}{n}
\]
also holds.

\subsection{The Standard Discrete Sierpinski Triangle $\mathbf{S}$}

\begin{figure}[htp]
\begin{center}
\includegraphics[height=4.0in]{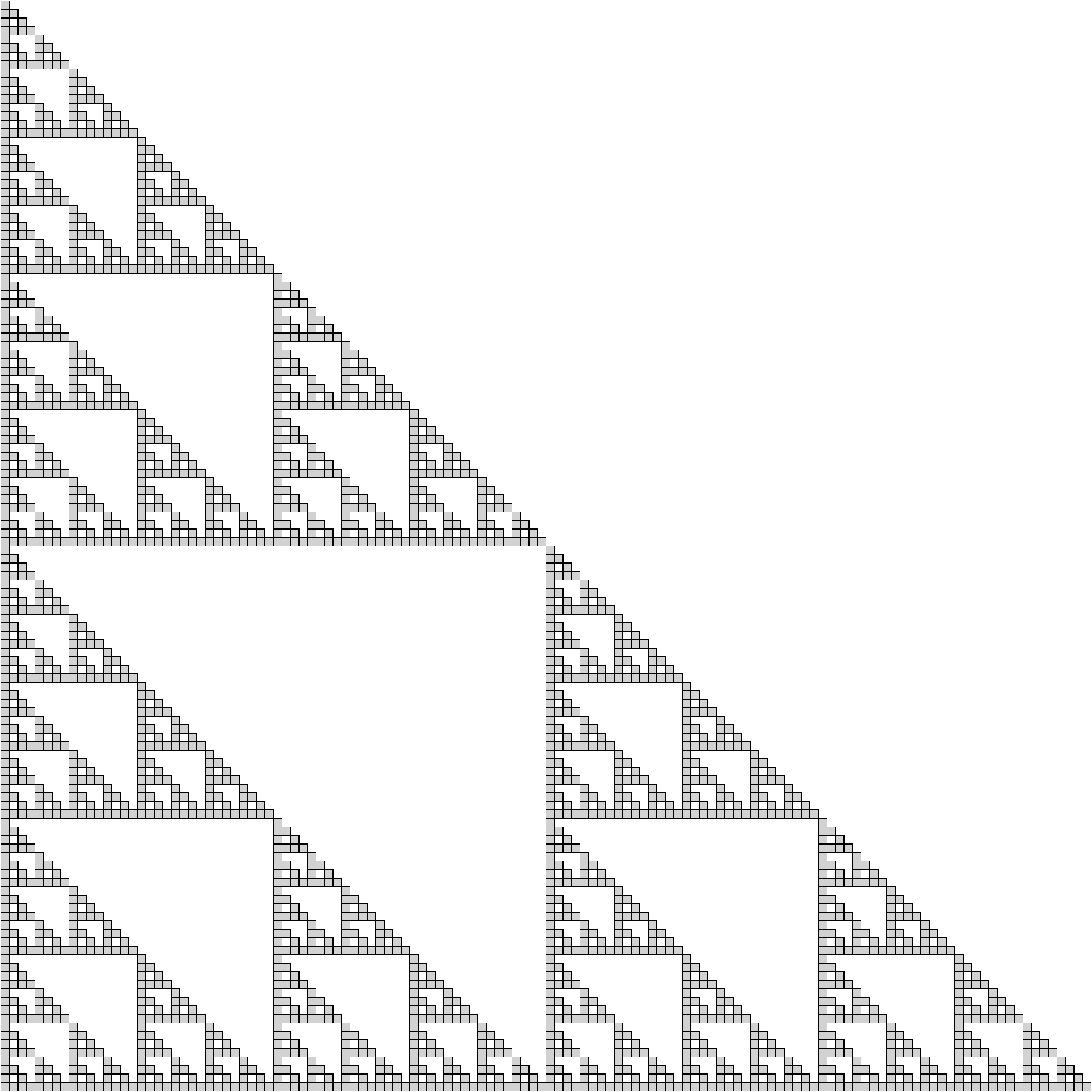}
\end{center}
\caption{The standard discrete Sierpinski triangle $\mathbf{S}.$}
\end{figure}

We briefly review the standard discrete Sierpinski triangle
and the calculation of its zeta-dimension.

Let $V = \{(1, 0), (0, 1) \}$.  Define the sets $S_0, S_1, S_2,
\cdots \subseteq \Z^2$ by the recursion

\begin{gather*}
\tag*{(2.2)}
S_0 = \left\{ \left( 0, 0 \right) \right\}, \\
S_{i+1} = S_i \cup \left( S_i + 2^iV \right),
\end{gather*}
where $A+cB = \{\vec{m} + c\vec{n} | \vec{m} \in A \text{ and } \vec{n} \in B
\}$.  Then the {\it standard discrete Sierpinski triangle} is the  set
\[
{\mathbf S} = \bigcup_{i=0}^{\infty} S_i,
\]
which is illustrated in Figure 1.  It is well known that
$\mathbf{S}$ is the set of all $(k, l) \in \N^2$ such that the
binomial coefficient $\binom{k+l}{k}$ is odd.  For this reason, the
set $\mathbf{S}$ is also called {\it Pascal's triangle modulo 2}. It
is clear from the recursion (2.2) that $|S_i| = 3^i$ for all $i \in
\N$.  The zeta-dimension of $\mathbf{S}$ is thus
\begin{eqnarray*}
{\rm Dim}_\zeta(\mathbf{S}) & = & \limsup_{n \rightarrow \infty} \frac{ \log \left| \mathbf{S}_{[0, 2^n]} \right|}{n} \\
                            & = & \limsup_{n \rightarrow \infty} \frac{ \log \left| S_n \right|}{n} \\
                            & = & \log 3 \\
                            & \approx & 1.585. \\
\end{eqnarray*}

\section{Impossibility of Strict Self-Assembly of ${\rm {\bf{S}}}$}
This section presents our first main theorem, which says that the
standard discrete Sierpinski triangle $\bf{S}$ does not strictly
self-assemble in the Tile Assembly Model. In order to prove this
theorem, we first develop a lower bound on the number of tile types
required for the self-assembly of a set $X$ in terms of the depths
of finite trees that occur in a certain way as subtrees of the full
grid graph $\fgg{X}$ of $X$.

Intuitively, given a set $D$ of vertices of $\fgg{X}$ (which is in
practice the domain of the seed assembly), we now define a
$D$-subtree of $\fgg{X}$ to be any rooted tree in $\fgg{X}$ that
consists of {\it all} vertices of $\fgg{X}$ that lie at or on the
far side of the root from $D$. For simplicity, we state the
definition in an arbitrary graph $G$.

\begin{definition}
Let $G = (V,E)$ be a graph, and let $D \subseteq V$.
\begin{enumerate}
\item[1.] For each $r \in V$, the $D$-$r$-{\it rooted subgraph} of
$G$ is the graph $G_{D,r} = \left( V_{D,r}, E_{D,r} \right)$, where
$$
V_{D,r} = \left\{ v \in V \left| \textmd{ every path from } v
\textmd{ to (any vertex in) } D \textmd{ goes through } r \right.
\right\}
$$
and
$$
E_{D,r} = E \cap \left[ V_{D,r} \right]^2.
$$
(Note that $r \in V_{D,r}$ in any case.)

\item[2.] A $D$-{\it subtree} of $G$ is a rooted tree $B$ with root
$r \in V$ such that $B = G_{D,r}$.

\item[3.] A {\it branch} of a $D$-subtree $B$ of $G$ is a simple
path $\pi = \left( v_0, v_1, \ldots \right)$ in $B$ that starts at
the root of $B$ and either ends at a leaf of $B$ or is infinitely
long.
\end{enumerate}
\end{definition}

We use the following quantity in our lower bound theorem.

\begin{definition}
Let $G = (V,E)$ be a graph, and let $D \subseteq V$. The {\it
finite-tree depth} of $G$ {\it relative to} $D$ is
$$
\ftdepth{D}{G} = \sup\left\{ \textmd{depth(B)} \mid B \textmd{ is a
finite } D \textmd{-subtree of } G\right\}.
$$
\end{definition}

We emphasize that the above supremum is only taken over {\it finite}
$D$-subtrees. It is easy to construct an example in which $G$ has a
$D$-subtree of infinite depth, but $\ftdepth{D}{G} < \infty$.

To prove our lower bound result, we use the following theorem from
\cite{ACGHKMR02}.

\begin{theorem} (Adleman, Cheng, Goel, Huang, Kempe, Moisset de
Espan\'{e}s, and Rothemund \cite{ACGHKMR02})
\label{seven_author_theorem} Let $X \subseteq \mathbb{Z}^2$ with
$|X|<\infty$ be such that $\fgg{X}$ is a tree rooted at the origin.
If $X$ strictly self-assembles in a GTAS $\mathcal{T} =
(T,\sigma,2)$ whose seed $\sigma$ consists of a single tile at the
origin, then $|T| \geq \textrm{depth}\left(\fgg{X}\right)$.
\end{theorem}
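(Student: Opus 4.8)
The plan is to argue by contradiction: suppose $X$ strictly self-assembles in $\mathcal{T}=(T,\sigma,2)$ with $\sigma$ a single tile at the origin, and suppose $|T|<D$, where $D=\textrm{depth}(\fgg{X})$. I will derive a contradiction by a pumping argument along a longest root-to-leaf branch of the tree $\fgg{X}$.

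First I would record the structural consequences of the hypotheses. Every producible assembly of $\mathcal{T}$ has domain contained in $X$ (otherwise, by Lemma~\ref{terminal}, it would extend to a terminal assembly with a point outside $X$, contradicting strictness) and has connected domain containing the origin. Since $\fgg{X}$ is a tree, it follows that in every $2$-$T$-assembly sequence each non-seed tile $\vec{m}$ enters the assembly when \emph{exactly one} of its $\fgg{X}$-neighbours is already present, namely the neighbour $p(\vec{m})$ on the unique $\fgg{X}$-path from $\vec{m}$ to the origin: two present neighbours would close a cycle in $\fgg{X}$, and there is at least one because at temperature $2$ the entering tile needs a bond of positive strength. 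Hence that single bond $\{\vec{m},p(\vec{m})\}$ has strength $\ge 2$. In particular, along any root-to-leaf branch $v_0,v_1,\dots,v_{D-1}$ (with $v_0$ the origin, so the branch has $D$ vertices) the vertices are tiled in the order $v_0,v_1,\dots,v_{D-1}$, and no proper descendant of a $v_k$ is tiled before $v_k$ is.

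Next I would fix a terminal assembly $\alpha$ (so $\dom\alpha=X$), an assembly sequence $\vec{\alpha}$ from $\sigma$ producing it, and a longest branch $v_0,\dots,v_{D-1}$. Its $D$ tile types $\alpha(v_0),\dots,\alpha(v_{D-1})$ are drawn from a set of size $|T|<D$, so by the pigeonhole principle $\alpha(v_i)=\alpha(v_j)$ for some $0\le i<j\le D-1$. Now comes the \emph{grafting} step. Let $B$ be the maximal subtree of $\fgg{X}$ hanging below $v_i$; it contains $v_i,v_{i+1},\dots,v_{D-1}$, so it has a branch with $D-i$ vertices. Put $\vec{s}=v_j-v_i\ne\vec{0}$, and consider growing, on top of the partial assembly $\beta$ reached by $\vec{\alpha}$ at the instant $v_j$ is placed, a translated copy of $B$: for $w\in B$ in the $\vec{\alpha}$-order, place the tile $\alpha(w)$ at $w+\vec{s}$, bonding it to the already-placed image $p(w)+\vec{s}$ of its parent. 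This is locally legal, because the translated bonds are exactly the strength-$\ge 2$ bonds of $\alpha$ within $B$ and the root of the copy is consistent: $v_i+\vec{s}=v_j$ and $\alpha(v_i)=\alpha(v_j)$. Follow the copy until the first position $w+\vec{s}$ that either (i) lies outside $X$ or (ii) is already occupied. In case~(i) we have exhibited a producible assembly with a tile outside $X$; extending it to a terminal assembly (Lemma~\ref{terminal}) contradicts strict self-assembly. If neither (i) nor (ii) ever occurs, then the translates of $v_i,\dots,v_{D-1}$ are $D-i$ distinct points of $X$ forming a simple path that starts at $v_j$ and — since no descendant of $v_j$ is present in $\beta$ — avoids $v_0,\dots,v_{j-1}$ and everything else in $\beta$; concatenated with $v_0,\dots,v_j$ this gives a simple $\fgg{X}$-path from the origin with $D+(j-i)>D$ vertices, contradicting $\textrm{depth}(\fgg{X})=D$.

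The step I expect to be the main obstacle is case~(ii): a collision of the translated copy with a tile already in $\beta$. The structural lemma helps, since the only cell of $\beta$ lying in the descendant region of $v_j$ is $v_j$ itself, so any collision occurs at an ancestor of $v_j$; a colliding tile whose type agrees with $\alpha(w)$ is harmless and one continues the copy past it, but a matching collision already spoils the disjointness used in the clean argument above, and a disagreeing collision leaves the construction stuck. In both situations one must use the temperature-$2$ condition and the acyclicity of $\fgg{X}$ to argue that, however the translated copy winds through the occupied ancestors of $v_j$, it cannot remain permanently trapped inside the finite set $X$ without eventually producing a tile outside $X$ or certifying a branch with more than $D$ vertices. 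This delicate analysis — reconciling the grafted copy with the already-assembled material — is where the argument of \cite{ACGHKMR02} does its real work; once it is in place, the dichotomy ``a tile outside $X$'' versus ``a branch with more than $\textrm{depth}(\fgg{X})$ vertices'' closes the proof.
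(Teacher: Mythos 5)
First, a point of reference: the paper does not prove this theorem at all --- it is imported as a black box from \cite{ACGHKMR02} and used only as an ingredient in Theorem~\ref{lower_bound}. So there is no ``paper's own proof'' to match; your attempt has to stand on its own. Your overall strategy (pigeonhole on a longest branch plus pumping a translated copy) is the standard one, and your structural preliminaries are sound: producible assemblies have domain inside $X$, temperature $2$ plus acyclicity of $\fgg{X}$ forces each non-seed tile to attach by a single strength-$2$ bond to its unique parent, and tiles along a branch appear in root-to-leaf order.

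The gap is the one you yourself flag, and it is genuine --- the argument as written does not close. Two specific problems. (a) Your claim that ``any collision occurs at an ancestor of $v_j$'' is false: $\beta$ may contain arbitrary off-branch subtrees hanging from $v_0,\dots,v_{j-1}$, and the translate $B+\vec{s}$ can land anywhere in $\dom{\beta}$. This is largely self-inflicted: you should graft only the translated \emph{branch} $v_{i+1}+\vec{s},\dots,v_{D-1}+\vec{s}$ (not all of $B$) onto the minimal producible assembly with domain $\{v_0,\dots,v_j\}$; then collisions can only be with $v_0,\dots,v_{j-1}$. (b) Even after that simplification, the acyclicity argument you gesture at does not dispose of every collision. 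If the first collision is $v_{k+1}+\vec{s}=v_l$, the closed walk $v_l,\dots,v_j,v_{i+1}+\vec{s},\dots,v_{k+1}+\vec{s}=v_l$ is a simple cycle in $\fgg{X}$ --- contradicting the tree hypothesis --- \emph{only if} its length is at least $3$. The length-$2$ case survives, and it occurs exactly when $v_{i+1}-v_i=v_{j-1}-v_j$, i.e.\ when the grafted branch's very first step backtracks onto the occupied cell $v_{j-1}=v_{i+1}+\vec{s}$. This is geometrically realizable (a branch that goes east--east--north--west--west has $v_1-v_0=v_4-v_5$), it recurs at every seam if one tries to pump repeatedly, and nothing in your setup rules it out or tells you how to proceed: the types $\alpha(v_{i+1})$ and $\alpha(v_{j-1})$ need not agree, so you can neither place the tile nor ``continue through'' the occupied cell. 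Resolving this one sub-case is precisely the missing content, and deferring it to \cite{ACGHKMR02} leaves the proof incomplete.
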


Our lower bound result is the following.

\begin{theorem}
\label{lower_bound} Let $X \subseteq \mathbb{Z}^2$. If $X$ strictly
self-assembles in a GTAS $\mathcal{T} = (T,\sigma,\tau)$, then
$$
|T| \geq \ftdepth{\dom{\sigma}}{\fgg{X}}.
$$
\end{theorem}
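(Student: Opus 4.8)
The plan is to derive the bound from Theorem~\ref{seven_author_theorem} by a reduction. Write $D = \dom \sigma$. Since $\ftdepth{D}{\fgg{X}}$ is by definition the supremum of $\textrm{depth}(B)$ over finite $D$-subtrees $B$ of $\fgg{X}$, it suffices to show $|T| \geq \textrm{depth}(B)$ for each such $B$; if that supremum is infinite this already forces $|T| = \infty$.

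Fix a finite $D$-subtree $B$ of $\fgg{X}$, with root $r$ and vertex set $Y = V_{D,r}$, so that $B = \fgg{Y}$ is a finite tree. First I would establish three structural facts, each by a short argument about paths in $\fgg{X}$ from the definition of $V_{D,\cdot}$: (i) $D \cap Y \subseteq \{r\}$; (ii) every $v \in Y \setminus \{r\}$ has all of its $\fgg{X}$-neighbours in $Y$; and (iii) for every $v \in Y$, the set $V_{D,v}$ is exactly the subtree of $B$ hanging below $v$, and hence is itself a finite $D$-subtree of $\fgg{X}$ rooted at $v$. Facts (i)--(iii) yield the dynamical fact I actually need: in any $\tau$-$T$-assembly sequence $\vec{\alpha}$ from $\sigma$ to a terminal assembly $\alpha$ (one exists by Lemma~\ref{terminal}, and $\dom\alpha = X$ by strict self-assembly), each $v \in Y$ is tiled strictly before every position of $V_{D,v} \setminus \{v\}$. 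The reason is standard: a non-seed tile binds, when placed, to an already-present neighbour, and iterating produces a path in $\fgg{X}$ from any occupied position back to $D$ along which the step-indices strictly decrease; when the starting position lies in $V_{D,v}$ that path must pass through $v$. In particular $r$ is the first position of $Y$ to be tiled, and every non-root $v \in Y$ binds, with strength at least $\tau$, to its $B$-parent alone (by (ii) its only other $\fgg{X}$-neighbours are its $B$-children, which come later).

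Next I would build a TAS $\mathcal{T}' = (T',\sigma',2)$ that strictly self-assembles a translate of $Y$. Translate coordinates so $r$ is the origin. For $t \in T$ let $\hat{t}$ be the tile type with the same glue colours as $t$ but with $\strength_{\hat{t}}(\vec{u}) = 2$ if $\strength_{t}(\vec{u}) \geq \tau$ and $\strength_{\hat{t}}(\vec{u}) = 0$ otherwise, and put $\hat{T} = \{\hat t \mid t \in T\}$. Let $t_r'$ be obtained from $\widehat{\alpha(r)}$ by further assigning the null glue to every side pointing out of $Y$, set $T' = \hat T \cup \{t_r'\}$, and let $\sigma'$ place the single tile $t_r'$ at the origin. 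Using the structural and dynamical facts I would check that $\mathcal{T}'$ strictly self-assembles the translate of $Y$: inside $Y$ every attachment that occurs is along a single bond of original strength $\geq \tau$, hence now of strength $2$, which equals the new temperature, and an easy induction down $B$ shows all of $Y$ gets tiled; meanwhile no position outside $Y$ is ever tiled, since by (ii) a tile on $Y \setminus \{r\}$ can only help tile positions in $Y$ or positions outside $X$, strict self-assembly of $X$ in $\mathcal{T}$ forbids any attachment at the latter, and passing to $\mathcal{T}'$ only weakens attachment further, while $t_r'$ has no glue leading out of $Y$. Since $\fgg{Y}$ is a tree rooted at the origin and $\sigma'$ is a single tile at the origin, Theorem~\ref{seven_author_theorem} gives $|T'| \geq \textrm{depth}(\fgg{Y}) = \textrm{depth}(B)$, hence $|T| \geq \textrm{depth}(B)$.

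The step I expect to be the main obstacle is making that last chain of inequalities tight --- i.e.\ getting $|T'| \leq |T|$ rather than merely $|T'| \leq |T| + 1$. The synthetic seed tile $t_r'$ is the only new type, so the real work is to argue it is not genuinely new: one wants to reduce to the case in which $\alpha(r)$ already carries no glue toward a neighbour outside $Y$, so that $t_r' = \widehat{\alpha(r)} \in \hat T$. The other delicate point is the re-scaling of strengths from temperature $\tau$ to temperature $2$; this is clean precisely because, by facts (ii)--(iii), all binding inside $Y$ is single-bonded, so no cooperative bond of total strength $\tau$ assembled from sub-$\tau$ glues can interfere. The hypothesis $\tau = 2$ of Theorem~\ref{seven_author_theorem} is used only through this re-scaling.
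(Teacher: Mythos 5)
Your proposal follows essentially the same route as the paper's proof: fix a finite $\dom{\sigma}$-subtree $B$ with root $\vec{r}$, take a terminal assembly $\alpha$, seed a new single-tile system at $\vec{r}$ with a copy of $\alpha(\vec{r})$ whose glues pointing out of $B$ are nulled, argue that $B$ and nothing else self-assembles there, and invoke Theorem~\ref{seven_author_theorem}. The paper's version is in fact terser than yours --- it keeps the tile set $T$ and the temperature $\tau$ unchanged and simply asserts that $B$ self-assembles in $(T,\sigma',\tau)$ --- so the two points you flag as delicate (the synthetic seed type, which as written yields only $|T|+1\geq\textrm{depth}(B)$, and the mismatch between $\tau$ and the temperature-$2$ hypothesis of Theorem~\ref{seven_author_theorem}) are passed over silently there as well; your structural facts (i)--(iii) and the ``indices decrease along a path back to the seed'' argument are exactly the content the paper leaves implicit. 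One caution about your rescaling $t\mapsto\hat{t}$: since binding requires equal colour \emph{and} equal strength, collapsing all strengths $\geq\tau$ to $2$ can create matches between glues that did not bind in $\mathcal{T}$, so ``passing to $\mathcal{T}'$ only weakens attachment'' is not literally true as stated; this is repaired by folding the original strength into the glue colour, which changes neither $|\hat{T}|$ nor anything else in your argument.
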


\begin{proof}
Assume the hypothesis, and let $B$ be a finite
$\dom{\sigma}$-subtree of $\fgg{X}$. If suffices to prove that $|T|
\geq \textmd{depth}(B)$.

Let $\alpha \in \termasm{\mathcal{T}}$, and let $\vec{r}$ be the
root of $B$. Let $\sigma'$ be the assembly with $\dom{\sigma'} =
\{\vec{r}\}$ and $\vec{u}\in U_2$. We define $\sigma'(\vec{r})$ as
follows.
$$
\sigma'(\vec{r})(\vec{u}) = \left\{
\begin{array}{ll}
\left(\color_{\alpha(\vec{r})}(\vec{u}),\strength_{\alpha(\vec{r})}(\vec{u})\right) & \textrm{ if } \vec{r}+\vec{u} \in B \\
(\color_{\alpha(\vec{r})}(\vec{u}),0) & \textrm{ otherwise.}
\end{array} \right.
$$

Then $\mathcal{T}' = (T,\sigma',\tau)$ is a GTAS in which $B$
self-assembles. By Theorem~\ref{seven_author_theorem}, this implies
that $|T| \geq \textmd{depth}(B)$.
\end{proof}

We next show that the standard discrete Sierpinski triangle $\bf{S}$
has infinite finite-tree depth.

\begin{lemma}
\label{finite_set} For every finite set $D \subseteq S$,
$\ftdepth{D}{\fgg{\textbf{S}}} = \infty$.
\end{lemma}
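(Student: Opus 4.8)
The plan is to produce, for each $n$, a finite $D$-subtree of $\fgg{\mathbf{S}}$ of depth at least $n$. Since $D$ is finite we may fix $N$ with $D \subseteq S_N = \mathbf{S} \cap [0,2^N-1]^2$, and we may assume $D \neq \emptyset$ (in the application $D$ is the nonempty domain of a seed assembly). Everything rests on the well-known description of $\mathbf{S}$ as the set of $(k,l) \in \N^2$ whose binary expansions share no $1$-bit (equivalently, those with $\binom{k+l}{k}$ odd, i.e.\ Pascal's triangle modulo $2$), together with the recursion $(2.2)$, which writes $S_{i+1}$ as the union of the three pairwise-disjoint translates $S_i$, $S_i + (2^i,0)$, $S_i + (0,2^i)$. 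From these I would extract two facts: (a) for each $i$, the only point of $S_i$ with first coordinate $2^i-1$ is $(2^i-1,0)$, and the only one with second coordinate $2^i-1$ is $(0,2^i-1)$; and (b) $\fgg{\mathbf{S}}$ is a tree. Fact (a) is an immediate induction on $i$ using $(2.2)$. For (b), I would show by induction that each $\fgg{S_i}$ is a tree: by (a), in $\fgg{S_{i+1}}$ the translate $S_i$ is joined to $S_i + (2^i,0)$ by exactly the one edge $\{(2^i-1,0),(2^i,0)\}$, to $S_i + (0,2^i)$ by exactly one edge, and $S_i + (2^i,0)$ and $S_i + (0,2^i)$ are joined by no edge; so $\fgg{S_{i+1}}$ is three trees glued in a path by two edges, hence a tree. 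Since every cycle of $\fgg{\mathbf{S}}$ would be finite and so lie in some $\fgg{S_i}$, and since the $\fgg{S_i}$ are connected and share the origin, $\fgg{\mathbf{S}} = \bigcup_i \fgg{S_i}$ is a tree.

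Next, for each $k \geq N+2$ I would single out the ``dangling'' sub-copy
\[
  W_k \;=\; S_{k-1} + (2^k,2^{k-1}) \;=\; \mathbf{S} \cap \left([\,2^k,\;2^k+2^{k-1}) \times [\,2^{k-1},\;2^k)\right),
\]
with distinguished vertex $r_k = (2^k,2^{k-1}) \in W_k$ (the image of the origin under the translation). The key claim is that the only edge of $\fgg{\mathbf{S}}$ with exactly one endpoint in $W_k$ is $e_k = \bigl\{\,r_k,\; (2^k,2^{k-1}-1)\,\bigr\}$. I would check this one face of the defining rectangle at a time, using the ``no common $1$-bit'' criterion: on the left face $x = 2^k$ the outward neighbours $(2^k-1,\,l)$ with $l \in [2^{k-1},2^k)$ are never in $\mathbf{S}$ (both $2^k-1$ and $l$ carry the bit $2^{k-1}$); on the right and top faces the outward neighbours are excluded similarly; and on the bottom face $y = 2^{k-1}$ the downward neighbour of a point $(2^k+p,\,2^{k-1}) \in W_k$ is $(2^k+p,\,2^{k-1}-1)$, which belongs to $\mathbf{S}$ only for $p = 0$ — giving precisely $e_k$.

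Finally I would assemble the pieces. Because $\fgg{\mathbf{S}}$ is a tree and $W_k$ is attached to its complement by the single edge $e_k$, deleting $r_k$ splits $\fgg{\mathbf{S}}$ into the two pieces of $W_k \setminus \{r_k\}$ and the connected set $\mathbf{S} \setminus W_k$; moreover $\mathbf{S} \setminus W_k$ contains $D$, since $D \subseteq S_N$ is disjoint from $W_k$ (whose points all have first coordinate $\geq 2^k > 2^N-1$) and the origin reaches $(2^k,2^{k-1}-1)$ along the $x$-axis and the line $x = 2^k$ without meeting $r_k$. Hence $v \in V_{D,r_k}$ iff $v \in W_k$, so $G_{D,r_k} = \fgg{\mathbf{S}} \upharpoonright W_k$ is a finite rooted tree — a translated copy of $\fgg{S_{k-1}}$ rooted at the image of the origin — whose depth is at least $2^{k-1}-1$, because the path $(0,0),(1,0),\dots,(2^{k-1}-1,0)$ in $S_{k-1}$ translates to a branch of that length. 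Since this holds for all $k \geq N+2$, $\ftdepth{D}{\fgg{\mathbf{S}}} = \infty$. The main obstacle throughout is the bookkeeping behind (a), (b) and the single-edge claim for $W_k$: it all comes down to the observation that the self-similar copies produced by $(2.2)$ meet along exactly one grid edge, which is simultaneously what makes $\fgg{\mathbf{S}}$ a tree and what makes $W_k$ genuinely dangle.
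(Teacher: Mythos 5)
Your proposal is correct and takes essentially the same route as the paper: the paper's witness is $B_k=\mathbf{S}\cap\{(a,b):a\ge 2^{k+1},\,b\ge 2^k,\,a+b\le 2^{k+2}-1\}$ rooted at $(2^{k+1},2^k)$, which is exactly your $W_{k+1}=S_k+(2^{k+1},2^k)$, a translated stage dangling from the rest of $\mathbf{S}$ by the single edge at its lower-left corner. The only difference is that you supply the bit-level verification (that $\fgg{\mathbf{S}}$ is a tree and that only one edge leaves $W_k$) which the paper dismisses as ``routine to verify.''
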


\begin{proof}
Let $D \subseteq \bf{S}$ be finite, and let $m$ be a positive
integer. It suffices to show that $\ftdepth{D}{\fgg{\textbf{S}}}
> m$. Choose $k \in \mathbb{N}$ large enough to satisfy the
following two conditions.
\begin{enumerate}
\item[(i)] $2^k > \max\{a \in \mathbb{N} | (\exists b \in \mathbb{N}) (a,b)\in D
\}$.
\item[(ii)] $2^k > m$.
\end{enumerate}
Let $\vec{r}_k = (2^{k+1},2^k)$, and let
$$
B_k = \left\{ (a,b) \in \textbf{S} \left| a \geq 2^{k+1}, b \geq 2^k
\textrm{ and } a+b\leq 2^{k+2}-1 \right. \right\}.
$$
It is routine to verify that $\fgg{B_k}$ is a finite $D$-subtree of
$\fgg{\bf{S}}$ with root at $\vec{r}$ and depth $2^k$. It follows
that
$$
\ftdepth{D}{\fgg{\textbf{S}}} \geq
\textmd{depth}\left(\fgg{B_k}\right) = 2^k
> m.
$$
\end{proof}

We now have the machinery to prove our first main theorem.

\begin{theorem}
\label{ImpProof} $\bf{S}$ does not strictly self-assemble in the
Tile Assembly Model.
\end{theorem}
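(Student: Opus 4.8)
The plan is to derive the theorem directly by combining the lower-bound result (Theorem~\ref{lower_bound}) with the infinite finite-tree-depth computation (Lemma~\ref{finite_set}). Suppose, for contradiction, that $\mathbf{S}$ strictly self-assembles in the Tile Assembly Model. By definition, this means there is a TAS $\mathcal{T} = (T,\sigma,\tau)$ — in particular, one with $T$ finite and $\dom\sigma$ finite — in which $\mathbf{S}$ strictly self-assembles. The first step is to observe that $\dom\sigma$, being a subset of $\dom\alpha$ for any produced assembly $\alpha$ and being finite, is a finite subset of $\mathbf{S}$; so Lemma~\ref{finite_set} applies with $D = \dom\sigma$, giving $\ftdepth{\dom\sigma}{\fgg{\mathbf{S}}} = \infty$.

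The second step is to invoke Theorem~\ref{lower_bound}: since $\mathbf{S}$ strictly self-assembles in $\mathcal{T}$, we get $|T| \geq \ftdepth{\dom\sigma}{\fgg{\mathbf{S}}} = \infty$. This contradicts the finiteness of $T$ required of a TAS, and the theorem follows.

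I expect the only subtle point — and hence the main thing to check carefully — is that $\dom\sigma \subseteq \mathbf{S}$, i.e., that a strict self-assembly of $\mathbf{S}$ cannot have seed tiles outside $\mathbf{S}$. This is immediate from the definition of strict self-assembly: every terminal assembly $\alpha$ satisfies $\dom\alpha = \mathbf{S}$, and since $\sigma \sqsubseteq \alpha$ (as $\sigma$ is the seed and every produced assembly leads to some terminal one by Lemma~\ref{terminal}), we have $\dom\sigma \subseteq \dom\alpha = \mathbf{S}$. Combined with the finiteness of $\dom\sigma$, this licenses the application of Lemma~\ref{finite_set}. Everything else is a one-line chaining of the two prior results, so there is no real computational obstacle; the work was all done in establishing Theorem~\ref{lower_bound} and Lemma~\ref{finite_set}.
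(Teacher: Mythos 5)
Your proof is correct and follows essentially the same route as the paper: combine Theorem~\ref{lower_bound} with Lemma~\ref{finite_set} to force $|T|=\infty$, contradicting finiteness of a TAS. Your explicit check that $\dom\sigma\subseteq\mathbf{S}$ (needed to apply Lemma~\ref{finite_set}) is a point the paper leaves implicit, and is a worthwhile addition.
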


\begin{proof}
Let $\mathcal{T} = (T,\sigma,\tau)$ be a GTAS in which $\bf{S}$
strictly self-assembles. It suffices to show that $\mathcal{T}$ is
not a TAS. If $\dom{\sigma}$ is infinite, this is clear, so assume
that $\dom{\sigma}$ is finite. Then Theorem~\ref{lower_bound} and
Lemma~\ref{finite_set} tell us that $|T| = \infty$, whence
$\mathcal{T}$ is not a TAS.
\end{proof}

Before moving on, we note that Theorem \ref{ImpProof} implies the
following lower bound on the number of tile types needed to strictly
assemble any {\it finite} stage $S_n$ of $\bf{S}$.

\begin{corollary}
If a stage $S_n$ of $\bf{S}$ strictly self-assembles in a TAS
$\mathcal{T} = (T,\sigma,\tau)$ in which $\sigma$ consists of a
single tile at the origin, then $|T| \geq 2^n$.
\end{corollary}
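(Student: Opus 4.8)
\emph{Proposal.} The plan is to read the corollary off Theorem~\ref{lower_bound}. Since the seed $\sigma$ consists of a single tile at the origin, $\dom \sigma = \{(0,0)\}$, so Theorem~\ref{lower_bound}, applied with $X = S_n$, reduces the claim to showing
\[
\ftdepth{\{(0,0)\}}{\fgg{S_n}} \geq 2^n ,
\]
i.e., to exhibiting a finite $\{(0,0)\}$-subtree of $\fgg{S_n}$ of depth at least $2^n$. The subtree I would use is $\fgg{S_n}$ itself.

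First I would verify, by induction on $n$ mirroring the recursion (2.2), that $\fgg{S_n}$ is a tree. Indeed, $S_{n+1}$ decomposes into the three translated copies $S_n$, $S_n + 2^n(1,0)$, and $S_n + 2^n(0,1)$, occupying the regions $\{k+l \le 2^n - 1\}$, $\{k \ge 2^n\}$, and $\{l \ge 2^n\}$ respectively; a check of these coordinate constraints shows that the only adjacencies between distinct copies are the two single edges $\{(2^n-1,0),(2^n,0)\}$ and $\{(0,2^n-1),(0,2^n)\}$, and that the second and third copies are not adjacent at all. Hence $\fgg{S_{n+1}}$ is three trees (by the inductive hypothesis) joined along two edges, so it is a tree, the base case $\fgg{S_0}$ being a single vertex. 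Because $S_n \subseteq \N^2$, every path in $\fgg{S_n}$ from a vertex to $(0,0)$ terminates at $(0,0)$, so $(\fgg{S_n})_{\{(0,0)\},(0,0)} = \fgg{S_n}$; thus $\fgg{S_n}$, rooted at the origin, is a finite $\{(0,0)\}$-subtree of itself. Finally, a second easy induction via (2.2) gives $\{(k,0)\mid 0 \le k \le 2^n-1\} \subseteq S_n$; since these $2^n$ points form a simple path from the root $(0,0)$ to the leaf $(2^n-1,0)$ in $\fgg{S_n}$, we get $\textrm{depth}(\fgg{S_n}) \ge 2^n$. Combining, $\ftdepth{\{(0,0)\}}{\fgg{S_n}} \ge 2^n$, and Theorem~\ref{lower_bound} yields $|T| \ge 2^n$.

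The only nontrivial point is the structural claim that $\fgg{S_n}$ is a tree, and within that, the verification that inside $S_{n+1}$ the three copies of $S_n$ touch in exactly the two predicted single edges and nowhere else. This is essentially the same fact already used (implicitly) in the proof of Lemma~\ref{finite_set}, where translated copies of stages of $\mathbf{S}$ are taken to be $D$-subtrees, so it amounts to routine bookkeeping with the coordinate regions; everything else is immediate from Theorem~\ref{lower_bound} and the definitions.
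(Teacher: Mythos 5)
Your proposal is correct and follows exactly the route the paper intends: the paper states this corollary without proof (merely noting it follows from the preceding results), and the intended argument is precisely an application of Theorem~\ref{lower_bound} with $D=\{(0,0)\}$ together with the fact that $\fgg{S_n}$ is itself a finite $D$-subtree of depth $2^n$ --- the same structural fact used implicitly for the translated copies $\fgg{B_k}$ in the proof of Lemma~\ref{finite_set}. Your verification that the three copies of $S_n$ inside $S_{n+1}$ meet in exactly two bridge edges, and your depth count along the bottom row, supply the details the paper leaves as routine.
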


If we let $N = \left| S_n \right| = 3^n$, then the above lower bound
exceeds $N^{0.63}$. As Rothemund \cite{Roth01} has noted, a
structure of $N$ tiles that requires $\sqrt{N}$ or more tile types
for its self-assembly cannot be said to feasibly self-assemble.

\section{The Fibered Sierpinski Triangle ${\rm{ {\bf T}}}$}
\label{FIBERED_DEF}

\begin{figure}[htp]
\begin{center}
\includegraphics[height=4.0in]{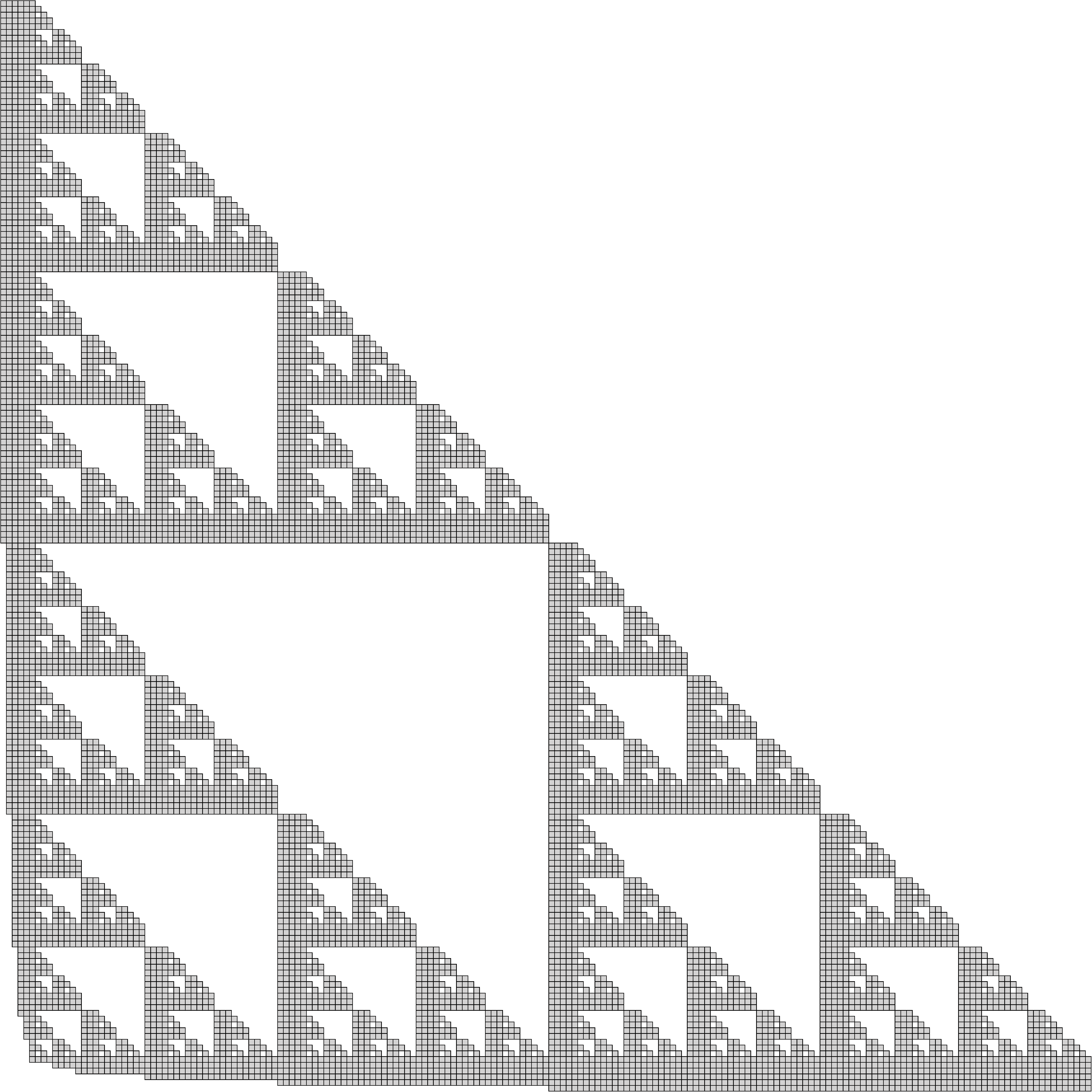}
\end{center}
\caption{\label{fibered_sierpinski} The fibered Sierpinski triangle
$\mathbf{T}.$ }
\end{figure}

We now define the fibered Sierpinski triangle and show that it has
the same zeta-dimension as the standard discrete Sierpinski
triangle.

As in Section~2, let $V = \{(1, 0), (0, 1)\}$.  Our objective is to
define sets of points $T_0, T_1, T_2, \cdots \subseteq \Z^2$, sets
$F_0, F_1, F_2, \cdots \subseteq \Z^2$, and functions $l, f, t : \N
\rightarrow \N$ with the following intuitive meanings.
\begin{enumerate}
\setlength{\itemsep}{6pt}
\item $T_i$ is the $i^{\rm{th}}$ stage of our construction of the fibered
Sierpinski triangle.
\item $F_i$ is the {\it fiber} associated with $T_i$, a thin strip of tiles
along which data moves in the self-assembly process of Section~5.  It is the
smallest set whose union with $T_i$ has a vertical left edge and a horizontal
bottom edge, together with one additional layer added to these two now-straight
edges.
\item $l(i)$ is the length of (number of tiles in) the left (or bottom) edge of
$T_i \cup F_i$.
\item $f(i) = \left| F_i \right|$.
\item $t(i) = \left|T_i\right|$.
\end{enumerate}
These five entities are defined recursively by the equations
\begin{alignat*}{2}
&T_0 = S_2 \text{ (stage 2 in the construction of $S$)}, \\
&F_0 = \left( \left\{ -1 \right\} \times  \left\{ -1, 0, 1, 2, 3 \right\}
\right) \cup \left( \left\{-1, 0, 1, 2, 3 \right\} \times \left\{-1\right\}
\right), \\
&l(0) = 5, \\
&f(0) = 9, \\
&t(0) = 9, \\
&T_{i+1} = T_i \cup \left( \left( T_i \cup F_i \right) +l(i) V \right),
\tag*{(4.1)}\\
&F_{i+1} = F_i \cup \left( \left\{ -i-2 \right\} \times \left\{
-i-2, -i-1,
\cdots , l(i+1)-i-3\right\}\right) \\
&\cup \left( \left\{-i-2, -i-1, \cdots, l(i+1)-i-3 \right\} \times
\left\{-i-2
\right\} \right), \tag*{(4.2)} \\
&l(i+1) = 2l(i) + 1, \\
&f(i+1) = f(i) + 2l(i+1) - 1, \\
&t(i+1) = 3t(i) + 2f(i).
\end{alignat*}
Comparing the recursions (2.1) and (4.1) shows that the sets $T_0,
T_1, T_2, \cdots$ are constructed exactly like the sets $S_0, S_1,
S_2, \cdots$, except that the fibers $F_i$ are inserted into the
construction of the sets $T_i$.  A routine induction verifies that
this recursion achieves conditions 2, 3, 4, and 5 above.  The {\it
fibered Sierpinski triangle} is the set
\begin{gather*}
{\mathbf{T}} = \bigcup_{i = 0}^{\infty} T_i \tag*{(4.3)}
\end{gather*} which is
illustrated in Figure~2. The resemblance between ${\mathbf{S}}$ and
${\mathbf{T}}$ is clear from the illustrations.  We now verify that
${\mathbf{S}}$ and ${\mathbf{T}}$ have the same zeta-dimension.

\begin{lemma} $\textmd{Dim}_{\zeta}(\bf{T}) =
\textmd{Dim}_{\zeta}(\bf{S})$.
\end{lemma}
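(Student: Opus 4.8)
The plan is to compute $\textmd{Dim}_{\zeta}(\mathbf{T})$ directly from the entropy characterization of zeta-dimension, and then observe that the answer coincides with the value $\log 3$ already computed for $\mathbf{S}$ in Section~2. By the change-of-variable remark following (2.1), the $\limsup$ in the entropy formula may be evaluated along any sequence of radii that grows at most exponentially; the natural radii here are $l(k)$, the side lengths of the stages $T_k\cup F_k$. So the crux of the proof is to estimate both $l(k)$ and $\left|\mathbf{T}_{\le l(k)}\right|$ as $k\to\infty$.

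First I would solve the recursions of (4.1)--(4.2). From $l(0)=5$ and $l(i+1)=2l(i)+1$ one gets $l(i)=6\cdot 2^i-1=\Theta(2^i)$, and feeding this into $f(i+1)=f(i)+2l(i+1)-1$ gives $f(i)=\Theta(2^i)$ as well. The decisive point is that the fibers are of strictly smaller order than the bulk: since $t(i+1)=3t(i)+2f(i)$ with $f(i)=O(2^i)$ and $2^i=o(3^i)$, dividing by $3^{i+1}$ and summing the convergent geometric tail $\sum_j (2/3)^j$ shows that $t(i)/3^i$ stays between two positive constants, i.e.\ $t(i)=|T_i|=\Theta(3^i)=\Theta(|S_i|)$.

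Next I would locate $\mathbf{T}_{\le l(k)}$ within the construction. Tracking coordinates through (4.1)--(4.2), $T_i\cup F_i$ is contained in the box $[-(i+1),\,l(i)]^2$ (the fibers only ever reach $x=-(i+1)$ or $y=-(i+1)$), while every newly added block satisfies $T_{i+1}\setminus T_i\subseteq\bigl((T_i\cup F_i)+l(i)V\bigr)$ and hence lies at $L^1$-distance at least $l(i)-(i+1)=\Omega(2^i)$ from the origin. Combining these two facts, there is a constant $c$ (in fact $c=1$ works for large $k$) such that $T_{k-c}\subseteq\mathbf{T}_{\le l(k)}\subseteq T_{k+c}$: the first inclusion because $T_{k-c}$ fits inside a box, hence inside an $L^1$-ball, of radius $<l(k)$; the second because no block created after stage $k+c$ can reach the $L^1$-ball of radius $l(k)$. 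Therefore $\left|\mathbf{T}_{\le l(k)}\right|$ lies between $t(k-c)$ and $t(k+c)$, so $\left|\mathbf{T}_{\le l(k)}\right|=\Theta(3^k)$.

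Finally I would assemble the pieces. Along the radii $l(k)$,
\[
\frac{\log\left|\mathbf{T}_{\le l(k)}\right|}{\log l(k)}=\frac{k\log 3+O(1)}{k\log 2+O(1)}\longrightarrow\frac{\log 3}{\log 2}=\log 3\qquad(k\to\infty),
\]
and since $n\mapsto\left|\mathbf{T}_{\le n}\right|$ is nondecreasing and $l(k+1)/l(k)\to 2$ is bounded, sandwiching any $n$ with $l(k)\le n<l(k+1)$ between $\log\left|\mathbf{T}_{\le l(k)}\right|/\log l(k+1)$ and $\log\left|\mathbf{T}_{\le l(k+1)}\right|/\log l(k)$ (both of which tend to $\log 3$, as $\log l(k+1)/\log l(k)\to 1$) shows the full $\limsup$ over all $n$ also equals $\log 3$. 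Hence $\textmd{Dim}_{\zeta}(\mathbf{T})=\log 3=\textmd{Dim}_{\zeta}(\mathbf{S})$. I expect the only non-routine step to be the coordinate bookkeeping in the third paragraph --- verifying the containments $T_{k-c}\subseteq\mathbf{T}_{\le l(k)}\subseteq T_{k+c}$ from the geometry of (4.1)--(4.2), in particular keeping track of the negative coordinates occupied by the fibers --- while the rest is elementary arithmetic with geometric series.
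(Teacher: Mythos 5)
Your proposal is correct and follows essentially the same route as the paper: both compute $\textmd{Dim}_{\zeta}(\mathbf{T})$ from the entropy characterization (2.1) using the growth rates of $t(n)$ and $l(n)$, the paper via the exact closed forms $l(i)=3\cdot 2^{i+1}-1$ and $t(i)=\frac{3}{2}(3^{i+3}-2^{i+5}+2i+11)$ and you via the equivalent $\Theta(2^i)$ and $\Theta(3^i)$ estimates. The only difference is that you explicitly justify the step the paper dismisses with ``it follows readily,'' namely the sandwich $T_{k-c}\subseteq\mathbf{T}_{\le l(k)}\subseteq T_{k+c}$ and the interpolation to arbitrary radii $n$; that bookkeeping is sound.
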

\begin{proof}
Solving the recurrences for $l$, $f$, and $t$, in that
order, gives the formulas
\begin{alignat*}{2}
&l(i)=3 \cdot 2^{i+1} - 1, \\
&f(i) = 3\left(2^{i+3} - i - 5\right), \\
&t(i) = \frac{3}{2} \left( 3^{i+3} - 2^{i+5} + 2i + 11 \right),
\end{alignat*}
which can be routinely verified by induction.  It follows readily
that \[ {\rm Dim}_\zeta \left(\mathbf{T}\right) = \limsup_{n
\rightarrow \infty} \frac{\log t(n)}{\log l(n)} = \log 3 = {\rm
Dim}_\zeta \left( \mathbf{S} \right). \]
\end{proof}

We note that the thickness $i+1$ of a fiber $F_i$ is $O(\log l(i))$,
i.e., logarithmic in the side length of $T_i$.  Hence the difference
between  $S_i$ and $T_i$ is asymptotically negligible as $i
\rightarrow \infty$. Nevertheless, we show in the next section that
$\mathbf{T}$, unlike $\mathbf{S}$, strictly self-assembles in the
Tile Assembly Model.

\section{Strict Self-Assembly of $\bf{T}$} This section is devoted to
proving our second main theorem, which is the fact that the fibered
Sierpinski triangle $\bf{T}$ strictly self-assembles in the Tile
Assembly Model. Our proof is constructive, i.e., we exhibit a
specific tile assembly system in which $\bf{T}$ strictly
self-assembles.

Our strict self-assembly of $\bf{T}$ is not based directly upon the
recursive definition (4.1). A casual inspection of
Figure~\ref{fibered_sierpinski} suggests that $\bf{T}$ can also be
regarded as a structure consisting of many horizontal and vertical
bars, with each large bar having many smaller bars perpendicular to
it. In subsection 5.1 we give a precise statement and proof of this
``bar characterization'' of $\bf{T}$, which is the basis of our
strict self-assembly. In subsections 5.2 and 5.3 we present the main
functional subsystems of our construction. This gives us a tile
assembly system $\mathcal{T}_{\bf{T}} =
(T_{\mathbf{T}},\sigma_{\mathbf{T}},\tau)$, where
\begin{enumerate}
\item[(i)] the tile set $T_{\mathbf{T}}$ consists of 51 tile types;
\item[(ii)] the seed assembly $\sigma_{\mathbf{T}}$ consists of a single `S' tile
at the origin; and
\item[(iii)] the temperature $\tau$ is 2.
\end{enumerate}
Subsection 5.4 proves that the fibered Sierpinski triangle $\bf{T}$
strictly \\self-assembles in $\mathcal{T}_{\bf{T}}$.

Throughout this section, the temperature $\tau$ is 2. Tiles are
depicted as squares whose various sides are dotted lines, solid
lines, or doubled lines, indicating whether the glue strengths on
these sides are 0, 1, or 2, respectively. Thus, for example, a tile
of the type shown in Figure~3
\begin{figure}[htp]
\begin{center}
\includegraphics[width=0.54in]{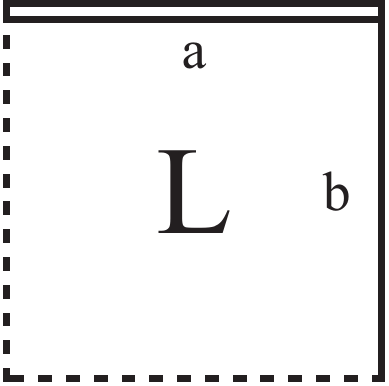}
\label{exampletile} \caption{An example tile type.}
\end{center}
\end{figure}
has glue of strength 0 on the left and bottom, glue of color `a' and
strength 2 on the top, and glue of color `b' and strength 1 on the
right. This tile also has a label `L', which plays no formal role
but may aid our understanding and discussion of the construction.

\subsection{Bar Characterization of $\bf{T}$}
We now formulate the characterization of $\bf{T}$ that guides its
strict self-assembly. At the outset, in the notation of section 4,
we focus on the manner in which the sets $T_i \cup F_i$ can be
constructed from horizontal and vertical bars. Recall that
$$
l(i) = 3\cdot 2^{i+1}-1
$$
is the length of (number of tiles in) the left or bottom edge of
$T_i \cup F_i$.

\begin{definition}
Let $-1 \leq i \in \mathbb{Z}$.
\begin{enumerate}
\item The $S_i$-{\it square} is the set
$$
S_i = \{-i-1,\ldots, 0\}\times\{-i-1,\ldots,0\}.
$$

\item The $X_i$-{\it bar} is the set
$$
X_i = \{1,\ldots,l(i)-i-2\}\times\{-i-1,\ldots,0\}.
$$

\item The $Y_i$-{\it bar} is the set
$$
Y_i = \{-i-1,\ldots,0\}\times\{1,\ldots,l(i)-i-2\}.
$$
\end{enumerate}
\end{definition}

It is clear that the set
$$
S_i \cup X_i\cup Y_i
$$
is the ``outer framework'' of $T_i \cup F_i$. Our attention thus
turns to the manner in which smaller and smaller bars are
recursively attached to this framework.

We use the {\it ruler function}
$$
\rho: \mathbb{Z}^+ \rightarrow \mathbb{N}
$$
defined by the recurrence
\begin{eqnarray*}
\rho(2k+1) & = & 0, \\
\rho(2k) & = & \rho(k) + 1
\end{eqnarray*}
for all $k\in \mathbb{N}$. It is easy to see that $\rho(n)$ is the
(exponent of the) largest power of 2 that divides $n$. Equivalently,
$\rho(n)$ is the number of 0's lying to the right of the rightmost 1
in the binary expansion of $n$ \cite{GrKnPa94}. An easy induction
can be used to establish the following observation.

\begin{observation}
\label{rho_fact} For all $n \in \mathbb{N}$,
$$
\sum_{j=1}^{2^{n+1}-1}{\rho(j)} = 2^{n+1}-n-2.
$$
\end{observation}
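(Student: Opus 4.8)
The plan is to prove the identity by induction on $n$, exploiting the parity structure built into the recurrence defining $\rho$. The base case $n = 0$ is immediate: the sum $\sum_{j=1}^{2^{1}-1}\rho(j)$ is just $\rho(1) = 0$, which matches $2^{1} - 0 - 2 = 0$.

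For the inductive step, I would assume the identity holds for $n$ and consider $\sum_{j=1}^{2^{n+2}-1}\rho(j)$. The key move is to split this sum according to the parity of $j$. Every odd index $j = 2k+1$ contributes $\rho(2k+1) = 0$, so the entire sum collapses onto the even indices $j = 2k$ with $1 \le k \le 2^{n+1}-1$. Applying the recurrence $\rho(2k) = \rho(k) + 1$ term by term rewrites the sum as $\sum_{k=1}^{2^{n+1}-1}\bigl(\rho(k) + 1\bigr) = \left(\sum_{k=1}^{2^{n+1}-1}\rho(k)\right) + \bigl(2^{n+1}-1\bigr)$.

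Now the inner sum is exactly the quantity governed by the inductive hypothesis, namely $2^{n+1} - n - 2$. Substituting gives $\bigl(2^{n+1}-n-2\bigr) + \bigl(2^{n+1}-1\bigr) = 2^{n+2} - n - 3 = 2^{n+2} - (n+1) - 2$, which is the claimed value at $n+1$, completing the induction.

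There is no real obstacle here; the only point requiring a moment's care is the bookkeeping on the index ranges when splitting by parity — verifying that the even indices $j$ in $\{1,\dots,2^{n+2}-1\}$ are precisely $\{2k : 1 \le k \le 2^{n+1}-1\}$ — after which the recurrence and the inductive hypothesis do all the work. The whole argument is a short displayed computation, so I would present it inline rather than belaboring it.
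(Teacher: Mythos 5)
Your proof is correct, and it is precisely the "easy induction" the paper alludes to without spelling out: the paper offers no written proof, only the remark that an easy induction establishes the observation. The base case, the parity split collapsing the sum onto even indices, and the application of $\rho(2k)=\rho(k)+1$ all check out, including the index-range bookkeeping.
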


Using the ruler function, we define the function
$$
\theta: \mathbb{Z}^+ \rightarrow \mathbb{Z}^+
$$
by the recurrence
\begin{eqnarray*}
\theta(1) & = & 2, \\
\theta(j+1) & = & \theta(j)+\rho(j+1)+2
\end{eqnarray*}
for all $j\in \mathbb{Z}^+$.

We now use the function $\theta$ to define the points at which
smaller bars are attached to the $X_i$- and $Y_i$-bars.

\begin{definition} \textmd{ }
\begin{enumerate}
\item The $j^{\textmd{th}}$ $\theta$-{\it point} of $X_i$ is the
point
$$
\theta_j(X_i) = (\theta(j),1),
$$
lying just above the $X_i$-bar.
\item The $j^{\textmd{th}}$ $\theta$-{\it point} of $Y_i$ is the
point
$$
\theta_j(Y_i) = (1,\theta(j)),
$$
lying just to the right of the $Y_i$-bar.
\end{enumerate}
\end{definition}

The following recursion attaches smaller bars to larger bars in a
recursive fashion.

\begin{definition}
The $\theta$-{\it closures} of the bars $X_i$ and $Y_i$ are the sets
$\theta(X_i)$ and $\theta(Y_i)$ defined by the mutual recursion
\begin{eqnarray*}
\theta\left(X_{-1}\right) & = & X_{-1} \\
\theta\left(Y_{-1}\right) & = & Y_{-1} \\
\theta(X_i) & = & X_i \cup \bigcup_{j=1}^{2^{i+1}-1}{\left(\theta_j(X_i)+\theta\left(Y_{\rho(j)-1}\right)\right)}, \\
\theta(Y_i) & = & Y_i \cup \bigcup_{j=1}^{2^{i+1}-1}{\left(\theta_j(Y_i)+\theta\left(X_{\rho(j)-1}\right)\right)}, \\
\end{eqnarray*}
for all $i\in \mathbb{N}$.
\end{definition}

This definition, along with the symmetry of $\rho$, admit the
following characterizations of $\theta(X_i)$ and $\theta(Y_i)$.

\begin{observation} Let $0\leq i \in \mathbb{N}$.
\begin{enumerate}
\item \begin{eqnarray*}
\theta\left(X_{i+1}\right) & = & \theta(X_i) \cup \left(\left(S_i
\cup \theta(X_i) \cup \theta(Y_i)\right) + (l(i),0)\right) \\
              &  & \quad \cup \left( \{1,\ldots l(i+1)-i-3\} \times \{-i-2\}\right) \\
\end{eqnarray*}

\item \begin{eqnarray*}
\theta\left(Y_{i+1}\right) & = & \theta(Y_i) \cup \left(\left(S_i
\cup \theta(X_i) \cup \theta(Y_i)\right) + (0,l(i))\right) \\
 &  & \quad\cup
\left( \{-i-2\} \times \{ 1,\ldots l(i+1)-i-3\}\right) \\
\end{eqnarray*}
\end{enumerate}
\end{observation}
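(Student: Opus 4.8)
The plan is to prove both parts by a single ``one-step unfolding'' of the defining mutual recursion for the $\theta$-closure, so the substance is (i) two small arithmetic identities about $\rho$ and $\theta$ and (ii) some careful index bookkeeping. Note first that the map $(a,b)\mapsto(b,a)$ exchanges $X_k\leftrightarrow Y_k$, fixes each $S_k$, and exchanges $\theta_j(X_k)\leftrightarrow\theta_j(Y_k)$, and it carries the recursion for $\theta(X_\cdot)$ into the one for $\theta(Y_\cdot)$; hence a routine simultaneous induction on $k$ shows that $\theta(Y_k)$ is the transpose of $\theta(X_k)$, and part~2 of the observation follows from part~1 by transposing everything. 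So I would concentrate on part~1.

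First I would record the arithmetic facts. Since $\rho(n)$ is the number of trailing zeros in the binary expansion of $n$, for every $m$ with $1\le m\le 2^{i+1}-1$ the numbers $m$ and $2^{i+1}+m$ have the same number of trailing zeros, so
\[
\rho\!\left(2^{i+1}+m\right)=\rho(m), \qquad\text{while}\qquad \rho\!\left(2^{i+1}\right)=i+1 .
\]
From $\theta(1)=2$ and $\theta(j+1)=\theta(j)+\rho(j+1)+2$ one telescopes to $\theta(n)=2n+\sum_{k=1}^{n}\rho(k)$; plugging in Observation~\ref{rho_fact} at $n=i$ together with $\rho(2^{i+1})=i+1$ gives $\sum_{k=1}^{2^{i+1}}\rho(k)=2^{i+1}-1$, whence
\[
\theta\!\left(2^{i+1}\right)=2^{i+2}+2^{i+1}-1=3\cdot 2^{i+1}-1=l(i).
\]
Feeding $\rho(2^{i+1}+m)=\rho(m)$ back into the recurrence and telescoping once more then yields $\theta(2^{i+1}+m)=l(i)+\theta(m)$ for $1\le m\le 2^{i+1}-1$. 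In particular the $\theta$-points with $1\le j\le 2^{i+1}-1$ all lie in columns strictly left of $l(i)$, the point $j=2^{i+1}$ sits in column exactly $l(i)$, and the points with $2^{i+1}<j\le 2^{i+2}-1$ are the first-half $\theta$-points translated by $(l(i),0)$.

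Next I would write out $\theta(X_{i+1})=X_{i+1}\cup\bigcup_{j=1}^{2^{i+2}-1}\bigl(\theta_j(X_{i+1})+\theta(Y_{\rho(j)-1})\bigr)$ and split the big union at $j=2^{i+1}$ into three blocks, in parallel with the decomposition
\[
X_{i+1}=X_i\cup\bigl(S_i+(l(i),0)\bigr)\cup\bigl(X_i+(l(i),0)\bigr)\cup\bigl(\{1,\dots,l(i+1)-i-3\}\times\{-i-2\}\bigr),
\]
which is immediate from $l(i+1)=2l(i)+1$ (so that the three column ranges $[1,l(i)-i-2]$, $[l(i)-i-1,l(i)]$, $[l(i)+1,2l(i)-i-2]$ partition $[1,l(i+1)-i-3]$). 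In Block~$1$ ($1\le j\le 2^{i+1}-1$) we have $\theta_j(X_{i+1})=\theta_j(X_i)$ and $\theta(Y_{\rho(j)-1})$ is independent of the ambient index, so this block is verbatim $\bigcup_{j=1}^{2^{i+1}-1}\bigl(\theta_j(X_i)+\theta(Y_{\rho(j)-1})\bigr)$, and joining it to the $X_i$ part of $X_{i+1}$ reconstitutes $\theta(X_i)$ by the level-$i$ recursion. In Block~$3$ ($2^{i+1}<j\le 2^{i+2}-1$), writing $j=2^{i+1}+m$ and using $\rho(2^{i+1}+m)=\rho(m)$ and $\theta(2^{i+1}+m)=l(i)+\theta(m)$, the $j$-th summand is the $m$-th summand of the level-$i$ recursion translated by $(l(i),0)$; joining this block to the $X_i+(l(i),0)$ part of $X_{i+1}$ gives $\theta(X_i)+(l(i),0)$. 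In Block~$2$ ($j=2^{i+1}$), since $\theta(2^{i+1})=l(i)$ and $\rho(2^{i+1})-1=i$, we attach a translated copy of $\theta(Y_i)$ at the central $\theta$-point, which together with the central square $S_i+(l(i),0)$ supplies the remaining $S_i$- and $\theta(Y_i)$-summands of $(S_i\cup\theta(X_i)\cup\theta(Y_i))+(l(i),0)$. Adjoining the leftover new bottom row $\{1,\dots,l(i+1)-i-3\}\times\{-i-2\}$ then completes the claimed identity.

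The step I expect to be the main obstacle is exactly this last regrouping: one has to verify that the three index blocks line up column-for-column with the three column ranges of $X_{i+1}$, and that the ``middle bulge'' and the translated right half do not spill back into the left half --- which is precisely what the equality $\theta(2^{i+1})=l(i)$ (equivalently, Observation~\ref{rho_fact}) guarantees. Pinning down the exact vertical and horizontal offsets of the attached sub-closures in Block~$2$ is the one place where genuine care is needed; the rest is a mechanical manipulation of translates and unions.
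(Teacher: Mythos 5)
Your proof is correct and is exactly the argument the paper leaves implicit: the paper states this as an unproved observation said to follow from the definition together with the symmetry of $\rho$, and your three-block split of the union at $j=2^{i+1}$, driven by the identities $\rho(2^{i+1}+m)=\rho(m)$, $\theta(2^{i+1})=l(i)$ and $\theta(2^{i+1}+m)=l(i)+\theta(m)$ (the last two via Observation~\ref{rho_fact} and the telescoped formula $\theta(n)=2n+\sum_{k\le n}\rho(k)$), is that computation carried out in full, with part~2 correctly reduced to part~1 by transposition. The one caveat you flag in Block~2 is real but is the paper's issue rather than a gap in your argument: with the definitions taken literally, $\theta_{2^{i+1}}(X_{i+1})+\theta(Y_i)=\theta(Y_i)+(l(i),1)$ rather than $\theta(Y_i)+(l(i),0)$, an off-by-one in the recursion for the $\theta$-closure (the sub-bar $Y_k$ already begins at row $1$, so translating it to the $\theta$-point $(\theta(j),1)$ would leave a one-cell gap above the host bar), which must be silently corrected --- e.g.\ by translating by $(\theta(j),0)$ --- for the observation, and indeed for Lemma~5.3, to hold as stated.
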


We have the following characterization of the sets $T_i \cup F_i$.

\begin{lemma}
For all $i\in \mathbb{N}$,
\begin{gather*}
T_i \cup F_i = S_i \cup \theta(X_i) \cup \theta(Y_i).
\end{gather*}
\end{lemma}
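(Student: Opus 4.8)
The plan is to prove the identity by induction on $i$, letting the preceding Observation --- which expresses $\theta(X_{i+1})$ and $\theta(Y_{i+1})$ in terms of $S_i$, $\theta(X_i)$, and $\theta(Y_i)$ --- together with the recursions (4.1) and (4.2) for $T_{i+1}$ and $F_{i+1}$ do essentially all the work. Throughout, write $W_i = S_i \cup \theta(X_i) \cup \theta(Y_i)$, so that the claim reads $W_i = T_i \cup F_i$.

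For the base case $i=0$ I would simply check $T_0 \cup F_0 = W_0$ directly: $T_0 = S_2$ and $F_0$ are the nine-point sets given explicitly, and unwinding the $\theta$-closure recursion at the degenerate bottom level ($\theta(X_{-1}) = X_{-1}$ and $\theta(Y_{-1}) = Y_{-1}$) pins $\theta(X_0)$ and $\theta(Y_0)$ down to small explicit sets, after which one confirms cell by cell that the two unions coincide.

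For the inductive step, assume $W_i = T_i \cup F_i$. Unwinding (4.1) gives $T_{i+1} = T_i \cup \bigl((T_i\cup F_i)+(l(i),0)\bigr) \cup \bigl((T_i\cup F_i)+(0,l(i))\bigr)$, and (4.2) adjoins to the fiber the two new straight layers $L_x = \{-i-2,\dots,l(i+1)-i-3\}\times\{-i-2\}$ and $L_y = \{-i-2\}\times\{-i-2,\dots,l(i+1)-i-3\}$; invoking the induction hypothesis, this yields $T_{i+1}\cup F_{i+1} = W_i \cup \bigl(W_i+(l(i),0)\bigr) \cup \bigl(W_i+(0,l(i))\bigr) \cup L_x \cup L_y$. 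On the other side, the Observation (together with $S_i\cup\theta(X_i)\cup\theta(Y_i)=W_i$ and $\theta(X_i)\cup\theta(Y_i)\subseteq W_i$) gives $S_{i+1}\cup\theta(X_{i+1})\cup\theta(Y_{i+1}) = S_{i+1}\cup W_i \cup \bigl(W_i+(l(i),0)\bigr) \cup \bigl(W_i+(0,l(i))\bigr) \cup A_x \cup A_y$, where $A_x = \{1,\dots,l(i+1)-i-3\}\times\{-i-2\}$ and $A_y = \{-i-2\}\times\{1,\dots,l(i+1)-i-3\}$. Comparing the two, it then suffices to note that $L_x$ is the disjoint union of the bottom edge $\{-i-2,\dots,0\}\times\{-i-2\}$ of the ``collar'' $S_{i+1}\setminus S_i$ with $A_x$, that $L_y$ is likewise the left edge of that collar together with $A_y$, and that $S_i\subseteq W_i$; these facts give $W_i\cup L_x\cup L_y = S_{i+1}\cup W_i\cup A_x\cup A_y$, so that both sides of the desired identity reduce to the single union $W_i\cup\bigl(W_i+(l(i),0)\bigr)\cup\bigl(W_i+(0,l(i))\bigr)\cup A_x\cup A_y\cup(S_{i+1}\setminus S_i)$, and the induction is complete.

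I expect the main obstacle to be bookkeeping rather than ideas. All the genuine combinatorics --- the trailing-zeros structure of the ruler function $\rho$, its symmetry, the closed form of $\theta$, and Observation~\ref{rho_fact} --- is already packaged into the preceding Observation, so the one thing that really demands care here is verifying that the three pieces of $\theta(X_{i+1})$ (the smaller closure $\theta(X_i)$, the translated copy of $W_i$, and the fresh fiber segment $A_x$), and symmetrically those of $\theta(Y_{i+1})$, dovetail exactly with the pieces of $T_{i+1}\cup F_{i+1}$ produced by (4.1) and (4.2) --- in particular that the collar $S_{i+1}\setminus S_i$ fills precisely the gap between the square $S_{i+1}$ and the newly added fiber layer, with neither overlap nor gap in the index ranges. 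The base case $i=0$ also deserves a little attention, because of the slightly delicate degenerate bars at level $i=-1$.
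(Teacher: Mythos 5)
Your proposal is correct and follows essentially the same route as the paper: induction on $i$, with the inductive step chaining (4.1), (4.2), the induction hypothesis, and the preceding Observation expressing $\theta(X_{i+1})$ and $\theta(Y_{i+1})$ in terms of $S_i$, $\theta(X_i)$, $\theta(Y_i)$. You simply make explicit the bookkeeping (splitting the new fiber layers $L_x, L_y$ into the collar edges of $S_{i+1}$ plus $A_x, A_y$) that the paper compresses into a single appeal to that Observation.
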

\begin{proof}
We proceed by induction on $i$, and note that the case when $i=0$ is
trivial. Assume that, for all $i \in \mathbb{N}$, the lemma holds.
Then we have
\begin{eqnarray*}
T_{i+1} \cup F_{i+1} & \stackrel{\textmd{(4.1)}}{=} & T_i \cup \left(\left(T_i \cup F_i \right) + l(i)V\right) \cup F_{i+1} \\
                     & \stackrel{\textmd{(4.2)}}{=} & \left( T_i \cup F_i \right) \cup \left( \left( T_i \cup F_i \right) + l(i)V\right) \cup \\
                     &   & \quad \left( \{-i-2\} \times \{-i-2,\ldots,l(i+1)-i-3\}\right) \cup \\
                     &   & \quad\quad \left( \{-i-2,\ldots,l(i+1)-i-3\} \times \{-i-2\} \right) \\
                     & \stackrel{\textmd{Ind Hyp}}{=} &  \left( S_i \cup \theta(X_i) \cup \theta(Y_i) \right) \cup \left( \left( S_i \cup \theta(X_i) \cup \theta(Y_i) \right) + l(i)V\right) \cup \\
                     &   & \quad \left( \{-i-2\} \times \{-i-2,\ldots,l(i+1)-i-3\}\right) \cup \\
                     &   & \quad\quad \left( \{-i-2,\ldots,l(i+1)-i-3\} \times \{-i-2\} \right) \\
                     & \stackrel{\textmd{Observation 5.2}}{=} & S_{i+1} \cup \theta(X_{i+1}) \cup \theta(Y_{i+1}). \\
\end{eqnarray*}
\end{proof}
We now shift our attention to the global structure of the set
$\bf{T}$.

\begin{definition} \textmd{ }
\begin{enumerate}
\item The $x$-{\it axis} of $\bf{T}$ is the set
$$
X = \{ (m,n) \in \mathbf{T} \; | \; m>0, \textmd{ and } n\leq 0\}.
$$
\item The $y$-{\it axis} of $\mathbf{T}$ is the set
$$
Y = \{ (m,n) \in \mathbf{T} \; | \; m\leq 0, \textmd{ and } n>0 \}.
$$
\end{enumerate}
\end{definition}

Intuitively, the $x$-axis of $\bf{T}$ is the part of $\bf{T}$ that
is a ``gradually thickening bar'' lying on and below the (actual)
$x$-axis in $\mathbb{Z}^2$. (see Figure~\ref{fibered_sierpinski}.)
For technical convenience, we have omitted the origin from this set.
Similar remarks apply to the $y$-axis of $\bf{T}$.

Define the sets
\begin{eqnarray*}
\widetilde{X}_{-1} & = & \{(1,0),(2,0),(3,0) \}, \\
\widetilde{Y}_{-1} & = & \{(0,1),(0,2),(0,3) \}. \\
\end{eqnarray*}
For each $i \in \mathbb{N}$, define the translations
\begin{eqnarray*}
S_i^{\rightarrow} & = & (l(i),0)+S_i, \\
S_i^{\uparrow} & = & (0,l(i))+S_i, \\
X_i^{\rightarrow} & = & (l(i),0)+X_i, \\
Y_i^{\uparrow} & = & (0,l(i))+Y_i
\end{eqnarray*}
of $S_i$, $X_i$, and $Y_i$. It is clear by inspection that $X$ is
the disjoint union of the sets
$$
\widetilde{X}_{-1},S_0^{\rightarrow},X_0^{\rightarrow},S_1^{\rightarrow},X_1^{\rightarrow},S_2^{\rightarrow},X_2^{\rightarrow},\ldots,
$$
which are written in their left-to-right order of position in $X$.
More succinctly, we have the following.

\begin{observation} \textmd{ }
\begin{enumerate}
\item $\displaystyle X = \widetilde{X}_{-1} \cup
\bigcup_{i=0}^{\infty}{\left(S_i^{\rightarrow}\cup
X_i^{\rightarrow}\right)}$.

\item $\displaystyle Y = \widetilde{Y}_{-1} \cup
\bigcup_{i=0}^{\infty}{\left(S_i^{\uparrow} \cup
Y_i^{\uparrow}\right)}$.
\end{enumerate}
\label{observationwhatever} Moreover, both of these are disjoint
unions.
\end{observation}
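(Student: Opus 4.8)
The plan is to prove part~1 directly from the recursion $\mathbf{T}=\bigcup_{i}T_i$ and then obtain part~2 from part~1 by the reflection symmetry of the whole construction. The engine is a lemma describing which part of $T_i\cup F_i$ lies weakly below the horizontal axis: I would first show, by induction on $i$, that $(T_i\cup F_i)\cap\{(m,n):n\le 0\}=S_i\cup X_i$, and that this set is in fact the solid rectangle $\{-i-1,\ldots,l(i)-i-2\}\times\{-i-1,\ldots,0\}$. The base case $i=0$ is a direct check against the definitions of $T_0=S_2$, $F_0$, $S_0$, and $X_0$. For the step, I would expand $T_{i+1}\cup F_{i+1}$ via (4.1) and (4.2) as $(T_i\cup F_i)\cup\bigl((T_i\cup F_i)+(l(i),0)\bigr)\cup\bigl((T_i\cup F_i)+(0,l(i))\bigr)$ together with the two new fiber segments of (4.2), intersect with $\{n\le 0\}$, and apply the induction hypothesis. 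The up-shifted copy $(T_i\cup F_i)+(0,l(i))$ contributes nothing, because the most negative second coordinate occurring in $T_i\cup F_i$ is $-i-1$ while $l(i)=3\cdot 2^{i+1}-1>i+1$; the right-shifted copy contributes $(S_i\cup X_i)+(l(i),0)$; and a short interval computation (using $l(i+1)=2l(i)+1$) shows that the rectangle, its right-translate, the new vertical column at $m=-i-2$, and the new horizontal row at $n=-i-2$ fit together exactly into $\{-i-2,\ldots,l(i+1)-i-3\}\times\{-i-2,\ldots,0\}=S_{i+1}\cup X_{i+1}$. (Alternatively, this lemma can be read off from the bar characterization $T_i\cup F_i=S_i\cup\theta(X_i)\cup\theta(Y_i)$ proved above, once one observes that every point of $\theta(X_i)\setminus X_i$ and of $\theta(Y_i)$ has positive second coordinate.)

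With this lemma in hand, I would prove by induction on $i$ that $T_i\cap\{(m,n):m>0,\ n\le 0\}=\widetilde{X}_{-1}\cup\bigcup_{j=0}^{i-1}\bigl(S_j^{\rightarrow}\cup X_j^{\rightarrow}\bigr)$. The base case is the direct observation that $S_2$ meets the region $\{m>0,\ n\le 0\}$ exactly in $\{(1,0),(2,0),(3,0)\}=\widetilde{X}_{-1}$. For the step, expand $T_{i+1}$ via (4.1) and intersect with the region: the summand $T_i$ is handled by the induction hypothesis; the summand $(T_i\cup F_i)+(0,l(i))$ lies entirely in $\{n>0\}$ (again because $l(i)>i+1$) and drops out; and the summand $(T_i\cup F_i)+(l(i),0)$ lies entirely in $\{m>0\}$ (the most negative first coordinate in $T_i\cup F_i$ is also $-i-1<l(i)$), so its contribution is $\bigl((T_i\cup F_i)\cap\{n\le 0\}\bigr)+(l(i),0)=(S_i\cup X_i)+(l(i),0)=S_i^{\rightarrow}\cup X_i^{\rightarrow}$ by the lemma. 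Since $\mathbf{T}=\bigcup_i T_i$ is an increasing union and $X=\mathbf{T}\cap\{m>0,\ n\le 0\}$, taking the union over $i$ yields part~1.

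For disjointness I would simply record the first-coordinate ranges: $m\in\{1,2,3\}$ on $\widetilde{X}_{-1}$, $m\in\{l(i)-i-1,\ldots,l(i)\}$ on $S_i^{\rightarrow}$, and $m\in\{l(i)+1,\ldots,l(i+1)-i-3\}$ on $X_i^{\rightarrow}$. These are consecutive, non-overlapping blocks of $\mathbb{Z}^+$ listed in increasing order (the only inequalities needed are $3<l(0)-1$, the trivial $l(i)<l(i)+1$ and $l(i+1)-i-3<l(i+1)-i-2$, and monotonicity of $l$), which gives pairwise disjointness and also justifies the left-to-right ordering asserted in the paragraph preceding the observation. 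Finally, part~2 follows from part~1 by applying the reflection $(m,n)\mapsto(n,m)$: this map fixes $T_0=S_2$ and $F_0$ (both symmetric) and commutes with the recursions (4.1) and (4.2), since $V=\{(1,0),(0,1)\}$ is symmetric and the fiber pieces in (4.2) are added in symmetric pairs; hence it is an automorphism of $\mathbf{T}$ that carries $\{m>0,\ n\le 0\}$ to $\{m\le 0,\ n>0\}$, $\widetilde{X}_{-1}$ to $\widetilde{Y}_{-1}$, $S_i^{\rightarrow}$ to $S_i^{\uparrow}$, and $X_i^{\rightarrow}$ to $Y_i^{\uparrow}$.

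I expect the only real work to be the inductive step of the lemma in the first paragraph: keeping track of the four coordinate intervals and verifying that they abut exactly, with neither a gap nor an overlap, so as to reconstitute $S_{i+1}\cup X_{i+1}$. Everything downstream of that lemma is routine substitution into the definitions.
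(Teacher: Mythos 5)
Your proof is correct. The paper offers no argument for this observation at all --- it is asserted to be ``clear by inspection'' --- so there is no official proof to compare against; your two nested inductions supply exactly the verification the paper omits. I checked the key arithmetic: the rectangle lemma $(T_i\cup F_i)\cap\{n\le 0\}=\{-i-2+1,\ldots,l(i)-i-2\}\times\{-i-1,\ldots,0\}$ does propagate through (4.1) and (4.2) as you describe, the abutment $2l(i)-i-2=l(i+1)-i-3$ makes the blocks $\widetilde{X}_{-1},S_0^{\rightarrow},X_0^{\rightarrow},S_1^{\rightarrow},\ldots$ occupy consecutive, pairwise disjoint first-coordinate intervals, and the reflection $(m,n)\mapsto(n,m)$ is indeed an automorphism of the construction carrying part~1 to part~2.
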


In light of Observation~\ref{observationwhatever}, it is convenient
to define, for each $-1\leq n \in \mathbb{Z}$, the initial segment
$$
\widetilde{X}_n = \widetilde{X}_{-1} \cup
\bigcup_{i=0}^{n}{\left(S_i^{\rightarrow} \cup
X_i^{\rightarrow}\right)}
$$
of $X$ and the initial segment
$$
\widetilde{Y}_n = \widetilde{Y}_{-1} \cup
\bigcup_{i=0}^{n}{\left(S_i^{\uparrow} \cup Y_i^{\uparrow}\right)}
$$
of $Y$. (Note that this is consistent with earlier usage when $n =
-1$.)

The following definition specifies the manner in which bars are
recursively attached to the $x$- and $y$-axes of $\bf{T}$.

\begin{definition} Let $j \in \mathbb{Z}^+$.
\begin{enumerate}
\item The $j^{\textmd{th}}$ $\theta$-{\it point} of $X$ is the point
$$
\theta_j(X) = (\theta(j),1)
$$
lying just above $X$.
\item The $j^{\textmd{th}}$ $\theta$-{\it point} of $Y$ is the point
$$
\theta_j(Y) = (1,\theta(j))
$$
lying just to the right of $Y$.
\end{enumerate}
\end{definition}

\begin{definition}
For all $-1\leq n \in \mathbb{Z}$, the $\theta$-{\it closures} of
the initial segment of the axes $\widetilde{X}_n$ and
$\widetilde{Y}_n$ are the sets
$$
\theta\left(\widetilde{X}_n\right) = \widetilde{X}_n \cup
\bigcup_{j=1}^{2^{n+2}-1}{\left(\theta_j\left(X\right)+\theta\left(Y_{\rho(j)-1}\right)\right)}
$$
and
$$
\theta\left(\widetilde{Y}_n\right) = \widetilde{Y}_n \cup
\bigcup_{j=1}^{2^{n+2}-1}{\left(\theta_j(Y)+\theta\left(X_{\rho(j)-1}\right)\right)},
$$
respectively.
\end{definition}

The following observation is an immediate consequence of the
previous definition.

\begin{observation} Let $0 \leq n \in \mathbb{N}$.
\begin{enumerate}
\item $\theta\left(\widetilde{X}_n\right) =
\theta\left(\widetilde{X}_{n-1}\right) \cup \left(\left(S_n \cup
\theta\left(X_n\right) \cup
\theta\left(Y_n\right)\right)+(l(n),0)\right)$.
\item $\theta\left(\widetilde{Y}_n\right) =
\theta\left(\widetilde{Y}_{n-1}\right) \cup \left(\left(S_n \cup
\theta\left(X_n\right) \cup
\theta\left(Y_n\right)\right)+(0,l(n))\right)$.
\end{enumerate}
\end{observation}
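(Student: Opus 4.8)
The plan is to prove item~1 by directly unwinding the definition of the $\theta$-closure of an initial segment; item~2 then follows verbatim after interchanging the two coordinate axes (and $X_k$ with $Y_k$) throughout, since every definition in this subsection is symmetric under that interchange. Fix $n\ge 0$. By definition,
$$
\theta\left(\widetilde{X}_n\right)=\widetilde{X}_n\cup\bigcup_{j=1}^{2^{n+2}-1}\left(\theta_j(X)+\theta\left(Y_{\rho(j)-1}\right)\right),\qquad
\theta\left(\widetilde{X}_{n-1}\right)=\widetilde{X}_{n-1}\cup\bigcup_{j=1}^{2^{n+1}-1}\left(\theta_j(X)+\theta\left(Y_{\rho(j)-1}\right)\right),
$$
and, by Observation~\ref{observationwhatever} together with the definitions of $S_n^{\rightarrow}$ and $X_n^{\rightarrow}$, $\widetilde{X}_n=\widetilde{X}_{n-1}\cup\left((S_n\cup X_n)+(l(n),0)\right)$. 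Splitting the big union over $j$ at $j=2^{n+1}$, the terms with $1\le j\le 2^{n+1}-1$ are precisely the terms occurring in $\theta(\widetilde{X}_{n-1})$, so it remains only to show that the block $2^{n+1}\le j\le 2^{n+2}-1$, together with $S_n^{\rightarrow}$, reassembles into $\left(S_n\cup\theta(X_n)\cup\theta(Y_n)\right)+(l(n),0)$.

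The substance is a short computation about $\rho$ and $\theta$. From the recurrence for $\theta$ and $\rho(1)=0$ one gets $\theta(m)=2m+\sum_{i=1}^{m}\rho(i)$, whence Observation~\ref{rho_fact} gives $\theta(2^{n+1})=l(n)$. Since adding $2^{n+1}$ leaves the trailing zeros of any number below $2^{n+1}$ untouched, $\rho(2^{n+1}+k)=\rho(k)$ for $1\le k\le 2^{n+1}-1$, and telescoping the $\theta$-recurrence then yields $\theta(2^{n+1}+k)=l(n)+\theta(k)$ for those $k$; also $\rho(2^{n+1})=n+1$. Re-indexing the block by $j=2^{n+1}+k$, $0\le k\le 2^{n+1}-1$, and using $\theta_j(X)=(\theta(j),1)$: the single term $k=0$ is the copy of $\theta(Y_n)$ sitting on the new axis segment (since $\rho(2^{n+1})-1=n$), and for each $k\ge 1$ the term equals $(l(n),0)$ plus $\theta_k(X)+\theta(Y_{\rho(k)-1})$. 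As $\theta_k(X)=\theta_k(X_n)$ for every $k$, the definition of $\theta(X_n)$ identifies these shifted terms, together with the already-present $X_n^{\rightarrow}=X_n+(l(n),0)$, as exactly $\theta(X_n)+(l(n),0)$. Collecting the pieces, and absorbing $S_n+(l(n),0)$ into the $S_n$ summand, gives $\theta(\widetilde{X}_n)=\theta(\widetilde{X}_{n-1})\cup\left((S_n\cup\theta(X_n)\cup\theta(Y_n))+(l(n),0)\right)$, which is item~1.

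I expect the main obstacle to be exactly the two identities $\theta(2^{n+1})=l(n)$ and $\theta(2^{n+1}+k)=l(n)+\theta(k)$ (via Observation~\ref{rho_fact}), together with the careful handling of the single boundary index $j=2^{n+1}$, where $\rho$ jumps to its maximal value $n+1$ over the block and contributes the large sub-bar hanging off the newly added axis segment; this term must be peeled off before the clean reindexing $j\mapsto 2^{n+1}+k$ works for $k\ge 1$, and keeping the vertical offsets of all the attached sub-bars straight is the fiddliest bookkeeping. Everything else is pure unwinding of definitions, and item~2 then drops out by the coordinate-swap symmetry noted at the outset.
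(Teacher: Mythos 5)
The paper offers no proof of this observation at all---it is simply declared ``an immediate consequence of the previous definition''---so your unwinding supplies exactly the arithmetic the authors leave implicit, and it is correct: splitting the union over $j$ at $j=2^{n+1}$, deriving $\theta(2^{n+1})=l(n)$ and $\theta(2^{n+1}+k)=l(n)+\theta(k)$ from $\theta(m)=2m+\sum_{i=1}^{m}\rho(i)$ together with the ruler-function sum, and then matching the $k\ge 1$ terms against the shifted union defining $\theta(X_n)$ and the $k=0$ term against the shifted $\theta(Y_n)$ is precisely what makes the identity hold. The one loose point---whether the sub-bar attached at index $j=2^{n+1}$ is translated by $(l(n),0)$ as the statement asserts or by the $\theta$-point $(l(n),1)$ as a literal reading of the definition gives---is a vertical off-by-one already present in the paper's own definition of the $\theta$-closure (and in its neighboring observations), not a gap your argument introduces.
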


We have the following characterization of $T_n$.

\begin{lemma} For all $-1\leq n\in\mathbb{Z}$,
\begin{gather*}
T_{n+1} = \{(0,0)\} \cup
\theta\left(\widetilde{X}_n\right)\cup\theta\left(\widetilde{Y}_n\right).
\end{gather*}
\end{lemma}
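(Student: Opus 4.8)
The plan is a straightforward induction on $n$, with base case $n=-1$. In the base case the left side is $T_0 = S_2$, the nine-point stage~$2$ of the standard construction, and the right side unwinds to $\{(0,0)\}\cup\theta(\widetilde{X}_{-1})\cup\theta(\widetilde{Y}_{-1})$; since $\widetilde{X}_{-1}$, $\widetilde{Y}_{-1}$ and the trivial closures $\theta(X_{-1})=X_{-1}$, $\theta(Y_{-1})=Y_{-1}$ are all explicit small sets, and the union in the definition of $\theta(\widetilde{X}_{-1})$ ranges over a single value of $j$, this reduces to a finite check that the two sets coincide. I would dispatch it by direct inspection, exactly as the analogous Lemma on $T_i\cup F_i$ handled its $i=0$ base case.

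For the inductive step, assume $T_{n+1}=\{(0,0)\}\cup\theta(\widetilde{X}_n)\cup\theta(\widetilde{Y}_n)$ and aim at the identity for $T_{n+2}$. I would start from the right-hand side $\{(0,0)\}\cup\theta(\widetilde{X}_{n+1})\cup\theta(\widetilde{Y}_{n+1})$ and rewrite it in three moves. First, apply the Observation that decomposes $\theta(\widetilde{X}_{n+1})$ and $\theta(\widetilde{Y}_{n+1})$ (legitimate since $n+1\geq 0$): each equals the previous-stage closure $\theta(\widetilde{X}_n)$, resp.\ $\theta(\widetilde{Y}_n)$, unioned with a translate of $S_{n+1}\cup\theta(X_{n+1})\cup\theta(Y_{n+1})$ by $(l(n+1),0)$, resp.\ $(0,l(n+1))$. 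Second, invoke the earlier bar-characterization Lemma (legitimate since $n+1\in\mathbb{N}$) to replace $S_{n+1}\cup\theta(X_{n+1})\cup\theta(Y_{n+1})$ by $T_{n+1}\cup F_{n+1}$. Third, regroup: the right-hand side becomes $\bigl[\{(0,0)\}\cup\theta(\widetilde{X}_n)\cup\theta(\widetilde{Y}_n)\bigr]\cup\bigl((T_{n+1}\cup F_{n+1})+(l(n+1),0)\bigr)\cup\bigl((T_{n+1}\cup F_{n+1})+(0,l(n+1))\bigr)$; the bracketed set is $T_{n+1}$ by the inductive hypothesis, and the two translates together are exactly $(T_{n+1}\cup F_{n+1})+l(n+1)V$ because $V=\{(1,0),(0,1)\}$, so $cV=\{(c,0),(0,c)\}$. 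Hence the right-hand side equals $T_{n+1}\cup\bigl((T_{n+1}\cup F_{n+1})+l(n+1)V\bigr)$, which is $T_{n+2}$ by recursion~(4.1). This closes the induction.

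Apart from the base case, the argument is pure bookkeeping, so the only thing to watch is that the index ranges line up: the Observation requires the subscript $n+1$ to be $\geq 0$, the bar-characterization Lemma requires $n+1\in\mathbb{N}$, and recursion~(4.1) is applied at $i=n+1$; all three hold for every $n\geq -1$. The one genuinely non-mechanical point is the base case, where one must carefully unfold the nested definitions (for instance $\theta(\widetilde{X}_{-1})$ refers to $\theta(Y_{\rho(1)-1})=\theta(Y_{-1})$, and symmetrically for $\widetilde{Y}_{-1}$) and confirm that the resulting finite set is $S_2$; I expect this to be the main ``obstacle,'' even though it is really just a careful computation.
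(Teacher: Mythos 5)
Your proposal is correct and takes essentially the same approach as the paper: the paper's inductive step is exactly your chain of equalities read in the opposite direction, starting from $T_{n+2}$ and applying recursion (4.1), the induction hypothesis, the bar characterization of $T_{n+1}\cup F_{n+1}$, and finally the decomposition observation for $\theta(\widetilde{X}_{n+1})$ and $\theta(\widetilde{Y}_{n+1})$. The base case is likewise dispatched by direct inspection in the paper.
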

\begin{proof}
We proceed by induction on $n$. When $n = -1$, it is easy to see
that
$$
\{(0,0)\} \cup
\theta\left(\widetilde{X}_{-1}\right)\cup\theta\left(\widetilde{Y}_{-1}\right)
= T_0.
$$
Now assume that, for all $-1 \leq n \in \mathbb{N}$, the lemma
holds. Then we have
\begin{eqnarray*}
T_{n+2} & \stackrel{\textmd{(4.1)}}{=} & T_{n+1} \cup \left(\left(T_{n+1} \cup F_{n+1}\right)+l(n+1)V\right) \\
        & \stackrel{\textmd{Ind Hyp}}{=} & \left(\{(0,0)\} \cup \theta\left(\widetilde{X}_n\right)\cup\theta\left(\widetilde{Y}_n\right)\right) \cup \\
        &   & \quad \left(\left(T_{n+1} \cup F_{n+1}\right)+l(n+1)V\right) \\
        & \stackrel{\textmd{Lemma 5.3}}{=} & \left(\{(0,0)\} \cup \theta\left(\widetilde{X}_n\right)\cup\theta\left(\widetilde{Y}_n\right)\right) \cup \\
        &   & \quad \left(\left(S_{n+1} \cup \theta\left(X_{n+1}\right) \cup \theta\left(Y_{n+1}\right)\right)+l(n+1)V\right) \\
        & \stackrel{\textmd{Observation 5.5}}{=} & \{(0,0)\} \cup \theta\left(\widetilde{X}_{n+1}\right)\cup\theta\left(\widetilde{Y}_{n+1}\right). \\
\end{eqnarray*}
\end{proof}

\begin{definition}
The $\theta$-{\it closures} of the axes $X$ and $Y$ are the sets
$$
\theta(X) = X \cup
\bigcup_{j=1}^{\infty}{\left(\theta_j(X)+\theta\left(Y_{\rho(j)-1}\right)\right)}
$$
and
$$
\theta(Y) = Y \cup
\bigcup_{j=1}^{\infty}{\left(\theta_j(Y)+\theta\left(X_{\rho(j)-1}\right)\right)},
$$
respectively.
\end{definition}

Figure~3 shows the structure of the $Y$-axis.

\begin{figure}[htp]
\begin{center}
\psfrag{Y0}{$\widetilde{Y}_{-1}$} \psfrag{Y1}{$Y_0^{\uparrow}$}
\psfrag{Y2}{$Y_1^{\uparrow}$} \psfrag{Y3}{$Y_2^{\uparrow}$}
\psfrag{S0}{$S_0^{\uparrow}$} \psfrag{S1}{$S_1^{\uparrow}$}
\psfrag{S2}{$S_2^{\uparrow}$}
\includegraphics[width=1.0in]{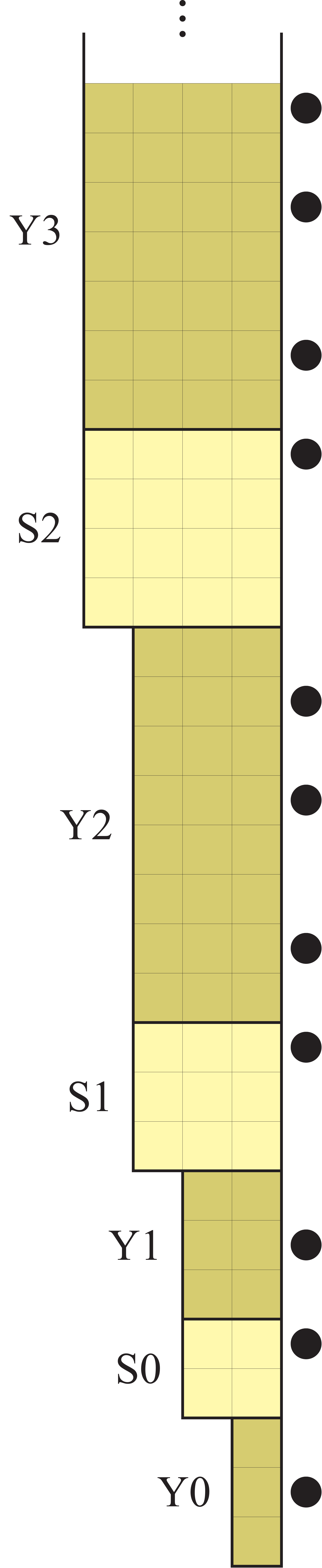}
\label{yaxis} \caption{The structure of $Y$. The dots denote the
$\theta$-points.}
\end{center}
\end{figure}

\begin{lemma} Let $\-1 \leq i \in \mathbb{Z}$.
\begin{enumerate}
\item $\theta(X) =
\displaystyle\bigcup_{i=-1}^{\infty}{\theta\left(\widetilde{X}_i\right)}$.
\item $\theta(Y) =
\displaystyle\bigcup_{i=-1}^{\infty}{\theta\left(\widetilde{Y}_i\right)}$.
\end{enumerate}
\end{lemma}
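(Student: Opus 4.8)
The plan is to establish part~(1) by unfolding the definition of the right-hand side and collapsing a nested union; part~(2) is then obtained by the symmetric argument with the roles of $X$ and $Y$ interchanged.

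First I would substitute the definition of $\theta\bigl(\widetilde{X}_i\bigr)$ into $\bigcup_{i=-1}^{\infty}\theta\bigl(\widetilde{X}_i\bigr)$ and use distributivity of union to write
\[
\bigcup_{i=-1}^{\infty}\theta\bigl(\widetilde{X}_i\bigr) \;=\; \Bigl(\bigcup_{i=-1}^{\infty}\widetilde{X}_i\Bigr) \;\cup\; \bigcup_{i=-1}^{\infty}\,\bigcup_{j=1}^{2^{i+2}-1}\bigl(\theta_j(X)+\theta\bigl(Y_{\rho(j)-1}\bigr)\bigr).
\]
Then I would treat the two pieces separately. For the first piece, the sets $\widetilde{X}_{-1}\subseteq\widetilde{X}_0\subseteq\widetilde{X}_1\subseteq\cdots$ are the nested initial segments defined earlier in this subsection, and combining their definition with Observation~\ref{observationwhatever} gives $\bigcup_{i=-1}^{\infty}\widetilde{X}_i = X$ at once. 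For the second piece, note that as $i$ runs over $-1,0,1,\dots$ the inner index ranges $\{1,\dots,2^{i+2}-1\}$ form an increasing chain of finite sets whose union is all of $\mathbb{Z}^+$; hence the double union collapses to $\bigcup_{j=1}^{\infty}\bigl(\theta_j(X)+\theta(Y_{\rho(j)-1})\bigr)$. Putting the two pieces together yields $X\cup\bigcup_{j=1}^{\infty}\bigl(\theta_j(X)+\theta(Y_{\rho(j)-1})\bigr)$, which is exactly the definition of $\theta(X)$, proving part~(1). Part~(2) follows verbatim after interchanging $X$ with $Y$ and $\theta_j(X)+\theta(Y_{\rho(j)-1})$ with $\theta_j(Y)+\theta(X_{\rho(j)-1})$.

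There is no substantial obstacle here; the two points that warrant a moment's attention are (a) checking that the upper limit $2^{i+2}-1$ makes the index ranges nest and exhaust $\mathbb{Z}^+$ — for $i=-1$ it is $\{1\}$, for $i=0$ it is $\{1,2,3\}$, and each successive bound more than doubles — and (b) the appeal to Observation~\ref{observationwhatever} to identify $\bigcup_i\widetilde{X}_i$ with $X$. Both are routine, so the proof is short.
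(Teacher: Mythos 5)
Your proposal is correct and follows essentially the same route as the paper's proof: unfold the definition of $\theta\bigl(\widetilde{X}_i\bigr)$, distribute the union, identify $\bigcup_{i}\widetilde{X}_i$ with $X$ via Observation~\ref{observationwhatever}, and collapse the nested union over $j$ to $\bigcup_{j=1}^{\infty}$. Your explicit check that the inner index ranges nest and exhaust $\mathbb{Z}^+$ is a small point the paper leaves implicit (and your final union is indexed correctly over $j$, where the paper's displayed chain has a typographical slip), but the argument is the same.
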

\begin{proof}
For all $-1 \leq i \in \mathbb{Z}$, it follows from the definition
of $\theta\left(\widetilde{X}_i\right)$, that
\begin{eqnarray*}
\bigcup_{i=-1}^{\infty}{\theta\left(\widetilde{X}_i\right)} & = & \bigcup_{i=-1}^{\infty}{\left(\widetilde{X}_i \cup \bigcup_{j=1}^{2^{i+2}-1}{\left(\theta_j(X)+\theta\left(Y_{\rho(j)-1}\right)\right)}\right)} \\
                                                            & = & \left(\bigcup_{i=-1}^{\infty}{\widetilde{X}_i}\right) \cup \bigcup_{i=-1}^{\infty}{\bigcup_{j=1}^{2^{i+2}-1}{\left(\theta_j(X)+\theta\left(Y_{\rho(j)-1}\right)\right)}} \\
                                                            & \stackrel{\textmd{Observation 5.4}}{=} & X \cup \bigcup_{i=1}^{\infty}{\left(\theta_j(X)+\theta\left(Y_{\rho(j)-1}\right)\right)} \\
                                                            & = & \theta(X). \\
\end{eqnarray*}
The proof of (2) is similar.
\end{proof}

We now have the following characterization of the fibered Sierpinski
triangle.

\begin{theorem}[bar characterization of $\mathbf{T}$]
\label{barchar}
$$
\mathbf{T} = \{(0,0)\} \cup \theta(X)\cup\theta(Y).
$$
\end{theorem}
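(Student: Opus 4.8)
The plan is to assemble the claimed identity from three facts already established in this subsection: (a) the defining equation $\mathbf{T} = \bigcup_{i=0}^{\infty} T_i$ from (4.3); (b) the lemma characterizing each finite stage, which gives $T_{n+1} = \{(0,0)\} \cup \theta(\widetilde{X}_n) \cup \theta(\widetilde{Y}_n)$ for every integer $n \geq -1$; and (c) the lemma expressing the $\theta$-closures of the axes as the unions $\theta(X) = \bigcup_{i=-1}^{\infty} \theta(\widetilde{X}_i)$ and $\theta(Y) = \bigcup_{i=-1}^{\infty} \theta(\widetilde{Y}_i)$. All three are available, so the argument is just a chain of substitutions.

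First I would reindex the union in (4.3) by writing $n = i - 1$, turning $\bigcup_{i=0}^{\infty} T_i$ into $\bigcup_{n=-1}^{\infty} T_{n+1}$, which is literally the same set. Next I substitute the stage characterization (b) into each term, obtaining
$$
\mathbf{T} = \bigcup_{n=-1}^{\infty} \left( \{(0,0)\} \cup \theta(\widetilde{X}_n) \cup \theta(\widetilde{Y}_n) \right).
$$
Then I distribute the infinite union over the three-term union inside the parentheses, a routine set-theoretic step, to get
$$
\mathbf{T} = \{(0,0)\} \cup \left( \bigcup_{n=-1}^{\infty} \theta(\widetilde{X}_n) \right) \cup \left( \bigcup_{n=-1}^{\infty} \theta(\widetilde{Y}_n) \right).
$$
Finally, applying the axis-closure lemma (c) to each of the two unions on the right replaces them with $\theta(X)$ and $\theta(Y)$ respectively, yielding exactly $\mathbf{T} = \{(0,0)\} \cup \theta(X) \cup \theta(Y)$.

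There is essentially no hard step: every ingredient has been proved earlier, the reindexing is harmless, and distributivity of union over union is immediate. If anything deserves a word of care it is simply making sure the index ranges match up when passing from (4.3) (index $i \geq 0$) to the stage lemma (index $n \geq -1$); that is handled by the substitution $n = i - 1$. Accordingly, I would present the proof as a short display of equalities annotated with (4.3), the stage lemma, and the axis-closure lemma, exactly mirroring the style of the preceding proofs in this subsection.
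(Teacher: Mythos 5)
Your proposal is correct and follows exactly the same chain of equalities as the paper's own proof: apply (4.3), reindex, substitute the stage characterization lemma, distribute the union, and finish with the axis-closure lemma. Nothing to add.
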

\begin{proof}
\begin{eqnarray*}
\mathbf{T} & \stackrel{\textmd{(4.3)}}{=} & \bigcup_{i=0}^{\infty}{T_i} \\
           & \stackrel{\textmd{Lemma 5.6}}{=} & \bigcup_{i=-1}^{\infty}{\left(\{(0,0)\} \cup \theta\left(\widetilde{X}_i\right)\cup\theta\left(\widetilde{Y}_i\right)\right)} \\
           & = & \{(0,0)\} \cup \bigcup_{i=-1}^{\infty}{\left( \theta\left(\widetilde{X}_i\right)\cup\theta\left(\widetilde{Y}_i\right) \right)} \\
           & = & \{(0,0)\} \cup \bigcup_{i=-1}^{\infty}{\theta\left(\widetilde{X}_i\right)} \cup \bigcup_{i=-1}^{\infty}{\theta\left(\widetilde{Y}_i\right)} \\
           & \stackrel{\textmd{Lemma 5.7}}{=} & \{(0,0)\} \cup \theta(X) \cup \theta(Y). \\
\end{eqnarray*}
\end{proof}

In the following subsections, we use Theorem~\ref{barchar} to guide
the strict self-assembly of $\bf{T}$.

\subsection{Self-Assembly of the Axes}
In this subsection, we exhibit a TAS in which the $y$-axis of
$\mathbf{T}$ strictly self-assembles. Our tile set is a modification
of the {\it optimal binary counter} (see \cite{CGM04}). If $i+2 \in
\mathbb{N}$ is the width of our modified binary counter, then every
number $1 \leq j \leq 2^{i+1}$ is counted once, and then, if $j \ne
2^{i+1}$, copied $\rho(j)+1$ times. It is easy to verify, using
Observation~\ref{rho_fact}, that this counting scheme produces a
rectangle having a width of $i+2$, and a height of
$$
2^{i+1} + \sum_{j=1}^{2^{i+1}-1}{\left(\rho(j)+1\right)} = l(i)-i-2,
$$
which is precisely the set $Y_i$.

We will now construct our set of tile types $T$.

\begin{construction}
\label{tileset} Let $T$ be the set of 25 tile types shown in
Figure~5.
\begin{figure}[htp]
\begin{center}
\includegraphics[width=2.88in]{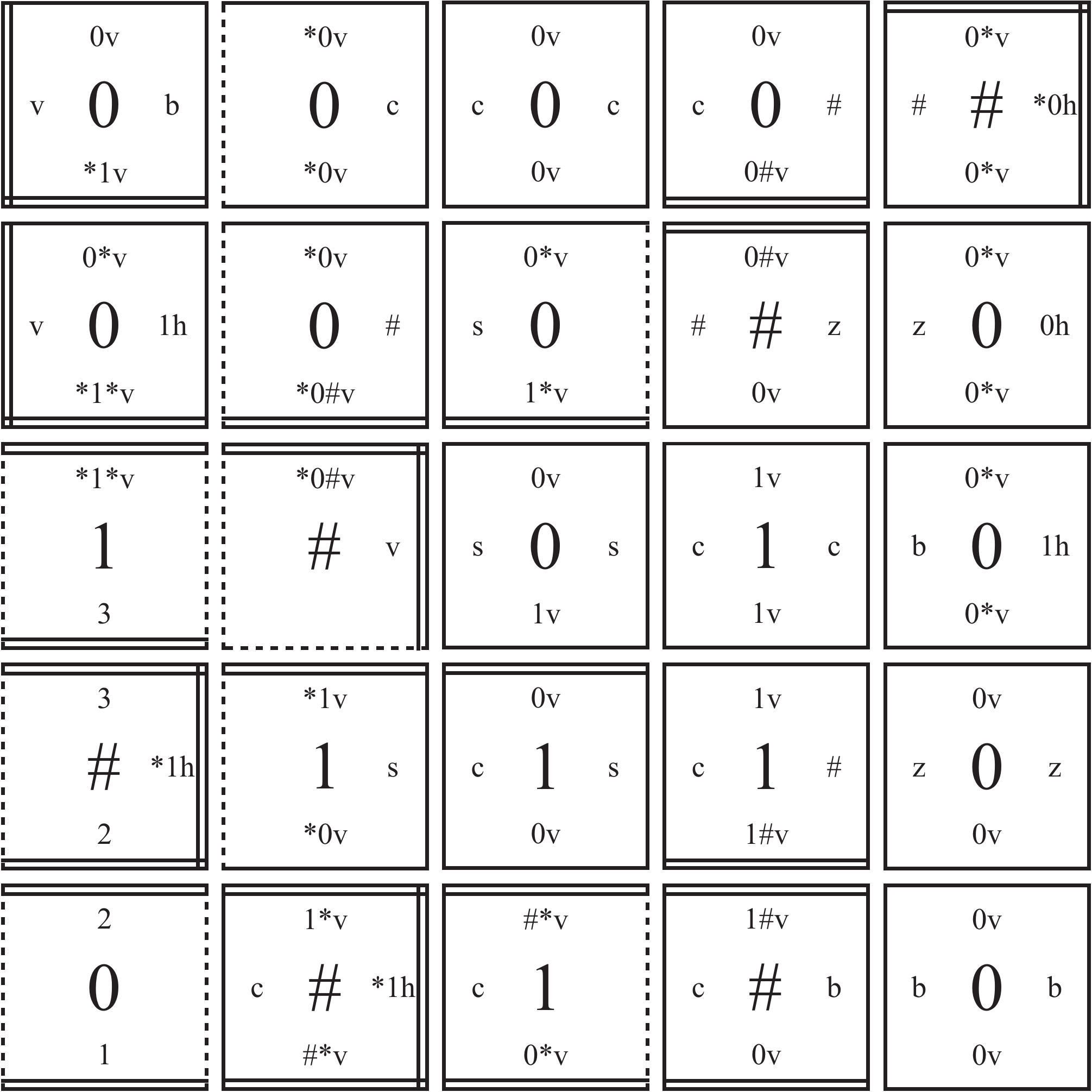}
\label{square_yaxis} \caption{The set of tile types in which the set
$Y$ self-assembles.}
\end{center}
\end{figure}
\end{construction}

The following technical result gives an assembly sequence for the
set $Y_i - (\{-i-1\} \times \{1,\ldots,l(i)-i-2\})$.

\begin{lemma}
\label{techlemma0} Let $n \in \mathbb{N}$. If, for some $m \in
\mathbb{Z}$, $\sigma_n \in \asmbt{\tau}{T}$ satisfies
\begin{enumerate}
\item $\displaystyle\dom{\sigma_n} = \left( \{-n-1,\ldots,0\} \times \{m-1,m\} \right)
$,
\item for all $\vec{v} \in \left(\{ -n-1,\ldots,0\} \times \{m \}\right)$,
$$
\sigma_n\left(\vec{v}\right)(0,1) = \left\{
\begin{array}{ll}
\left(\textmd{0*v},2\right) & \textmd{if } \vec{v} = (0,m)  \\
\left(\textmd{0v},1\right)  & \textmd{if } \vec{v} \ne (0,m) \textmd{ and } \vec{m} \ne (-n-1,m), \\
\end{array}\right.
$$
\end{enumerate}
then there is a $\tau$-$T$-assembly sequence $\vec{\alpha}_n =
\left( \alpha_{i_n} \left| \; 0\leq i_n < k_n < \infty \right.
\right)$, with $\alpha = \res{\vec{\alpha}_n}$, satisfying
\begin{enumerate}
\item $\alpha_{0_n} = \sigma_n$,
\item $\dom{\alpha} = \dom{\sigma_n} \cup \left( \left(Y_n - \left( \{ -n-1 \} \times \{ 1,\ldots,l(n)-n-2 \}\right)\right)+(0,m)\right)$,
\item for all $\vec{v} \in \dom{\alpha} - \dom{\sigma_n}$,
$$
\sum_{\vec{u} \in
\textmd{IN}^{\vec{\alpha}}(\vec{v})}{\textmd{str}_{\alpha_{i_{\vec{\alpha}}(\vec{v})}}(\vec{v},\vec{u})
} = \tau,
$$
\item for all $\vec{v} \in \dom{\alpha} - \dom{\sigma_n}$ and all $t \in
T-\{\alpha(\vec{v})\}$, $\vec{v} \not \in
\frontiertt{\left(\vec{\alpha} \setminus \vec{v}\right)}$,
\item for all $\vec{v} \in \dom{\alpha} - \dom{\sigma_n}$,
$$
\alpha\left(\vec{v}\right)(1,0) = \left\{
\begin{array}{ll}
\left(\textmd{*1h},2\right) & \textmd{if } \exists j \in \mathbb{N}, \textmd{ and }  \vec{v} = (0,\theta(2j+1)+m)  \\
\left(\textmd{*0h},2\right) & \textmd{if } \exists j \in \mathbb{N}, \textmd{ and } \vec{v} = (0,\theta(2j)+m) \\
\left(\textmd{0h},1\right) & \textmd{if } \exists j \in \mathbb{N}, \textmd{ and } \\
                           & \vec{v} \in (\left(\{0\} \times \{\theta\left(j\right)-\rho\left(j\right)+1, \ldots, \theta\left(j\right)-1\}\right) \\
                           & \quad+(0,m))  \\
\left(\textmd{1h},1\right) & \textmd{if } \exists j \in \mathbb{N}, \textmd{ and }  \vec{v} = (0,\theta(j)-\rho(j)+m)  \\
\left(\lambda,0\right) & \textmd{otherwise,} \\
\end{array}\right.
$$
and
\item for all $\vec{v} \in \left\{ (x,y) \in \dom{\alpha} \left| \; (x,y+1) \not \in \dom{\alpha} \right.\right\}$,
$$
\alpha\left(\vec{v}\right)(0,1) = \left\{
\begin{array}{ll}
\left(\textmd{0*v},1\right) & \textmd{if } \vec{v}+(1,0) \not \in \dom{\alpha}  \\
\left(\textmd{0v},1\right)  & \textmd{otherwise.}
\end{array}\right.
$$
\end{enumerate}
\end{lemma}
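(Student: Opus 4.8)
The plan is to exhibit the assembly sequence $\vec{\alpha}_n$ explicitly and then verify its six asserted properties one at a time. The key point is that the $25$ tile types of Construction~\ref{tileset} form a (modified) optimal binary counter and do not depend on $n$; only the seed $\sigma_n$ grows with $n$. Since $\sigma_n$ presents the ``bottom-of-counter'' interface described in hypotheses~(1)--(2), the assembly above $\sigma_n$ is essentially forced: tiles attach row by row, each new row being obtained from the one below it either by an \emph{increment} (propagating a single carry) or by a \emph{copy}, and within a row the tiles attach in the one-tile-at-a-time zig-zag order dictated by carry propagation. The schedule of increments versus copies is exactly the one realized by the ruler function: the $j$th value to be counted appears in row $\theta(j)$ and is then reproduced in a block of $\rho(j)+1$ further rows before the next increment (this is the ``measured delay''). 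We take $\vec{\alpha}_n$ to be the assembly sequence realizing this behaviour, attaching tiles bottom-to-top with the intra-row order given by the carry, and we set $\alpha = \res{\vec{\alpha}_n}$; property~(1) is then immediate.

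For property~(2), one observes that every row above $\sigma_n$ occupies exactly the columns $\{-n,\ldots,0\}$ (the counter runs through the values representable there and never deposits a tile on column $\{-n-1\}$ above the seed, which is why that column is excised from $Y_n$ in the statement), and the total number of rows is
\[
2^{n+1} + \sum_{j=1}^{2^{n+1}-1}\bigl(\rho(j)+1\bigr) = l(n)-n-2
\]
by Observation~\ref{rho_fact}; hence $\dom{\alpha}-\dom{\sigma_n}$ is exactly the translate by $(0,m)$ of $Y_n$ with its left column removed. Properties~(3) and~(4) are the two local-determinism conditions. Property~(3) holds because every non-seed tile of $T$ binds either through a single strength-$2$ glue or through two strength-$1$ glues, so the sum of its \textmd{IN}-strengths is exactly $\tau=2$; property~(4) holds because, given the glues presented on a tile's \textmd{IN}-sides, the tile type is uniquely determined among the $25$ types of $T$. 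Both facts are established by a finite case analysis over the functional groups of tile types (increment body, copy body, carry turn-around at each of the two ends, and the ``top'' tiles), using the left-edge marker carried on the north side of the tile at $(-n-1,m)$ to force the behaviour at the left boundary.

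Properties~(5) and~(6) pin down the glue colours exposed on the east side and north side of $\alpha$. For~(6) one reads off the top row of tiles: the topmost tile of the easternmost column exposes $(\textmd{0*v},1)$ and each other topmost tile exposes $(\textmd{0v},1)$ --- precisely the interface the lemma posits for $\sigma_n$, which is what makes the counter re-startable. For~(5) one tracks the tile occupying the rightmost column ($x=0$) in each row: in row $\theta(j)+m$ that tile is the ``value $j$ now counted'' tile, whose east glue is $(\textmd{*0h},2)$ or $(\textmd{*1h},2)$ according to the parity of $j$; in the $\rho(j)$ rows immediately below it (the tail copy-rows of value $j-1$) the tile exposes $(\textmd{1h},1)$ in the lowest row and $(\textmd{0h},1)$ in the others; and in every remaining row it exposes $(\lambda,0)$. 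The index sets in property~(5) are then exactly these rows, by the recurrences defining $\theta$ and $\rho$.

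The main obstacle is the bookkeeping hidden in the case analysis for~(3)--(5): one must verify that the informal picture --- a binary counter whose increments are spaced out by a ruler-function-controlled delay and which emits a ``start a perpendicular bar'' signal at each new value --- is implemented without ambiguity by the specific $25$ tile types, i.e., that at every lattice point the glues already present admit a unique tile and force exactly the east/north colours claimed. I would handle this by an induction on the row index carrying the invariant ``(a) the binary string currently displayed on the north sides is the correct counter value, (b) the number of copy-rows still owed is $\rho(j)$, and (c) the east glue just emitted agrees with property~(5)''; once this invariant is shown to persist through increment rows, copy rows, and the two carry turn-arounds, the lemma follows from Observation~\ref{rho_fact} together with the definitions of $\theta$ and $\rho$.
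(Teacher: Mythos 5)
Your proposal is correct in substance but takes a genuinely different route from the paper. You verify the lemma by a direct induction on the row index of the counter, carrying an invariant about the displayed counter value, the number of copy-rows still owed, and the east glue just emitted; the total height then falls out of Observation~\ref{rho_fact}. The paper instead proves the lemma by induction on $n$, the counter's \emph{width}: the width-$(n+2)$ rectangle is decomposed into a lower copy of the width-$(n+1)$ counter (obtained by applying the induction hypothesis to the restriction of $\sigma_{n+1}$ to columns $\{-n-1,\ldots,0\}$), a filled-in extra left column, the square $S_n^{\uparrow}$, and a second copy of the width-$(n+1)$ counter translated up by $l(n)$ --- which is precisely why the lemma carries the free vertical offset $m$, so that the induction hypothesis can be invoked at both $m=0$ and $m=l(n)$. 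The paper's decomposition mirrors the recursive definition of $\mathbf{T}$ and reuses the lemma itself as a black box, at the cost of obscuring what the counter actually computes; your row-induction makes the counter semantics explicit, at the cost of pushing all of the work into the finite case analysis over the $25$ tile types needed for conclusions (3)--(5). One caution on your treatment of conclusion (4): it is not enough that the IN-glues determine the tile type uniquely; you must check that no other tile type of $T$ can bind with total strength $\tau$ at $\vec{v}$ using \emph{any} sides exposed in $\vec{\alpha}\setminus\vec{v}$ (which deletes only $\vec{v}$ and its OUT-neighbors), so the case analysis must also rule out competing strength-$2$ attachments on sides that are neither IN nor OUT. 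With that caveat, both arguments succeed at essentially the same level of rigor as the paper's own.
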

\begin{proof}
We proceed by induction on $n$, noting that the the base case is
verified in Figure~6.
\begin{figure}[htp]
\begin{center}
\includegraphics[width=0.54in]{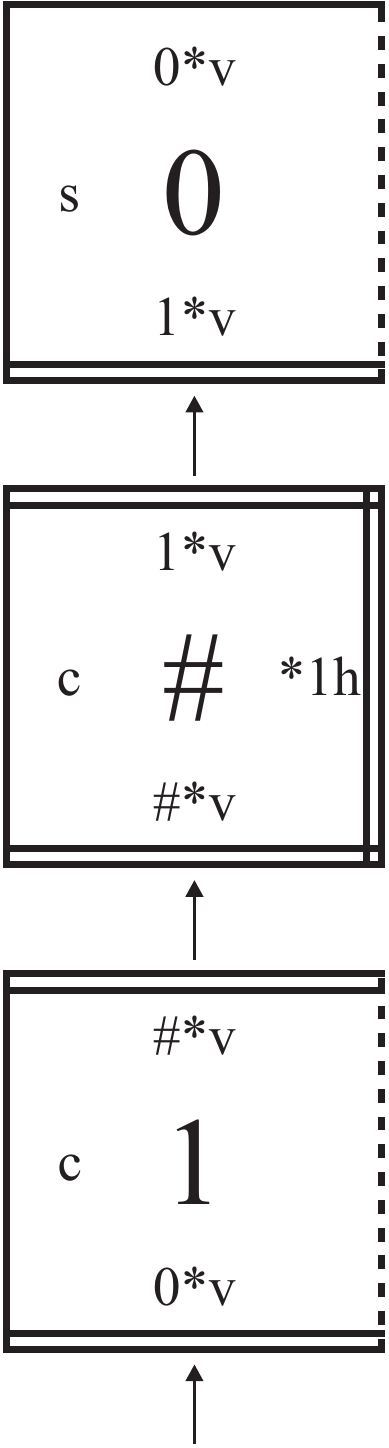}
\label{square_yaxis} \caption{The base case for
Lemma~\ref{techlemma0}.}
\end{center}
\end{figure}
Now assume that the claim holds for all $n \in \mathbb{N}$, and let
$\sigma_{n+1} \in \asmbt{\tau}{T}$ satisfy conditions (1) and (2) of
the hypothesis, taking $m=0$. Let $\sigma_n \in \asmbt{\tau}{T}$
satisfy, for all $\vec{v} \in \mathbb{Z}^2$,
$$
\sigma_n(\vec{v}) = \left\{
\begin{array}{ll}
\sigma_{n+1}(\vec{v}) & \textmd{ if } \vec{v} \in (\{-n-1,\ldots,0\}\times\{-1,0\}) \\
\uparrow & \textmd{ otherwise.} \\
\end{array}
\right.
$$
Then the induction hypothesis tells us that there is an assembly
sequence $\vec{\alpha}_n$, with $\alpha = \res{\vec{\alpha}_n}$,
satisfying conditions (1), (2), (3), (4), (5), and (6) of the
conclusion, taking $m=0$. Define the assembly sequence $$
\vec{\alpha}_{n+1} = \left( \left. \alpha_{i_{n+1}} \; \right| \; 0
\leq i_{n+1} < k_{n+1} < \infty \right)$$ satisfying
\begin{enumerate}
\item $\alpha_{0_{n+1}} = \sigma_{n+1}$,
\item for all $0 \leq i_{n+1} < k_n$, $\alpha_{i_{n+1}} =
\alpha_{i_n}$,
\item for all $0 \leq i < l(n)-n-3$, $\alpha_{\left(k_n+i\right)_{n+1}} =
\alpha_{\left(k_n+i-1\right)_{n+1}} + ((-n-1,i+1) \mapsto t)$, where
$t$ is the tile type shown in Figure~7, and
\begin{figure}[htp]
\begin{center}
\includegraphics[width=0.54in]{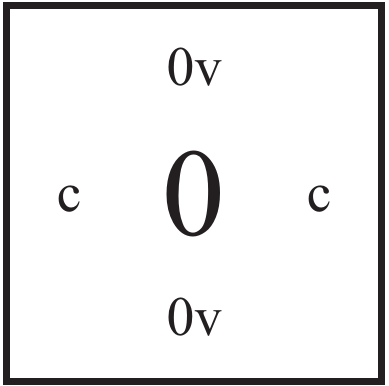}
\label{t} \caption{The tile type $t$.}
\end{center}
\end{figure}
\item $\displaystyle\alpha_{\left(k_n+l(n)-n-3\right)_{n+1}} =
\alpha_{\left(k_n+l(n)-n-4\right)_{n+1}} + ((-n-1,l(n)-n-2) \mapsto
t')$, where $t'$ is the tile type shown in Figure~8.
\begin{figure}[htp]
\begin{center}
\includegraphics[width=0.54in]{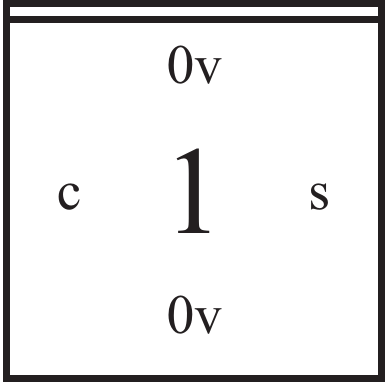}
\label{tt} \caption{The tile type $t'$.}
\end{center}
\end{figure}
\end{enumerate}
Notice that for all $$\vec{v} \in \left\{\left. (x,y) \in
\dom{\alpha_{\left(k_n+l(n)-n-3\right)_{n+1}}} \; \right| \; (x,y+1)
\not \in \dom{\alpha_{\left(k_n+l(n)-n-3\right)_{n+1}}} \right\}$$,
$$
\alpha_{\left(k_n+l(n)-n-3\right)_{n+1}}\left(\vec{v}\right)(0,1) =
\left\{
\begin{array}{ll}
\left(\textmd{0v},2\right) & \textmd{if } \vec{v}+(-1,0) \not \in \dom{\alpha}  \\
\left(\textmd{0*v},1\right) & \textmd{if } \vec{v}+(1,0) \not \in \dom{\alpha}  \\
\left(\textmd{0v},1\right)  & \textmd{otherwise.}
\end{array}\right.
$$
\begin{figure}[htp]
\begin{center}
\includegraphics[width=2.3in]{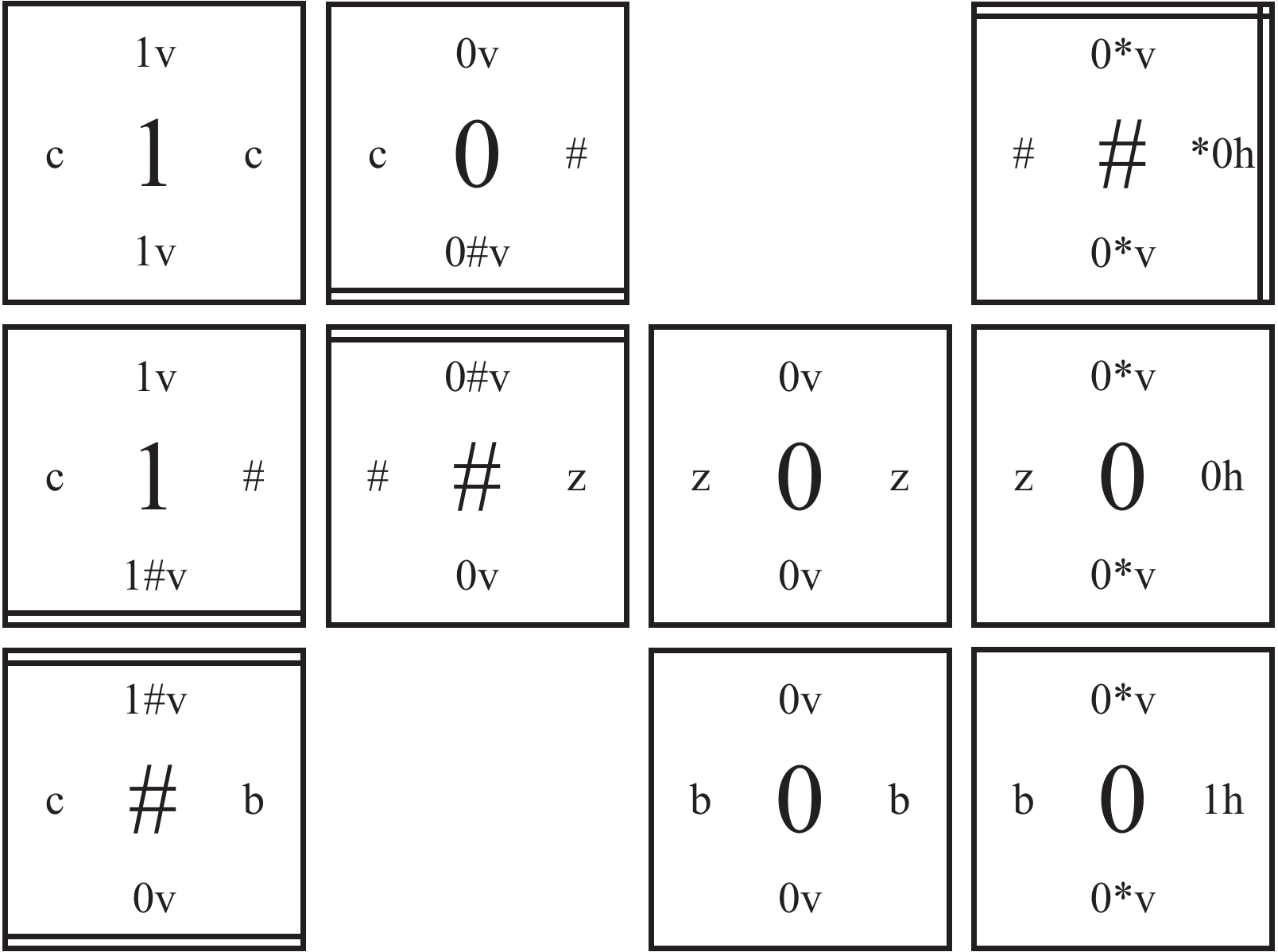}
\label{squaretiletypes} \caption{Tile types in which
$S_n^{\uparrow}$ self-assembles.}
\end{center}
\end{figure}
The tile types shown in Figure~9 testify that there is a
$\tau$-$T$-assembly sequence
$$
\vec{\Box} = \left( \alpha_{\left(k_n+l(n)-n-2\right)_{n+1}},
\ldots, \alpha_{\left(k_n+l(n)-n-2+(n+1)^2-1\right)_{n+1}}\right)
$$
having the property that
$$\dom{\alpha_{\left(k_n+l(n)-n-2+(n+1)^2-1\right)_{n+1}}} -
\dom{\alpha_{\left(k_n+l(n)-n-2\right)_{n+1}}} = S_n^{\uparrow},$$
and for all $\vec{v} \in (\{-n-1,\ldots 0\} \times \{l(n)\})$,
$$
\alpha_{\left(k_n+l(n)-n-2+(n+1)^2-1\right)_{n+1}}\left(\vec{v}\right)(0,1)
= \left\{
\begin{array}{ll}
\left(\textmd{0*v},2\right) & \textmd{if } \vec{v} = (0,l(n))  \\
\left(\textmd{1v},1\right) & \textmd{if } \vec{v} = (-n-1,l(n))  \\
\left(\textmd{0v},1\right)  & \textmd{otherwise.} \\
\end{array}\right.
$$
Let $\sigma'_n \in \asmbt{\tau}{T}$ satisfy, for all $\vec{v} \in
\mathbb{Z}^2$,
$$
\sigma'_n(\vec{v}) = \left\{
\begin{array}{ll}
\alpha_{\left(k_n+l(n)-n-2+(n+1)^2-1\right)_{n+1}}(\vec{v}) & \textmd{ if } \vec{v} \in (\{-n-1,\ldots,0\} \\
                                                            & \quad\times\{l(n)-1,l(n)\}) \\
\uparrow & \textmd{ otherwise.} \\
\end{array}
\right.
$$
Once again, we appeal to the induction hypothesis, which tells us
that there is an assembly sequence $\vec{\alpha}'_n$, with $\alpha'
= \res{\vec{\alpha}'_n}$, satisfying, with $m=l(n)$, conditions (1),
(2), (3), (4), (5), and (6) of the conclusion. Thus, we can define
an assembly sequence $\vec{\alpha}'_{n+1} = \left( \left.
\alpha_{i'_{n+1}} \; \right| \; 0 \leq i'_{n+1} < k'_{n+1} < \infty
\right)$ satisfying
\begin{enumerate}
\item $\alpha_{0'_{n+1}} =
\alpha_{\left(k_n+l(n)-n-2+(n+1)^2-1\right)_{n+1}}$,
\item for all $1 \leq i'_{n+1} < k'_n$, $\alpha_{i'_{n+1}} =
\alpha_{i'_n}$,
\item for all $0 \leq i < l(n)-n-2$, $$\alpha_{\left(k'_n+i\right)_{n+1}} =
\alpha_{\left(k'_n+i-1\right)_{n+1}} + \left((-n-1,i+1+l(n)) \mapsto
t''\right),$$ where $t''$ is the tile type shown in Figure~10, and
\begin{figure}[htp]
\begin{center}
\includegraphics[width=0.54in]{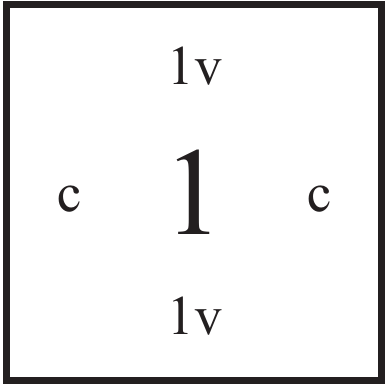}
\label{ttt} \caption{The tile type $t''$.}
\end{center}
\end{figure}
\item $\displaystyle\alpha_{\left(k'_n+l(n)-n-3\right)_{n+1}} =
\alpha_{\left(k'_n+l(n)-n-4\right)_{n+1}} + \left((-n-1,l(n+1)-n-3)
\mapsto t'''\right)$, where $t'''$ is the tile type shown in
Figure~11.
\begin{figure}[htp]
\begin{center}
\includegraphics[width=0.54in]{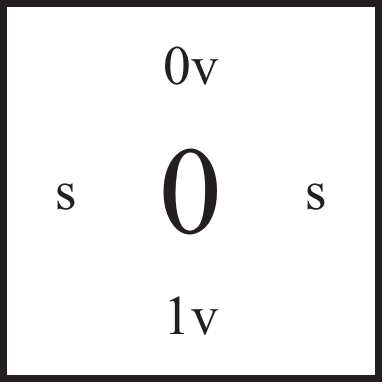}
\label{tttt} \caption{The tile type $t'''$.}
\end{center}
\end{figure}
\end{enumerate}
It is routine to verify that $\vec{\alpha} \cup \vec{\Box} \cup
\vec{\alpha}'$ is a $\tau$-$T$-assembly sequence satisfying
conditions (1), (2), (3), (4), (5), and (6) of the conclusion.
\end{proof}

In the following, we assume the presence of $\widetilde{Y}_{n-1}
\cup S_n^{\uparrow}$, and use Lemma~\ref{techlemma0} to give an
assembly sequence for the set $Y_i^{\uparrow}$.

\begin{lemma}
\label{techlemma1} Let $n \in \mathbb{N}$. If $\sigma_n \in
\asmbt{\tau}{T}$ satisfies
\begin{enumerate}
\item $\displaystyle\dom{\sigma_n} = \left(\widetilde{Y}_{-1} \cup S_0^{\uparrow}\right) \cup
\bigcup_{i=1}^{n}{\left( Y_{i-1}^{\uparrow} \cup S_i^{\uparrow}
\right)}$, and
\item for all $\vec{m} \in \left\{ (x,y) \in \dom{\sigma_n} \left| \; (x,y+1) \not \in \dom{\sigma_n} \right.\right\}$,
$$
\sigma_n\left(\vec{m}\right)(0,1) = \left\{
\begin{array}{ll}
\left(\textmd{*0v},1\right) & \textmd{if } \vec{m}+(-1,0) \not \in \dom{\sigma_n} \\
\left(\textmd{0*v},2\right) & \textmd{if } \vec{m}+(1,0) \not \in \dom{\sigma_n}  \\
\left(\textmd{0v},1\right)  & \textmd{otherwise},
\end{array}\right.
$$
\end{enumerate}
then there is a $\tau$-$T$-assembly sequence $\vec{\alpha} = \left(
\alpha_i \left| \; 0\leq i < k < \infty \right. \right)$, with
$\alpha = \res{\vec{\alpha}}$, satisfying
\begin{enumerate}
\item $\alpha_0 = \sigma_n$,
\item $\dom{\alpha} = \widetilde{Y}_n$,
\item for all $\vec{m} \in \dom{\alpha} - \dom{\sigma_n}$,
$$
\sum_{\vec{u} \in
\textmd{IN}^{\vec{\alpha}}(\vec{m})}{\textmd{str}_{\alpha_{i_{\vec{\alpha}}(\vec{m})}}(\vec{m},\vec{u})
} = \tau,
$$
\item for all $\vec{m} \in \dom{\alpha} - \dom{\sigma_n}$ and all $t \in
T-\{\alpha(\vec{m})\}$, $\vec{m} \not \in
\frontiertt{\left(\vec{\alpha} \setminus \vec{m}\right)}$,
\item for all $\vec{m} \in \dom{\alpha} - \dom{\sigma_n}$,
$$
\alpha\left(\vec{m}\right)(1,0) = \left\{
\begin{array}{ll}
\left(\textmd{*1h},2\right) & \textmd{if } \exists j \in \mathbb{N}, \textmd{ and }  \vec{m} = (0,\theta(2j+1))  \\
\left(\textmd{*0h},2\right) & \textmd{if } \exists j \in \mathbb{N}, \textmd{ and } \vec{m} = (0,\theta(2j)) \\
\left(\textmd{0h},1\right) & \textmd{if } \exists j \in \mathbb{N}, \textmd{ and }  \\
                           & \quad\vec{m} \in \left(\{0\} \times \{\theta\left(j\right)-\rho\left(j\right)+1, \ldots, \theta\left(j\right)-1\}\right)  \\
\left(\textmd{1h},1\right) & \textmd{if } \exists j \in \mathbb{N}, \textmd{ and }  \vec{m} = (0,\theta(j)-\rho(j))  \\
\left(\lambda,0\right) & \textmd{otherwise,} \\
\end{array}\right.
$$
and
\item for all $\vec{m} \in \left\{ (x,y) \in \dom{\alpha} \left| \; (x,y+1) \not \in \dom{\alpha} \right.\right\}$,
$$
\alpha\left(\vec{m}\right)(0,1) = \left\{
\begin{array}{ll}
\left(\textmd{*1v},2\right) & \textmd{if } \vec{m}+(-1,0) \not \in \dom{\alpha} \\
\left(\textmd{0*v},1\right) & \textmd{if } \vec{m}+(1,0) \not \in \dom{\alpha}  \\
\left(\textmd{0v},1\right)  & \textmd{otherwise.}
\end{array}\right.
$$
\end{enumerate}
\end{lemma}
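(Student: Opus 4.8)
The plan is to derive this lemma directly from Lemma~\ref{techlemma0}; no fresh induction on $n$ is required, since that induction was already used up in the proof of Lemma~\ref{techlemma0}. The first step is to pin down what must be built. By Observation~\ref{observationwhatever} the hypothesized domain $\dom{\sigma_n}$ is exactly $\widetilde{Y}_{n-1}\cup S_n^{\uparrow}$, and $\widetilde{Y}_n = \widetilde{Y}_{n-1}\cup S_n^{\uparrow}\cup Y_n^{\uparrow}$ is a disjoint union, so the assembly sequence must add precisely the tiles of $Y_n^{\uparrow}$. Using $l(n+1)=2l(n)+1$ I split $Y_n^{\uparrow}$ into its ``counter body'' $C_n := \left(Y_n - \left(\{-n-1\}\times\{1,\dots,l(n)-n-2\}\right)\right)+(0,l(n))$ and its one-column left ``fiber'' $\Phi_n := \{-n-1\}\times\{l(n)+1,\dots,l(n+1)-n-3\}$; then $C_n$ and $\Phi_n$ are disjoint, their union is $Y_n^{\uparrow}$, and both lie strictly above height $l(n)$.

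The core step is to produce $C_n$ via Lemma~\ref{techlemma0} and then glue on $\Phi_n$ by hand. Let $\sigma'$ be the restriction of $\sigma_n$ to the top two rows $\{-n-1,\dots,0\}\times\{l(n)-1,l(n)\}$ of $S_n^{\uparrow}$. Reading the glues on the top frontier of $\sigma_n$ off hypothesis~(2) of the present lemma, one checks that $\sigma'$ is a $\tau$-$T$-assembly meeting hypotheses~(1) and~(2) of Lemma~\ref{techlemma0} with the parameter $m$ taken to be $l(n)$: the tile at $(0,l(n))$ exposes $(\textmd{0*v},2)$ on top, the interior tiles of that row expose $(\textmd{0v},1)$, and Lemma~\ref{techlemma0} imposes nothing on the single leftmost tile. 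Lemma~\ref{techlemma0} then supplies a finite $\tau$-$T$-assembly sequence $\vec{\gamma}$ from $\sigma'$ whose result adds exactly $C_n$ and which satisfies properties~(1)--(6) of that lemma. Because $C_n$ sits above height $l(n)$ while every tile of $\sigma_n$ outside $\sigma'$ sits at height $\le l(n)$ and the column $x=-n-1$ above $S_n^{\uparrow}$ is still empty, every single-tile extension occurring in $\vec{\gamma}$ remains a legal single-tile extension when carried out on top of the full $\sigma_n$ (placing extra tiles only enlarges $\tau$-$t$-frontiers, and the new positions are outside $\dom{\sigma_n}$); this promotes $\vec{\gamma}$ to a sequence $\vec{\gamma}'$ from $\sigma_n$ to $\sigma_n\cup C_n$. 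I would then append the single-tile additions that grow $\Phi_n$ upward, one tile at a time, from the top-left corner of $S_n^{\uparrow}$ along the left edge of $C_n$, using the dedicated fiber tile types of the construction (each binding with total strength exactly $\tau$, to the tile beneath it and across the left edge of $C_n$, and carrying a strength-$0$ glue on its right side), topped by the cap tile that exposes $(\textmd{*1v},2)$. Call the concatenation $\vec{\alpha}$.

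Most of the conclusions then follow mechanically. The sequence $\vec{\alpha}$ is finite with $\alpha_0=\sigma_n$, and $\res{\vec{\alpha}}=\sigma_n\cup C_n\cup\Phi_n=\dom{\sigma_n}\cup Y_n^{\uparrow}=\widetilde{Y}_n$ by Observation~\ref{observationwhatever}, which gives~(1) and~(2). Conclusion~(3) holds on $C_n$ by property~(3) of Lemma~\ref{techlemma0} and on $\Phi_n$ by the ``just barely binds'' design of the fiber tiles. Conclusion~(6) holds on the top row of $C_n$ because property~(6) of Lemma~\ref{techlemma0} asserts exactly the same thing there --- note that the lone interface tile $(-n-1,l(n))$ of the seed is no longer on the upper frontier of $\widetilde{Y}_n$, so it is not subject to~(6) --- and it holds on the single leftmost top tile by the choice of cap tile. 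Conclusion~(4) is verified by the same local case analysis that underlies property~(4) of Lemma~\ref{techlemma0}, now also tracking the fiber column: the only tiles of $\res{\vec{\alpha}}$ adjacent to $C_n$ that are absent from $\vec{\gamma}\setminus\vec{m}$ are fiber tiles to the left, whose right-side glue has strength $0$ and hence cannot contribute to any frontier, so no wrong tile type can attach at an $\vec{m}\in C_n$; for $\vec{m}\in\Phi_n$ the check is immediate from the tile set.

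The real content, as opposed to this bookkeeping, lives in conclusion~(5), and this is where I expect the main obstacle. Property~(5) of Lemma~\ref{techlemma0} records the glues on the right edge of $C_n$ in terms of $\theta(j)+l(n)$ and $\rho(j)$ for $1\le j\le 2^{n+1}-1$, whereas~(5) here is stated with the \emph{global} ruler data $\theta,\rho$. Reconciling the two amounts to the self-similarity identity $\theta\!\left(j+2^{n+1}\right)=\theta(j)+l(n)$ for $1\le j\le 2^{n+1}-1$, together with $\rho\!\left(j+2^{n+1}\right)=\rho(j)$ in the same range (equivalently, $\theta\!\left(2^{n+1}\right)=l(n)$). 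Both follow from the defining recurrences for $\theta$ and $\rho$ and from Observation~\ref{rho_fact}: a window $\{j+1,\dots,j+2^{n+1}\}$ of $2^{n+1}$ consecutive arguments with $1\le j\le 2^{n+1}-1$ always contains exactly the single ``special'' multiple $2^{n+1}$, where $\rho=n+1$, and off that point it agrees with a translate of $\rho$ restricted to $\{1,\dots,2^{n+1}-1\}$, so the window sum is $\left(2^{n+1}-n-2\right)+(n+1)=2^{n+1}-1$ and telescoping the $\theta$-recurrence gives $\theta\!\left(j+2^{n+1}\right)-\theta(j)=2^{n+2}+2^{n+1}-1=l(n)$. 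The same identity aligns the parity of the local index $j$ with that of $j+2^{n+1}$, so that the $(\textmd{*1h},2)$-versus-$(\textmd{*0h},2)$ and $(\textmd{1h},1)$-versus-$(\textmd{0h},1)$ cases --- which turn only on whether $\rho=0$ --- transfer correctly, the $(\lambda,0)$ positions matching by complementation. I expect this index translation, plus the figure-level checks behind conclusion~(4), to be the only non-routine work.
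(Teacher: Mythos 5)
Your proposal is correct and follows essentially the same route as the paper's proof: invoke Lemma~\ref{techlemma0} with $m=l(n)$ on the top rows of $S_n^{\uparrow}$ to assemble $Y_n^{\uparrow}$ minus its leftmost column, then append that column tile by tile with the dedicated fiber and cap tile types. Your extra bookkeeping (promoting the sequence from the restricted seed to all of $\sigma_n$, and the index-translation identity $\theta(j+2^{n+1})=\theta(j)+l(n)$, $\rho(j+2^{n+1})=\rho(j)$ behind conclusion~(5)) is correct and merely makes explicit what the paper leaves implicit.
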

\begin{proof}
Assume the hypothesis. Then, with an appropriate choice of $m \in
\mathbb{Z}$, Lemma~\ref{techlemma0} tells us that there is a
$\tau$-$T$-assembly sequence $\vec{\alpha}_n = ( \alpha_{i_n} \; |
\; 0 \leq i_n < k_n < \infty)$, with $\alpha_n =
\res{\vec{\alpha}_n}$, satisfying $\dom{\alpha_n} = Y_n^{\uparrow} -
(\{-n-1\} \times \{1,\ldots,l(n)-n-2\})$. Define the assembly
sequence $$\vec{\alpha} = \left( \left. \alpha_{i} \; \right| \; 0
\leq i \leq l(n)-n-2 \right)$$ with, $\alpha_0 = \alpha_{0_n}$, and
for all $1 \leq i < l(n)-n-2$,
$$
\alpha_{i} = \alpha_{i-1} + ((-n-1,i) \mapsto t),
$$
where $t$ is the tile type shown in Figure~12, and
\begin{figure}[htp]
\begin{center}
\includegraphics[width=0.54in]{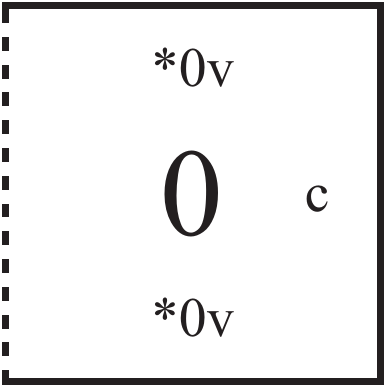}
\label{ttttt} \caption{The tile type $t$.}
\end{center}
\end{figure}
$$
\alpha_{k_n+l(n)-n-2} = \alpha_{k_n+l(n)-n-3} + ((-n-1,l(n)-n-2)
\mapsto t'),
$$
where $t'$ is the tile type shown in Figure~13.
\begin{figure}[htp]
\begin{center}
\includegraphics[width=0.54in]{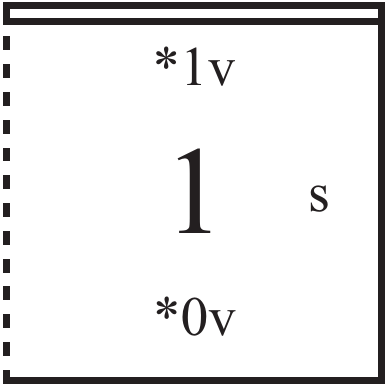}
\label{tttttt} \caption{The tile type $t'$.}
\end{center}
\end{figure}
\end{proof}

Now we assume the presence of the set $\widetilde{Y}_n$ and give an
assembly sequence for $S_{n+1}^{\uparrow}$.

\begin{lemma}
\label{techlemma2} Let $-1 \leq n \in \mathbb{Z}$. If $\sigma_n \in
\asmbt{\tau}{T}$ satisfies
\begin{enumerate}
\item $\displaystyle\dom{\sigma_n} = \widetilde{Y}_{-1} \cup
\bigcup_{i=0}^{n}{\left(S_i^{\uparrow} \cup Y_i^{\uparrow}\right)}$,
and
\item for all $\vec{m} \in \left\{ (x,y) \in \dom{\sigma_n} \left| \; (x,y+1) \not \in \dom{\sigma_n} \right.\right\}$,
$$
\sigma_n\left(\vec{m}\right)(0,1) = \left\{
\begin{array}{ll}
\left(\textmd{*0v},2\right) & \textmd{if } n \geq 0 \textmd{ and } \vec{m}+(-1,0) \not \in \dom{\sigma_n} \\
\left(\textmd{0*v},1\right) & \textmd{if }  n \geq 0 \textmd{ and } \vec{m}+(1,0) \not \in \dom{\sigma_n}  \\
\left(\textmd{*1*v},2\right) & \textmd{if } n = -1  \\
\left(\textmd{0v},1\right)  & \textmd{otherwise},
\end{array}\right.
$$
\end{enumerate}
then there is a $\tau$-$T$-assembly sequence $\vec{\alpha} = \left(
\alpha_i \left| \; 0 \leq i < k < \infty \right. \right)$, with
$\alpha = \res{\vec{\alpha}}$, satisfying
\begin{enumerate}
\item $\alpha_0 = \sigma_n$,
\item $\dom{\alpha} = \widetilde{Y}_n \cup
S_{n+1}^{\uparrow}$,
\item for all $\vec{m} \in \dom{\alpha} - \dom{\sigma_n}$,
$$
\sum_{\vec{u} \in
\textmd{IN}^{\vec{\alpha}}(\vec{m})}{\textmd{str}_{\alpha_{i_{\vec{\alpha}}(\vec{m})}}(\vec{m},\vec{u})
} = \tau,
$$
\item for all $\vec{m} \in \dom{\alpha} - \dom{\sigma_n}$ and
all $t \in T-\{\alpha(\vec{m})\}$, $\vec{m} \not \in
\frontiertt{\left(\vec{\alpha} \setminus \vec{m}\right)}$,
\item for all $\vec{m} \in \left\{ (x,y) \in \dom{\alpha} - \dom{\sigma_n}
\left| \; (x+1,y) \not \in \dom{\alpha} - \dom{\sigma_n}
\right.\right\}$,
$$
\alpha\left(\vec{m}\right)(1,0) = \left\{
\begin{array}{ll}
\left(\textmd{*0h},2\right) & \textmd{if } \vec{m} + (0,1) \not \in \dom{\alpha} - \dom{\sigma_n} \\
\left(\textmd{1h},1\right) & \textmd{if } \vec{m} + (0,-1) \not \in \dom{\alpha} - \dom{\sigma_n} \\
\left(\textmd{0h},1\right) & \textmd{otherwise,}  \\
\end{array}\right.
$$
and

\item for all $\vec{m} \in \left\{ (x,y) \in
\dom{\alpha} \left| \; (x,y+1) \not \in \dom{\alpha}
\right.\right\}$,
$$
\alpha\left(\vec{m}\right)(0,1) = \left\{
\begin{array}{ll}
\left(\textmd{*0v},1\right) & \textmd{if } \vec{m}+(-1,0) \not \in \dom{\alpha} \\
\left(\textmd{0*v},2\right) & \textmd{if } \vec{m}+(1,0) \not \in \dom{\alpha}  \\
\left(\textmd{0v},1\right)  & \textmd{otherwise.}
\end{array}\right.
$$
\end{enumerate}
\end{lemma}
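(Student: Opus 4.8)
The plan is to prove this by exhibiting an explicit $\tau$-$T$-assembly sequence that starts from $\sigma_n$, adds exactly the tiles of the square $S_{n+1}^{\uparrow}$, and is locally deterministic in the strong sense encoded by conclusions (3)--(4); conclusions (5) and (6) will then be read off the boundary glues of the completed square. The one piece of input I need from the hypothesis is condition~(2): it guarantees that the top edge of $\sigma_n$ --- which, by the definition of $\widetilde{Y}_n$, is the top row of $Y_n^{\uparrow}$ when $n \geq 0$, and is the single tile $(0,3)$ when $n=-1$ --- exposes a strength-$2$ north glue at its left end and strength-$1$ north glues elsewhere. The tiles that fill the square are exactly the square tile types of Figure~9, i.e., the same ones used for the $\vec{\Box}$ portion of the proof of Lemma~\ref{techlemma0}, so no genuinely new tile behaviour has to be analyzed.

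First I would dispose of the degenerate case $n=-1$ directly: the tile at $(0,4)$ binds with strength $2$ to the $(\textmd{*1*v},2)$ glue, then $(0,5)$ binds by a strength-$2$ north glue, $(-1,5)$ by a strength-$2$ west glue, and finally $(-1,4)$ binds by strength-$(1+1)$ cooperation with its east and north neighbors; this completes $S_0^{\uparrow} = \{-1,0\}\times\{4,5\}$, and conclusions (1)--(6) are then a finite case check. For $n \geq 0$ I would describe how Figure~9's tiles fill the $(n+3)\times(n+3)$ square $S_{n+1}^{\uparrow}$ starting from the strength-$2$ glue at the left end of the top of $Y_n^{\uparrow}$: the portion of the square that sits over $Y_n^{\uparrow}$ grows by the same row-by-row (equivalently, column-by-column) mechanism as in the $\vec{\Box}$ portion of the proof of Lemma~\ref{techlemma0}, each interior tile binding by a strength-$1$ glue from the tile already placed next to it together with a strength-$1$ glue from the tile below it, while the extra left-hand column --- the new layer of the fiber, which overhangs $\widetilde{Y}_n$ and hence has no support from below --- is produced by a strength-$2$ glue chain hanging down from the square's top-left corner. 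Squareness, and hence termination of the growth exactly at $S_{n+1}^{\uparrow}$, is forced by the diagonal markers built into Figure~9's tile set.

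Then I would verify the conclusions. (1) is immediate; (2) records that the sequence adds precisely $S_{n+1}^{\uparrow}$; (3) is checked tile-type by tile-type, reading off that every square tile type used has IN-side strengths summing to exactly $\tau=2$. The step I expect to be the main obstacle is conclusion~(4): for every $\vec{m}\in S_{n+1}^{\uparrow}$ and every tile type $t \neq \alpha(\vec{m})$, one must confirm that the glues exposed in $\vec{\alpha}\setminus\vec{m}$ do not $\tau$-stably support $t$, which comes down to checking that no glue color used on the square is shared by an incompatible side of one of the counter/axis tile types of Figure~5. Conclusions (5) and (6) are then obtained by reading the east glues of the $x=0$ column and the north glues of the top row $y=l(n+1)$ off the construction; the patterns $(\textmd{*0h},2)/(\textmd{1h},1)/(\textmd{0h},1)$ and $(\textmd{*0v},1)/(\textmd{0*v},2)/(\textmd{0v},1)$ are exactly what the construction produces, and they are deliberately chosen to match, respectively, the hypothesis of the subsequent lemma that attaches the $X$-bars and hypothesis~(2) of Lemma~\ref{techlemma1}, so that the subsystems chain together. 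Apart from (4), the only genuinely non-routine geometric point is handling the overhang column correctly, since it is the unique column of $S_{n+1}^{\uparrow}$ with no tile beneath it and so must be grown downward from the corner rather than upward from $\widetilde{Y}_n$.
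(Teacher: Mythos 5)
Your proposal is correct and matches the paper's intended argument: the paper actually declares this lemma ``obvious,'' omits the proof entirely, and merely points to Figure~14 (the example of $S_2^{\uparrow}$ assembling atop $\widetilde{Y}_1$ from the Figure~9 square tile types). Your write-up supplies exactly the details the paper suppresses --- the separate $n=-1$ base case, the cooperative filling of the supported part of the square, the downward-growing overhang column at $x=-n-2$, and the glue-collision check for condition (4) --- so there is nothing to object to beyond noting that you have done more work than the authors did.
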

\begin{proof}
This is obvious, and therefore, we omit a detailed proof. See
Figure~14 for an example of the self-assembly of $S_2^{\uparrow}$
``on top of'' $\widetilde{Y}_1$.
\end{proof}

\begin{figure}[htp]
\begin{center}
\includegraphics[width=2.63in]{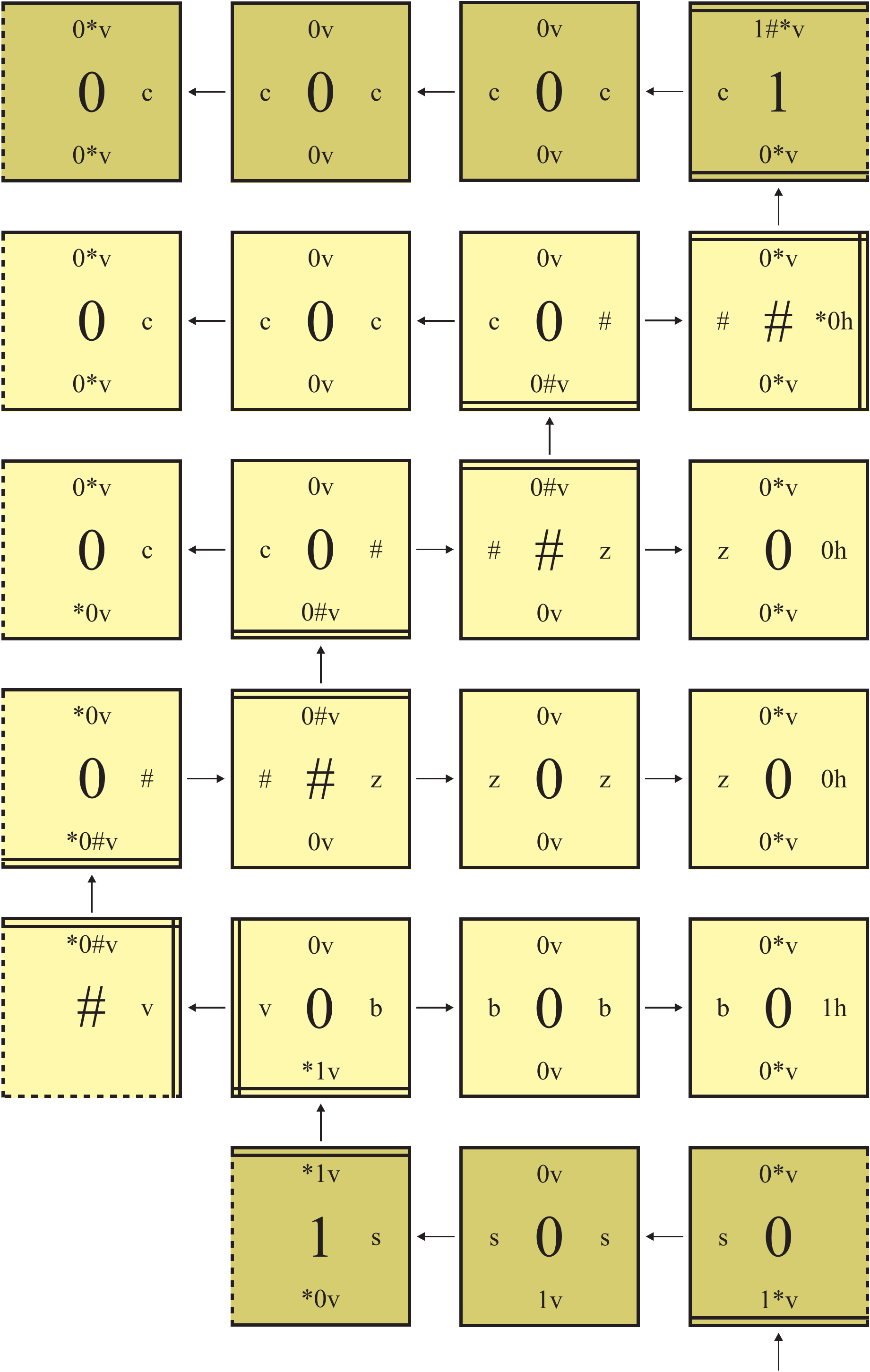}
\label{square_yaxis} \caption{The self-assembly of $S_2^{\uparrow}$
above the topmost row of $\widetilde{Y}_1$.}
\end{center}
\end{figure}

We now have the machinery to construct a directed TAS in which $Y$
strictly self-assembles.

\begin{lemma}
\label{techlemma3} There is a $\tau$-$T$-assembly sequence
$\vec{\alpha} = (\alpha_i | \; 0 \leq i < k)$, with $\alpha =
\res{\vec{\alpha}}$, satisfying
\begin{enumerate}
\item $\alpha_0 = \sigma$, where, for all $\vec{m} \in \mathbb{Z}^2$,
$$
\sigma(\vec{m}) = \left\{
\begin{array}{ll}
\textmd{the tile type shown in Figure~15} & \textmd{if } \vec{m} = (0,1) \\
\uparrow & \textmd{otherwise,}
\end{array}
\right.
$$
\item $\dom{\alpha} = Y$,
\item $\alpha$ is locally deterministic, and
\item for all $\vec{m} \in \dom{\alpha}$,
$$
\alpha\left(\vec{m}\right)(1,0) = \left\{
\begin{array}{ll}
\left(\textmd{*1h},2\right) & \textmd{if } \exists j \in \mathbb{N}, \textmd{ and }  \vec{m} = (0,\theta(2j+1))  \\
\left(\textmd{*0h},2\right) & \textmd{if } \exists j \in \mathbb{N}, \textmd{ and } \vec{m} = (0,\theta(2j)) \\
\left(\textmd{0h},1\right) & \textmd{if } \exists j \in \mathbb{N}, \textmd{ and } \\
                           & \quad\vec{m} \in \left(\{0\} \times \{\theta\left(j\right)-\rho\left(j\right)+1, \ldots, \theta\left(j\right)-1\}\right)  \\
\left(\textmd{1h},1\right) & \textmd{if } \exists j \in \mathbb{N}, \textmd{ and }  \vec{m} = (0,\theta(j)-\rho(j))  \\
\left(\lambda,0\right) & \textmd{otherwise}. \\
\end{array}\right.
$$
\end{enumerate}
\end{lemma}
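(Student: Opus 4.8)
The plan is to build $Y$ ``from the seed upward'' along the decomposition $Y = \widetilde{Y}_{-1}\cup\bigcup_{i=0}^{\infty}(S_i^{\uparrow}\cup Y_i^{\uparrow})$ of Observation~\ref{observationwhatever}, by concatenating the finite assembly sequences produced by Lemmas~\ref{techlemma0}--\ref{techlemma2} and then passing to a dovetailed limit.

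First I would dispose of a short preliminary stage: from the single seed tile at $(0,1)$ (Figure~15), place tiles at $(0,2)$ and $(0,3)$ of appropriate types from $T$, each binding with total strength exactly $\tau=2$ and with no competing tile type able to attach, to obtain an assembly whose domain is $\widetilde{Y}_{-1}$, whose exposed top glue at $(0,3)$ is $(\textmd{*1*v},2)$, and whose exposed right glues already agree with the formula in conclusion~(4). Then I would prove, by induction on $n\ge -1$, that there is a finite $\tau$-$T$-assembly sequence from this assembly whose result $\rho_n$ has $\dom{\rho_n}=\widetilde{Y}_n$, whose exposed top glues match hypothesis~(2) of Lemma~\ref{techlemma2}, which satisfies the formula in conclusion~(4) at every tile of $\widetilde{Y}_n$, and for which conclusions~(3) and~(4) of the technical lemmas (``just barely binds'' and ``no other tile type fits in $\vec\alpha\setminus\vec m$'') hold at every tile of $\widetilde{Y}_n$ other than the seed. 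The inductive step has two halves: invoke Lemma~\ref{techlemma2} (with its parameter $n-1$, using the ``$n=-1$'' clause of hypothesis~(2) at the bottom) to grow the square $S_n^{\uparrow}$ on top of $\rho_{n-1}$, and then invoke Lemma~\ref{techlemma1} (with its parameter $n$) to grow the bar $Y_n^{\uparrow}$ on top of that; conclusion~(5) of these two lemmas, which records the $(1,0)$-glues of the newly placed tiles, is what propagates the right-edge pattern of conclusion~(4). The essential point is that the top-glue interface recorded in conclusion~(6) of each lemma is exactly the interface demanded by hypothesis~(2) of the lemma applied next, which one verifies by matching the two piecewise glue formulas; the one place that needs care is the left-hand ``data-carrying'' column, where the glue handed between consecutive bars and squares must be checked explicitly.

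A routine dovetailing argument (of the sort used to extend the observation of \cite{Roth01} to infinite sequences) then splices all of these finite sequences into a single $\tau$-$T$-assembly sequence $\vec\alpha$ with $\alpha_0=\sigma$ whose result $\alpha$ has $\dom{\alpha}=\bigcup_{n\ge -1}\widetilde{Y}_n=Y$ by Observation~\ref{observationwhatever}; this establishes conclusions~(1) and~(2). Conclusion~(4) holds because every tile of $Y$ lies in some $\widetilde{Y}_n$, where the right-edge pattern was already verified; making this rigorous uses Observation~\ref{rho_fact} and the recurrence for $\theta$ to confirm that the ``$\theta$-point'' positions and the ``$\rho(j)$-high launch pads'' produced by the technical lemmas in their own coordinate frames line up, after the translations $S_i^{\uparrow}$, $Y_i^{\uparrow},\ldots$, with the single global pattern of conclusion~(4). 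For conclusion~(3), properties~(1) and~(2) of Definition~\ref{local_determinism_definition} for $\vec\alpha$ are precisely the accumulated conclusions~(3)--(4) of the technical lemmas over all tiles of $Y-\dom{\sigma}$, while property~(3) of that definition, $\frontiert{\alpha}=\emptyset$, is a finite inspection of the $25$ tile types of Construction~\ref{tileset}: every glue exposed on the boundary of the region $Y$ either has strength $0$ or has a color (e.g.\ $\textmd{0h}$, $\textmd{1h}$, $\textmd{*0h}$, $\textmd{*1h}$, which are meant to be consumed only by the larger tile set of later subsections) with no complementary glue anywhere in $T$, so no tile can be added to $\alpha$.

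I expect the main obstacle to be the verification behind conclusion~(4): confirming that Lemmas~\ref{techlemma0}, \ref{techlemma1}, and~\ref{techlemma2}, each of which assembles a bar or a square in its own local frame, really deposit their right-edge glues so that, once all the translations of Observation~\ref{observationwhatever} are applied, the resulting infinite string of exposed right glues along $Y$ is \emph{exactly} the $\theta/\rho$-indexed pattern demanded in~(4). A secondary subtlety is that property~(2) of local determinism refers to $\vec\alpha\setminus\vec m$, which in the global sequence may delete a larger set of ``$\textmd{OUT}$-neighbors'' than in the local sequence of a single technical lemma; this causes no trouble because $\textmd{OUT}^{\vec\alpha}(\vec m)$ is always contained in the single bar or square that contains $\vec m$, so the technical lemmas' guarantees transfer verbatim.
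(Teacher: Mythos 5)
Your proposal is correct and follows exactly the route the paper takes: its entire proof of this lemma is the single sentence ``Simply combine Lemmas~\ref{techlemma1} and~\ref{techlemma2} to get a locally deterministic assembly sequence for $Y$,'' and your alternation of the square-growing and bar-growing lemmas along the decomposition of Observation~\ref{observationwhatever}, followed by dovetailing and verification of the local-determinism and right-edge-glue conditions, is precisely the intended (but unwritten) expansion of that sentence. Your treatment is in fact considerably more careful than the paper's about the interface-matching and the $\vec{\alpha}\setminus\vec{m}$ subtlety.
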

\begin{proof}
Simply combine Lemmas~\ref{techlemma1}, and~\ref{techlemma2} to get
a locally deterministic assembly sequence for $Y$.
\end{proof}
\begin{figure}[htp]
\begin{center}
\includegraphics[width=0.54in]{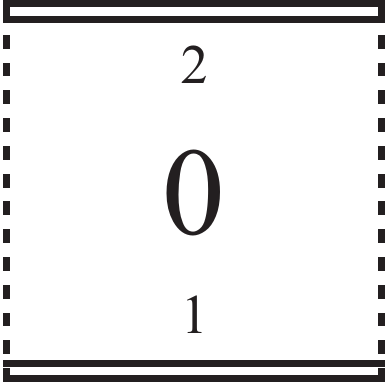}
\label{yseedtile} \caption{The seed tile type for the $y$-axis of
$\mathbf{T}$.}
\end{center}
\end{figure}

\begin{theorem}
$Y$ strictly self-assembles in the directed TAS $\mathcal{T} =
(T,\sigma,\tau)$.
\end{theorem}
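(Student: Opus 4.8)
Here is the plan. The entire burden is carried by Lemma~\ref{techlemma3}, and the present theorem is obtained by feeding that lemma into the local-determinism machinery of Section~2.3. By Lemma~\ref{techlemma3} there is a $\tau$-$T$-assembly sequence $\vec{\alpha}=(\alpha_i\mid 0\le i<k)$ with $\alpha_0=\sigma$, whose result $\alpha=\res{\vec{\alpha}}$ satisfies $\dom\alpha=Y$, and which is locally deterministic. First I would observe that this is exactly the witness required by Definition~\ref{locally_deterministic_tas_def}: a locally deterministic $\tau$-$T$-assembly sequence starting at $\sigma$ shows that the GTAS $\mathcal{T}=(T,\sigma,\tau)$ is locally deterministic, so Theorem~\ref{local_determinism_theorem} gives that $\mathcal{T}$ is directed. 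I would also note that $\mathcal{T}$ is a genuine TAS: by Construction~\ref{tileset} the tile set $T$ has $25$ elements and $\dom\sigma=\{(0,1)\}$ is a singleton, so both $T$ and $\dom\sigma$ are finite.

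Next I would pin down the unique terminal assembly. Property~(3) of Definition~\ref{local_determinism_definition} applied to $\vec{\alpha}$ says $\frontiert{\alpha}=\emptyset$, so $\alpha$ is terminal; and $\arrowtett{\sigma}{\alpha}$ witnesses $\alpha\in\prodasm{\mathcal{T}}$, hence $\alpha\in\termasm{T}$. Since $\mathcal{T}$ is directed, the characterization of directed GTAS's stated earlier (directed $\iff |\termasm{T}|=1$) forces $\termasm{T}=\{\alpha\}$. Because $\dom\alpha=Y$, every terminal assembly produced by $\mathcal{T}$ has domain exactly $Y$, which is by definition the statement that $Y$ strictly self-assembles in the directed TAS $\mathcal{T}$.

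The hard part is not in this theorem at all but is already discharged in Lemma~\ref{techlemma3}, and through it in Lemmas~\ref{techlemma0}, \ref{techlemma1}, and~\ref{techlemma2}. The real obstacle there is verifying that the stagewise growth --- producing $\widetilde{Y}_{n-1}\cup S_n^{\uparrow}$, then the modified optimal-counter block $Y_{n-1}^{\uparrow}$, then the next square $S_n^{\uparrow}$, and iterating according to Observation~\ref{observationwhatever} --- actually fits together: the glue-label invariants recorded in conditions~(5)--(6) of those lemmas must coincide with the hypotheses demanded at the following stage, and the counter of width $i+2$ must realize height $l(i)-i-2$ (via Observation~\ref{rho_fact}) so that the block it lays down is precisely $Y_i^{\uparrow}$, together with the local-determinism bookkeeping (conditions (1)--(3) of Definition~\ref{local_determinism_definition}) for every newly placed tile. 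Granting those lemmas, the proof of the present theorem is just the short deduction above.
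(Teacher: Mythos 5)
Your proposal is correct and follows the same route as the paper: both derive the theorem by citing Lemma~\ref{techlemma3} as the witness that $\mathcal{T}$ is locally deterministic, hence directed, with the locally deterministic sequence's result (of domain $Y$) being the unique terminal assembly. You simply spell out the terminality and uniqueness steps that the paper's one-line proof leaves implicit.
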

\begin{proof}
Lemma~\ref{techlemma3} testifies to the fact that $\mathcal{T}$ is a
locally deterministic TAS, and hence is directed.
\end{proof}

A straightforward ``reflection'' of $T$ will yield a directed TAS in
which $X$ strictly self-assembles.

\begin{corollary}
Let $h : \Sigma \rightarrow \Sigma$, where for all $a \in
\Sigma$,
$$
h(a) = \left\{
\begin{array}{ll}
\lambda & \textmd{if } a = \textmd{v} \\
a & \textmd{otherwise.}
\end{array}
\right.
$$
$X$ strictly self-assembles in the directed TAS $\mathcal{T}' =
(T',\sigma',\tau)$, where
$$
T^{\prime} = \left\{ t' \; | \textmd{ for all } (x,y)\in U_2, t \in
T, \; t'(y,x) = \left(
h(\textmd{rev}\left(\color_t(x,y)\right))\textmd{h},\strength_t(x,y)\right)
\right\},
$$
and, for all $\vec{m} \in \mathbb{Z} \times \mathbb{Z}$ and $(x,y)
\in U_2$,
$$
\sigma'(\vec{m})(y,x) = \left\{
\begin{array}{ll}
\sigma(\vec{m})(x,y) & \textmd{if } \vec{m} = (1,0) \\
\uparrow & \textmd{otherwise.}
\end{array}
\right.
$$
\end{corollary}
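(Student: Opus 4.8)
The plan is to exploit the diagonal symmetry of the entire construction: the $x$-axis assembly is nothing but the $y$-axis assembly reflected across the line $m=n$, so instead of redoing any counting I would transport the theorem just proved through that reflection. Let $R:\mathbb{Z}^2\to\mathbb{Z}^2$ be the reflection $R(m,n)=(n,m)$, which also acts on $U_2$ by swapping $(1,0)\leftrightarrow(0,1)$ and $(-1,0)\leftrightarrow(0,-1)$. \emph{First} I would record that $\mathbf{T}$ is $R$-invariant: in the recursion of Section~\ref{FIBERED_DEF} the vector set $V=\{(1,0),(0,1)\}$ is $R$-invariant, the base sets $T_0=S_2$ and $F_0$ are $R$-invariant by inspection of (2.2) and of the displayed formula for $F_0$, and the recurrences (4.1)--(4.2) are $R$-equivariant (since $R$ is linear, $R(A+cB)=R(A)+cR(B)$); a routine induction then gives $R(T_i\cup F_i)=T_i\cup F_i$ for all $i$, hence $R(\mathbf{T})=\mathbf{T}$. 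Comparing with the definitions of the $x$- and $y$-axes of $\mathbf{T}$ yields $R(Y)=X$.

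\emph{Next} I would promote $R$ to an isomorphism between the two assembly processes. For a $T$-configuration $\alpha$ define the $T'$-configuration $R\alpha$ by $\dom(R\alpha)=R(\dom\alpha)$ and $(R\alpha)(R\vec m)=(\alpha(\vec m))'$, where $t\mapsto t'$ is the relabeling in the statement. Three things need to be checked: (i) $t\mapsto t'$ is a bijection of $T$ onto $T'$, immediate from the definition of $T'$; (ii) strengths are preserved side-for-side under the swap, $\strength_{t'}(R\vec u)=\strength_t(\vec u)$, which is built directly into the definition of $t'$; and (iii) for all $t_1,t_2\in T$ and $\vec u\in U_2$, $\color_{t_1}(\vec u)=\color_{t_2}(-\vec u)$ holds iff $\color_{t_1'}(R\vec u)=\color_{t_2'}(-R\vec u)$. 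For (iii) it suffices that the color map $c\mapsto h(\textmd{rev}(c))\textmd{h}$ be injective on the set of colors occurring on positive-strength sides of tiles of $T$; by Construction~\ref{tileset} together with the additional tile types shown in the figures of this subsection, those colors are precisely the listed strings ending in `\textmd{v}' or `\textmd{h}', the map $\textmd{rev}$ is a bijection of $\Sigma^*$, and $h$ merely deletes a terminal `\textmd{v}', so no two distinct such colors are identified; colors on strength-$0$ sides play no role in binding. Together, (i)--(iii) say exactly that $t_1$ at $\vec m$ binds $t_2$ at $\vec m+\vec u$ with a given strength iff $t_1'$ at $R\vec m$ binds $t_2'$ at $R\vec m+R\vec u$ with the same strength; hence $R$ carries $\tau$-stable $T$-configurations to $\tau$-stable $T'$-configurations and single-tile extensions to single-tile extensions, so $\alpha\mapsto R\alpha$ is an order-isomorphism from $(\asmbtt,\To)$ onto the analogous poset for $T'$.

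\emph{Finally} I would close the loop through the seed: by construction $\sigma'=R\sigma$, i.e., $\sigma'$ places the reflected $y$-seed tile of Figure~15 at $R(0,1)=(1,0)$. Therefore $\alpha\mapsto R\alpha$ restricts to a bijection from $\prodasm{T}$ onto the set of assemblies produced by $(T',\sigma',\tau)$ and from $\termasm{T}$ onto its set of terminal assemblies. Since the theorem just proved gives $\dom\alpha=Y$ for every $\alpha\in\termasm{T}$, every terminal assembly $\beta$ of $(T',\sigma',\tau)$ satisfies $\dom\beta=R(Y)=X$; thus $X$ strictly self-assembles in $\mathcal{T}'=(T',\sigma',\tau)$, which is moreover directed because it has a unique terminal assembly. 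The one genuine obstacle is step (iii): confirming that the `\textmd{v}'-deleting map $h$ never identifies two binding-relevant glues, and that the direction swap is applied consistently (a top/bottom glue of $t$ becoming a left/right glue of $t'$, which is exactly what $R$ demands). Everything else is a mechanical transport of the $y$-axis theorem through $R$.
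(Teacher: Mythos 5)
Your proposal is correct and follows exactly the route the paper intends: the paper offers no proof of this corollary beyond the remark that a ``straightforward reflection'' of $T$ yields the TAS for $X$, and your argument is precisely that reflection, spelled out (the $R$-invariance of $\mathbf{T}$ giving $R(Y)=X$, the bijection $t\mapsto t'$ preserving strengths and binding, and the transport of terminal assemblies through $R\sigma=\sigma'$). Your identification of the injectivity of the glue relabeling $c\mapsto h(\textmd{rev}(c))\textmd{h}$ on binding-relevant colors as the one point genuinely requiring verification is exactly the right place to focus, and supplies detail the paper omits.
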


\subsection{Self-Assembly of the Interior}
We now turn our attention to the self-assembly of the interior of
$\mathbf{T}$.

In the following lemma, we show how vertical bars attach to the
$X$-axis.

\begin{lemma}
\label{techlemma4} Let $j \in \mathbb{N}$. If $\sigma \in
\asmbt{\tau}{T \cup T'}$ satisfies
\begin{enumerate}
\item $\displaystyle\dom{\sigma} = X$, and
\item for all $\vec{m} \in \left(\{ \theta(j)-\rho(j),\ldots,\theta(j)\} \times \{0\}\right)$,
$$
\sigma\left(\vec{m}\right)(0,1) = \left\{
\begin{array}{ll}
\left(\textmd{1*v},2\right) & \textmd{if } j \textmd{ is odd} \\
\left(\textmd{0*v},2\right) & \textmd{if } j \textmd{ is even, and } \vec{m} = \left( \theta(j),0 \right)  \\
\left(\textmd{1v},1\right) & \textmd{if } j \textmd{ is even, and } \vec{m} = \left( \theta(j)-\rho(j),0 \right) \\
\left(\textmd{0v},1\right)  & \textmd{otherwise},
\end{array}\right.
$$
\end{enumerate}
then there is a $\tau$-$T \cup T'$-assembly sequence $\vec{\alpha} =
\left( \alpha_i \left| \; 0\leq i < k < \infty \right. \right)$,
with $\alpha = \res{\vec{\alpha}}$, satisfying
\begin{enumerate}
\item $\alpha_0 = \sigma$,
\item $\dom{\alpha} = X \cup \left( \theta_j(X) + Y_{\rho(j)-1}\right)$,
\item for all $\vec{m} \in \dom{\alpha} - \dom{\sigma}$,
$$
\sum_{\vec{u} \in
\textmd{IN}^{\vec{\alpha}}(\vec{m})}{\textmd{str}_{\alpha_{i_{\vec{\alpha}}(\vec{m})}}(\vec{m},\vec{u})
} = \tau,
$$
\item for all $\vec{m} \in \dom{\alpha} - \dom{\sigma}$ and
all $t \in \left(T \cup T'\right)-\{\alpha(\vec{m})\}$, $\vec{m}
\not \in \frontiertt{\left(\vec{\alpha} \setminus \vec{m}\right)}$,
and
\item for all $\vec{m} \in \dom{\alpha}$,
$$
\alpha\left(\vec{m}\right)(1,0) = \left\{
\begin{array}{ll}
\left(\textmd{*1h},2\right) & \textmd{if } \exists j' \in \mathbb{N}, \textmd{ and }  \vec{m} = (\theta(j),\theta(2j'+1))  \\
\left(\textmd{*0h},2\right) & \textmd{if } \exists j' \in \mathbb{N}, \textmd{ and } \vec{m} = (\theta(j),\theta(2j')) \\
\left(\textmd{0h},1\right) & \textmd{if } \exists j' \in \mathbb{N}, \textmd{ and } \\
                           & \vec{m} \in (\{\theta(j)\} \times \\
                         & \quad \{\theta\left(j'\right)-\rho\left(j'\right)+1, \ldots, \theta\left(j'\right)-1\})  \\
\left(\textmd{1h},1\right) & \textmd{if } \exists j' \in \mathbb{N} \textmd{ and }  \vec{m} = (\theta(j),\theta(j')-\rho(j'))  \\
\left(\lambda,0\right) & \textmd{otherwise}. \\
\end{array}\right.
$$
\end{enumerate}
\end{lemma}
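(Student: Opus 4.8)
The plan is to reduce this lemma to the earlier results on the self-assembly of the $y$-axis, since the vertical bar $\theta_j(X)+Y_{\rho(j)-1}$ that must grow upward from the $j$-th $\theta$-point of $X$ is just a translated copy of the bar $Y_{\rho(j)-1}$, and such bars are assembled by exactly the optimal-counter tile types of Construction~\ref{tileset} that drive Lemma~\ref{techlemma0}. First I would split on the parity of $j$. When $j$ is odd, $\rho(j)=0$, so $Y_{\rho(j)-1}=Y_{-1}$ is the degenerate ``width-$1$ counter'' consisting of a single point, and the conclusion merely asks that one tile be placed directly above $\theta_j(X)$; the strength-$2$ glue $(\textmd{1*v},2)$ that hypothesis~(2) posts there uniquely determines this tile, and conditions (3)--(6) follow by inspection. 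This is the base case.

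When $j$ is even, put $k=\rho(j)\geq 1$, so that $\theta_j(X)+Y_{\rho(j)-1}$ is the translate by $\theta_j(X)$ of the width-$(k+1)$ counter rectangle $Y_{k-1}$. The crux is to check that, under this translation, the glues that hypothesis~(2) writes along $\{\theta(j)-\rho(j),\ldots,\theta(j)\}\times\{0\}$ on the top edge of $X$ are precisely the seed required by Lemma~\ref{techlemma0} with $n=k-1$: the strength-$2$ glue $(\textmd{0*v},2)$ at the right end and the strength-$1$ glues $(\textmd{0v},1)$ in the interior reproduce the initial counter row of that lemma's hypothesis, and the extra strength-$1$ glue $(\textmd{1v},1)$ at the left end is exactly what seeds the leftmost ``fiber'' column in the proof of Lemma~\ref{techlemma1}. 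Having matched the seeds, I would invoke Lemma~\ref{techlemma0} to grow $Y_{k-1}$ minus its leftmost column, carrying along the relevant properties of that lemma, and then, exactly as in the proof of Lemma~\ref{techlemma1}, append the leftmost fiber column using its two fiber-column tile types (one iterated up the column, one cap). Concatenating these assembly sequences gives a single $\tau$-$(T\cup T')$-assembly sequence $\vec\alpha$, well defined because the successive domains are pairwise disjoint and each single-tile extension stays $\tau$-stable.

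It then remains to read the six conclusion conditions off $\vec\alpha$. Conditions (1) and (2) are immediate, since $X$ is never touched and the appended tiles fill exactly $\theta_j(X)+Y_{\rho(j)-1}$. Conditions (3) and (4) --- each new tile binds with total strength exactly $\tau$, and after deleting it and its $\textmd{OUT}$-neighbors no other tile type can attach at its location --- transfer from the corresponding conditions of Lemma~\ref{techlemma0} for the counter tiles in the interior of the bar, and are checked by hand for the finitely many fiber-column, cap, and $X$-to-bar interface tiles; ruling out the reflected tiles of $T'$ in condition (4) is immediate because their glue colors (ending in ``h'') cannot match the ``v''-colored glues along the interior of a $Y$-bar. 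Condition (5) is obtained by transporting condition (5) of Lemma~\ref{techlemma0} along $\theta_j(X)$: the $\theta$-points of the grown bar land exactly at the positions $(\theta(j),\theta(2j'+1))$, $(\theta(j),\theta(2j'))$, and $(\theta(j),\theta(j')-\rho(j'))$ and along the runs $\{\theta(j)\}\times\{\theta(j')-\rho(j')+1,\ldots,\theta(j')-1\}$, bearing the glues $(\textmd{*1h},2)$, $(\textmd{*0h},2)$, $(\textmd{1h},1)$, and $(\textmd{0h},1)$, with $(\lambda,0)$ elsewhere along the bar's right edge.

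I expect the only real difficulty to be clerical: keeping the coordinate offsets straight and matching the glue-label conventions --- including the $v\mapsto h$ relabeling separating $T$ from $T'$ --- across the hypotheses and conclusions of Lemmas~\ref{techlemma0}, \ref{techlemma1}, and the present lemma, and verifying the handful of interface tiles (at the junction of the $X$-axis with the first row of the new bar, and at the junction of the counter region with the fiber column) so that the pointwise conditions (3) and (4) genuinely propagate. No idea beyond those already deployed in Lemmas~\ref{techlemma0}--\ref{techlemma2} should be needed.
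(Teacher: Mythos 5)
Your proposal is correct and takes essentially the same approach as the paper, whose entire proof of this lemma is the single sentence ``This follows directly from Lemma~\ref{techlemma0}.'' Your write-up is in fact more detailed than the paper's: the parity split on $j$, the seed-matching against the hypothesis of Lemma~\ref{techlemma0}, and the appending of the leftmost fiber column as in Lemma~\ref{techlemma1} are exactly the verifications the paper leaves implicit.
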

\begin{proof}
This follows directly from Lemma~\ref{techlemma0}.
\end{proof}

\begin{corollary}
\label{techcorollary1} Let $j,j' \in \mathbb{N}$ with $1 \leq j' <
2^{\rho(j)}$. If $\sigma \in \asmbt{\tau}{T \cup T'}$ satisfies
\begin{enumerate}
\item $X \cup \left(\theta_j(X)+ Y_{\rho(j)-1}\right) \subseteq
\dom{\sigma} \subseteq \mathbf{T} - \left((\theta(j),\theta(j')) +
X_{\rho\left(j'\right)-1} \right)$,
\item for all $ \vec{m} \in
\left(\left(\{0\} \times
\{\theta\left(j'\right)-\rho\left(j'\right),\ldots,\theta\left(j'\right)\}
\right) + (\theta(j),0)\right)$,
$$
\alpha\left(\vec{m}\right)(1,0) = \left\{
\begin{array}{ll}
\left(\textmd{*1h},2\right) & \textmd{if } j' \textmd{ is odd}  \\
\left(\textmd{*0h},2\right) & \textmd{if } j' \textmd{ is even, and } \vec{m} = (\theta(j),\theta(j')) \\
\left(\textmd{1h},1\right) & \textmd{if } j' \textmd{ is even, and }  \vec{m} = (\theta(j),\theta(j')-\rho(j')) \\
\left(\textmd{0h},1\right) & \textmd{otherwise,} \\
\end{array}\right.
$$
\end{enumerate}
there is a $\tau$-$T \cup T'$-assembly sequence $\vec{\alpha} =
(\alpha_i \; | \; 0 \leq i < k < \infty)$ satisfying
\begin{enumerate}
\item $\alpha_0 = \sigma$,
\item $X \cup \left(\theta_j(X)+ Y_{\rho(j)-1}\right) \cup \left((\theta(j),\theta(j')) + X_{\rho\left(j'\right)-1} \right) \subseteq
\dom{\alpha} \subseteq \dom{\sigma} \cup
\left((\theta(j),\theta(j')) + X_{\rho\left(j'\right)-1} \right)$,
\item for all $\vec{m} \in \dom{\alpha} - \dom{\sigma}$,
$$
\sum_{\vec{u} \in
\textmd{IN}^{\vec{\alpha}}(\vec{m})}{\textmd{str}_{\alpha_{i_{\vec{\alpha}}(\vec{m})}}(\vec{m},\vec{u})
} = \tau,
$$
\item for all $\vec{m} \in \dom{\alpha} - \dom{\sigma}$ and
all $t \in \left(T \cup T'\right)-\{\alpha(\vec{m})\}$, $\vec{m}
\not \in \frontiertt{\left(\vec{\alpha} \setminus \vec{m}\right)}$,
and
\item for all $\vec{m} \in \dom{\alpha} - \dom{\sigma}$,
$$
\alpha\left(\vec{m}\right)(1,0) = \left\{
\begin{array}{ll}
\left(\textmd{*1v},2\right) & \textmd{if } \exists j'' \in \mathbb{N}, \textmd{ and }  \vec{m} = (\theta(2j''+1),\theta(j'))  \\
\left(\textmd{*0v},2\right) & \textmd{if } \exists j'' \in \mathbb{N}, \textmd{ and } \vec{m} = (\theta(2j''),\theta(j')) \\
\left(\textmd{0v},1\right) & \textmd{if } \exists j'' \in \mathbb{N}, \textmd{ and } \\
                            & \vec{m} \in (\{\theta\left(j''\right)-\rho\left(j''\right)+1, \ldots, \theta\left(j''\right)-1\} \\
                           & \quad \times \{\theta(j')\})  \\
\left(\textmd{1v},1\right) & \textmd{if } \exists j'' \in \mathbb{N}, \textmd{ and }  \vec{m} = (\theta(j'')-\rho(j''),\theta(j'))  \\
\left(\lambda,0\right) & \textmd{otherwise}. \\
\end{array}\right.
$$
\end{enumerate}
\end{corollary}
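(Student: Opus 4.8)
The plan is to derive this corollary as the diagonal-reflected instance of Lemma~\ref{techlemma4}, using the very reflection that was used in Section~5.2 to pass from the self-assembly of the $y$-axis to that of the $x$-axis (the map exchanging the coordinate axes, the tile sets $T$ and $T'$, and the glue suffixes \textmd{v} and \textmd{h}). The bar characterization (Theorem~\ref{barchar}) exhibits $\mathbf{T}$ as bars hanging off bars: vertical $Y$-bars sprout from the $x$-axis (which is Lemma~\ref{techlemma4}), horizontal $X$-bars sprout from those, vertical bars again from those, and so on, every bar being one instance of the optimal-counter gadget of Lemma~\ref{techlemma0} up to this reflection. Corollary~\ref{techcorollary1} is precisely the second step of that alternation: it grows the horizontal bar $(\theta(j),\theta(j'))+X_{\rho(j')-1}$ off the $j'^{\textmd{th}}$ $\theta$-point of the vertical bar $\theta_j(X)+Y_{\rho(j)-1}$, so it should be exactly the diagonal-reflected reading of Lemma~\ref{techlemma4}.

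First I would check that hypothesis~(2) of the corollary is the reflected image of hypothesis~(2) of Lemma~\ref{techlemma4}. The segment $(\{0\}\times\{\theta(j')-\rho(j'),\ldots,\theta(j')\})+(\theta(j),0)$ consists of $\rho(j')+1$ points on the exposed long edge of the bar $\theta_j(X)+Y_{\rho(j)-1}$, which matches the register width $(\rho(j')-1)+2=\rho(j')+1$ of the counter that builds an $X_{\rho(j')-1}$-bar; and the prescribed \textmd{*1h}/\textmd{*0h}/\textmd{1h}/\textmd{0h} glues on that segment are the reflected form of the \textmd{1*v}/\textmd{0*v}/\textmd{1v}/\textmd{0v} glues that Lemma~\ref{techlemma4} demands along the relevant stretch of the $x$-axis, with the parity of $j'$ in the role of the parity of $j$. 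The side condition $1\le j'<2^{\rho(j)}$ is precisely what makes $\theta(j')$, and hence all of the attached $X_{\rho(j')-1}$-bar, fit within the vertical extent of the $Y_{\rho(j)-1}$-bar: a routine estimate using the closed form $l(i)=3\cdot 2^{i+1}-1$ together with Observation~\ref{rho_fact} gives $\theta(2^{\rho(j)}-1)=3\cdot 2^{\rho(j)}-\rho(j)-3$, which is exactly one less than the height $l(\rho(j)-1)-\rho(j)-1=3\cdot 2^{\rho(j)}-\rho(j)-2$ of that bar, while $\theta(2^{\rho(j)})$ already exceeds it.

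Next I would feed the reflected assembly sequence supplied by Lemma~\ref{techlemma4} (which itself rests on Lemma~\ref{techlemma0}) into the ambient configuration $\sigma$. Hypothesis~(1) sandwiches $\dom{\sigma}$ between $X\cup(\theta_j(X)+Y_{\rho(j)-1})$ and $\mathbf{T}-((\theta(j),\theta(j'))+X_{\rho(j')-1})$, so every tile placed by that sequence lands in a region that is empty in $\sigma$ and contained in $\mathbf{T}$; hence there is no collision, conclusion~(2) holds, and conclusions~(1),~(3),~(4) transfer verbatim from conclusions~(1),~(3),~(4) of Lemma~\ref{techlemma4}, the ``just barely binds'' and ``no rogue tile'' properties being local to the newly grown bar and its seed segment, both of which lie in the protected region. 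Finally, conclusion~(5) is the reflected image of conclusion~(5) of Lemma~\ref{techlemma4}: along its exposed long edge the new bar now carries the \textmd{*1v}/\textmd{*0v}/\textmd{0v}/\textmd{1v} glues, indexed by $j''$ at the $\theta$-points, which are exactly the seed glues required by the next level of vertical bars, so the recursion closes.

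The main obstacle I anticipate is bookkeeping rather than mathematics: carrying out all the reflections consistently at once (coordinates, $T\leftrightarrow T'$, \textmd{v}$\leftrightarrow$\textmd{h}, and, for each bar, distinguishing the long edge that carries output glues from the other long edge), and pinning down the arithmetic ``fits-inside'' identities so that the bar built is \emph{exactly} $(\theta(j),\theta(j'))+X_{\rho(j')-1}$ and nothing more. Once those $\rho$/$\theta$/$l$ identities are verified, the local-determinism conditions~(3) and~(4) are inherited directly from Lemma~\ref{techlemma4}, since the added tiles and all of their $\textmd{OUT}$-neighbors remain inside the region that hypothesis~(1) guarantees is untouched.
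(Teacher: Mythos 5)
Your proposal is correct and takes essentially the same route the paper intends: the paper gives no explicit proof of this corollary, but the remark immediately following it (``the results of this subsection are invariant under reflection'') is precisely the diagonal-reflection reduction to Lemma~\ref{techlemma4} (and hence to Lemma~\ref{techlemma0}) that you carry out. Your explicit check that $1\le j'<2^{\rho(j)}$ makes the attached $X_{\rho(j')-1}$-bar fit inside the host $Y_{\rho(j)-1}$-bar, via $\theta(2^{\rho(j)}-1)=3\cdot 2^{\rho(j)}-\rho(j)-3$, is a correct and worthwhile detail that the paper omits.
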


Note that the results of this subsection are invariant under
``reflection.''

\subsection{Proof of Correctness}
We are now ready to prove our second main theorem.

\begin{lemma}
\label{techlemma5} Let
$$
T_{\mathbf{T}} = T \cup T' \cup \{ \textmd{the tile type shown in
Figure~16} \}.
$$
There is a $\tau$-$T_{\mathbf{T}}$-assembly sequence $\vec{\alpha} =
(\alpha_i \; | \; 0 \leq i < k)$, with $\alpha =
\res{\vec{\alpha}}$, satisfying
\begin{enumerate}
\item $\alpha_0 = \sigma_{\mathbf{T}}$, where, for all $\vec{m} \in \mathbb{Z}
\times \mathbb{Z}$,
$$
\sigma_{\mathbf{T}}(\vec{m}) = \left\{
\begin{array}{ll}
\textmd{the tile type shown in Figure~16} & \textmd{if } \vec{m} = (0,0) \\
\uparrow & \textmd{otherwise,}
\end{array}
\right.
$$
\item $\dom{\alpha} = \mathbf{T}$, and
\item $\vec{\alpha}$ is locally deterministic.
\end{enumerate}
\end{lemma}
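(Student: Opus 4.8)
The plan is to assemble $\vec{\alpha}$ by stitching together the assembly sequences already produced in Subsections~5.2 and~5.3, using the bar characterization $\mathbf{T} = \{(0,0)\}\cup\theta(X)\cup\theta(Y)$ of Theorem~\ref{barchar} as the road map, and then to read off local determinism from the per-tile clauses already verified in those subsections. First I would arrange that the seed tile $\sigma_{\mathbf{T}}$ at the origin exposes strength-$2$ glues on its top and right sides that match, respectively, the $y$-axis seed of Figure~15 and its reflection. Lemma~\ref{techlemma3} (for the $y$-axis) and its reflection (for the $x$-axis) then apply from $\sigma_{\mathbf{T}}$: the configuration grows so as to place exactly $Y$ on and to the left of the $y$-axis and exactly $X$ on and below the $x$-axis, and the resulting sub-assembly-sequences satisfy their stated per-tile conditions. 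Since $X$ and $Y$ share only the seed and grow into essentially disjoint quadrants, these two sequences concatenate (in either order) into one valid $\tau$-$T_{\mathbf{T}}$-assembly sequence whose result has domain $\{(0,0)\}\cup X\cup Y$.

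Second comes the recursive step. Along $X$, the tiles placed at the $\theta$-points $\theta_j(X)$ carry exactly the strength-$2$ top glues (\textmd{*1h}/\textmd{*0h}) recorded in the conclusion of Lemma~\ref{techlemma3}, and symmetrically along $Y$. These are precisely hypotheses (1)--(2) of Lemma~\ref{techlemma4}, so a vertical bar congruent to $Y_{\rho(j)-1}$ (translated to $\theta_j(X)$) self-assembles, and by reflection a horizontal bar congruent to $X_{\rho(j)-1}$ self-assembles at each $\theta_j(Y)$. Moreover, each new bar emerges with its own $\theta$-points decorated exactly as in hypothesis~(2) of Corollary~\ref{techcorollary1}, so still smaller perpendicular bars sprout from it, recursively. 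Because $\rho(j)-1$ strictly decreases along every chain of nested bars (the bottom-level bars $X_{-1}$, $Y_{-1}$ sprout nothing), each such recursion terminates after finitely many steps; this is exactly the structure of the recursions defining $\theta(X_i)$, $\theta(Y_i)$, hence $\theta(X)$ and $\theta(Y)$. A routine breadth-first (dovetailing) enumeration of this bar hierarchy — axes first, then the bars in order of increasing distance from the axis and then recursively by depth — interleaves all the finite sub-sequences into a single sequence $\vec{\alpha}=(\alpha_i\mid 0\le i<\infty)$; using the disjoint-union structure of Observation~\ref{observationwhatever} together with the $\theta$-closure identities, the translated bars are pairwise disjoint, so $\dom(\res{\vec{\alpha}}) = \{(0,0)\}\cup\theta(X)\cup\theta(Y) = \mathbf{T}$ by Theorem~\ref{barchar}. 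This gives conclusions (1) and (2).

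Third, local determinism. Clauses (1) and (2) of Definition~\ref{local_determinism_definition} are purely local at each non-seed $\vec{m}$ — the incoming strengths at $\vec{m}$ sum to exactly $\tau$, and after deleting the tile at $\vec{m}$ together with its OUT-neighbors no other tile type fits at $\vec{m}$ — and these are precisely the statements proved for every $\vec{m}$ in the conclusions of Lemmas~\ref{techlemma0}, \ref{techlemma1}, \ref{techlemma2}, \ref{techlemma4} and Corollary~\ref{techcorollary1} (their clauses (3)--(4)), so they transfer verbatim to $\vec{\alpha}$. Clause (3), $\partial^\tau\alpha = \emptyset$, says $\alpha$ is terminal; here one checks that on every outward-facing side of the boundary of $\mathbf{T}$ the glue is either a strength-$0$ glue (the $(\lambda,0)$ glues) or a strength-$1$ glue whose color matches no complementary side of any of the $51$ tile types with strength $\ge 1$, so that no point of $\mathbb{Z}^2-\mathbf{T}$ ever accrues binding strength $\tau=2$.

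The main obstacle I anticipate is the global bookkeeping in the second paragraph: verifying that the recursively attached bars never overlap one another and never re-enter already-tiled territory (so that the concatenation of the local sequences really is a valid assembly sequence with domain exactly $\mathbf{T}$), and, hand in hand with this, establishing clause~(3) of local determinism — that no spurious frontier location is ever created just off the boundary of $\mathbf{T}$. The disjointness half reduces to the $\theta$-closure identities and Observation~\ref{observationwhatever}, and the no-stray-frontier half reduces to a finite inspection of the glue colors and strengths of $T_{\mathbf{T}}$; but organizing these into a clean argument, and exhibiting a concrete well-ordering of the infinitely many bar sub-sequences that makes $\vec{\alpha}$ a genuine sequence indexed by $\mathbb{N}$, is where the real work lies.
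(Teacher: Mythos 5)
Your proposal is correct and takes essentially the same approach as the paper, whose entire proof is the one-line instruction to dovetail the assembly sequences given by Lemmas~\ref{techlemma3} and~\ref{techlemma4} and Corollary~\ref{techcorollary1} into a locally deterministic assembly sequence for $\mathbf{T}$. You have simply spelled out the bookkeeping (the bar hierarchy, the dovetailing order, the disjointness and frontier checks) that the paper leaves implicit.
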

\begin{proof}
Simply dovetail the assembly sequences given by
Lemmas~\ref{techlemma3},~\ref{techlemma4}, and
Corollary~\ref{techcorollary1}, to get a locally deterministic
assembly sequence for $\mathbf{T}$.
\end{proof}

\begin{figure}[htp]
\begin{center}
\includegraphics[width=0.54in]{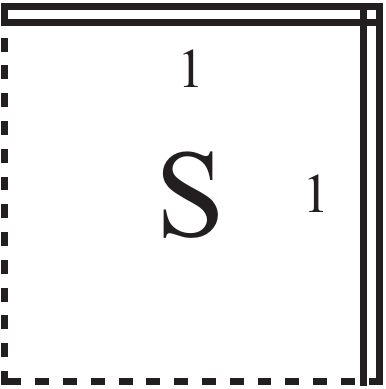}
\label{seedtile} \caption{The single seed tile type for
$\mathbf{T}$.}
\end{center}
\end{figure}

\begin{theorem}
\label{secondmaintheorem} $\mathbf{T}$ strictly self-assembles in
the directed TAS $\mathcal{T}_{\mathbf{T}} =
(T_{\mathbf{T}},\sigma_{\mathbf{T}},\tau)$.
\end{theorem}
\begin{proof}
This follows immediately from Lemma~\ref{techlemma5}.
\end{proof}

\subsubsection*{Acknowledgment}
We thank Dave Doty, Xiaoyang Gu, Satya Nandakumar, John Mayfield,
Matt Patitz, Aaron Sterling, and Kirk Sykora for useful discussions.

\bibliographystyle{amsplain}
\bibliography{main,dim,random,dimrelated,rbm}

\end{document}